\title{A Complete Diagrammatic Calculus for Conditional Gaussian Mixtures}
\author{Mateo Torres-Ruiz}{University College London, United Kingdom}{m.torresruiz@cs.ucl.ac.uk}{}{}
\author{Robin Piedeleu}{University College London, United Kingdom}{}{0000-0002-3945-2704}{}
\author{Alexandra Silva}{Cornell University, United States}{}{0000-0001-5014-9784}{}
\author{Fabio Zanasi}{University College London, United Kingdom}{}{0000-0001-6457-1345}{}
\authorrunning{M. Torres-Ruiz, R. Piedeleu, A. Silva, F. Zanasi} 
\keywords{String diagrams, Category theory, Mixture models, Probability theory} 
\tikzstyle{none}=[inner sep=0mm]
\tikzstyle{not}=[draw=blue, triangle, rotate=90, draw=black, fill=white, inner sep=0pt, minimum size=4.5pt]
\tikzstyle{cflip}=[circle, draw={rgb,255: red,128; green,128; blue,128}, fill={rgb,255: red,128; green,128; blue,128}, inner sep=0pt, minimum size=4pt]
\tikzstyle{or}=[fill=white, draw=black, or gate, scale=.4, anchor=center]
\tikzstyle{not}=[fill=white, draw=black, not gate, scale=.35]
\tikzstyle{if}=[trapezium, draw=black, fill=white, minimum width=6pt, minimum height=8pt, rotate=270]
\tikzstyle{plain}=[inner sep=0pt]
\tikzstyle{red}=[circle, draw=realtype, fill=realtype, inner sep=0pt, minimum size=4pt, scale=0.75]
\tikzstyle{blue}=[circle, draw=booltype, fill=booltype, inner sep=0pt, minimum size=4pt, scale=0.75]
\tikzstyle{black-faded}=[circle, draw=light-gray, fill=light-gray, inner sep=0pt, minimum size=4pt]
\tikzstyle{white}=[circle, draw=realtype, fill=white, inner sep=0pt, minimum size=4.5pt]
\tikzstyle{white-faded}=[circle, draw=light-gray, fill=white, inner sep=0pt, minimum size=4.5pt]
\tikzstyle{reg}=[draw=red, fill=red, rounded rectangle, rounded rectangle left arc=none, minimum height=1.2em, minimum width=1.4em, text=white, node font={\scriptsize}]
\tikzstyle{effect}=[draw, fill=white, rectangle, minimum height=1.2em, minimum width=1em]
\tikzstyle{coreg}=[draw, fill=white, rounded rectangle, rounded rectangle right arc=none, minimum height=1.2em, minimum width=1.4em, node font={\scriptsize}]
\tikzstyle{state}=[draw, fill=white, rectangle, minimum height=1.2em, minimum width=1em]
\tikzstyle{big box}=[draw, fill=white, rectangle, rounded corners, minimum height=3em, minimum width=2.4em, node font={\scriptsize}]
\tikzstyle{small box}=[draw, fill=white, rectangle, rounded corners, minimum height=1.2em, minimum width=1.4em, node font={\scriptsize}]
\tikzstyle{nn}=[fill=realtype, draw=realtype, regular polygon, regular polygon sides=3, minimum width=0.3cm, shape border rotate=90, inner sep=0pt, scale=0.75]
\tikzstyle{thick}=[-, line width=1pt]
\tikzstyle{dashed edge}=[-, dashed]
\tikzstyle{bool edge}=[-, draw=booltype]
\tikzstyle{thick bool edge}=[-, draw=booltype, line width=1pt]
\tikzstyle{real edge}=[-, draw=realtype]
\tikzstyle{thick real edge}=[-, draw=realtype, line width=1pt]
\def\distto{\mathrel{\mkern3mu  \vcenter{\hbox{$\scriptscriptstyle+$}}%
\mkern-12mu{\to}}}
\newcommand{\Normal}[3]{\mathcal{N}_{#1}\left({#2},{#3}\right)}
\newcommand{\R}{\mathbb{R}}
\newcommand{\Realobj}{\mathsf{R}}
\newcommand{\N}{\mathbb{N}}
\newcommand{\B}{\mathbb{B}}
\newcommand{\Boolobj}{\mathsf{B}}
\newcommand{\Borel}[1]{\mathcal{B}\left({#1}\right)}
\newcommand*\diff{\mathop{}\!\mathrm{d}}
\newcommand\restr[2]{{
\left.\kern-\nulldelimiterspace
#1
\vphantom{\big|}
\right|_{#2}
}}
\newcommand\textax[1]{\textbf{\textsf{#1}}}
\newcommand\textsub[2]{$#1|_{#2}$}
\newcommand{\Bool}{\mathbb{B}}
\newcommand{\Real}{\mathbb{R}}
\newcommand{\poi}{\,;\,}
\newcommand{\id}{\mathsf{id}}
\newcommand{\Boolid}{
\begin{tikzpicture}
\begin{pgfonlayer}{nodelayer}
\node [style=none] (28) at (0.5, 0) {};
\node [style=none] (33) at (-1.25, 0) {};
\end{pgfonlayer}
\begin{pgfonlayer}{edgelayer}
\draw [style=bool edge] (33.center) to (28.center);
\end{pgfonlayer}
\end{tikzpicture}}
\newcommand{\Realid}{
\begin{tikzpicture}
\begin{pgfonlayer}{nodelayer}
\node [style=none] (28) at (0.5, 0) {};
\node [style=none] (33) at (-1.25, 0) {};
\end{pgfonlayer}
\begin{pgfonlayer}{edgelayer}
\draw [style=real edge] (33.center) to (28.center);
\end{pgfonlayer}
\end{tikzpicture}
}
\newcommand{\idone}{
\tikz \draw (0, 0) -- (2, 0);
}
\newcommand{\idonebool}{
\tikz \draw [style=bool edge] (0, 0) -- (2, 0);
}
\newcommand\idzero{
\InputIfFileExists{empty-diag.tikz}{}{\input{./tikz/empty-diag.tikz}}
} 
\newcommand{\sym}{
\tikz {
\draw (0,  0.4) .. controls (0.5,  0.4) and (0.5, -0.4) .. (1, -0.4);
\draw (0, -0.4) .. controls (0.5, -0.4) and (0.5,  0.4) .. (1,  0.4);
}
}
\newcommand{\symp}[2]{
\tikz {
\draw[#1] (0,  0.4) .. controls (0.5,  0.4) and (0.5, -0.4) .. (1, -0.4);
\draw[#2] (0, -0.4) .. controls (0.5, -0.4) and (0.5,  0.4) .. (1,  0.4);
}
}
\newcommand{\circuit}[3]{
\begin{tikzpicture}
\begin{pgfonlayer}{background}
\node [style=circuit] (0) at (0, 0) {$#1$};
\node [style=none] at (1.5, 0.5) {\scriptsize $#3$};
\node [style=none] at (-1.5, 0.5) {\scriptsize $#2$};
\end{pgfonlayer}
\begin{pgfonlayer}{nodelayer}
\node [style=none] (1) at (1.75, 0.05) {};
\node [style=none] (2) at (-1.75, 0.05) {};
\node [style=none] (3) at (1.5, 0) {};
\node [style=none] (4) at (-1.5, 0) {};
\end{pgfonlayer}
\begin{pgfonlayer}{edgelayer}
\draw  (2) to (0);
\draw  (0) to (1);
\end{pgfonlayer}
\end{tikzpicture}
}
\newcommand{\mixcircuit}[5]{
\begin{tikzpicture}
\begin{pgfonlayer}{background}
\node [style=none] (1) at (0,0.4) {};
\node [style=none] (2) at (-1.75,0.4) {};
\node [style=none] (3) at (1.75, 0.4) {};
\node [style=none] (4) at (0,-0.4) {};
\node [style=none] (5) at (-1.75,-0.4) {};
\node [style=none] (6) at (1.75, -0.4) {};
\end{pgfonlayer}
\begin{pgfonlayer}{nodelayer}
\node [style=circuit] (0) at (0, 0) {$#1$};
\node [style=none] at (2.05, 0.45) {\scriptsize $#3$};
\node [style=none] at (-2.05, 0.45) {\scriptsize $#2$};
\node [style=none] at (2.1, -0.45) {\scriptsize $#5$};
\node [style=none] at (-2.1, -0.45) {\scriptsize $#4$};
\end{pgfonlayer}
\begin{pgfonlayer}{edgelayer}
\draw [style=thick bool edge] (2) to (1);
\draw [style=thick bool edge] (1) to (3);
\draw [style=thick real edge] (5) to (4);
\draw [style=thick real edge] (4) to (6);
\end{pgfonlayer}
\end{tikzpicture}
}
\newcommand{\thickboolcircuit}[3]{
\begin{tikzpicture}
\begin{pgfonlayer}{background}
\node [style=bool circuit] (0) at (0, 0) {$#1$};
\node [style=none] at (1.5, 0.5) {\scriptsize $#3$};
\node [style=none] at (-1.5, 0.5) {\scriptsize $#2$};
\end{pgfonlayer}
\begin{pgfonlayer}{nodelayer}
\node [style=none] (1) at (1.75, 0.05) {};
\node [style=none] (2) at (-1.75, 0.05) {};
\node [style=none] (3) at (1.5, 0) {};
\node [style=none] (4) at (-1.5, 0) {};
\end{pgfonlayer}
\begin{pgfonlayer}{edgelayer}
\draw [style=-, line width=1pt] (2) to (0);
\draw [style=-, line width=1pt] (0) to (1);
\end{pgfonlayer}
\end{tikzpicture}
}
\newcommand{\ifmix}{
\begin{tikzpicture}
\begin{pgfonlayer}{nodelayer}
\node [style=none] (91) at (-1.5, 0.5) {};
\node [style=if-mix] (93) at (0, 0) {};
\node [style=none] (96) at (-1.5, 0) {};
\node [style=none] (104) at (-0.25, 0) {};
\node [style=none] (105) at (-0.25, 0.5) {};
\node [style=none] (110) at (1.5, 0) {};
\node [style=none] (111) at (-1.5, -0.5) {};
\node [style=none] (112) at (-0.25, -0.5) {};
\end{pgfonlayer}
\begin{pgfonlayer}{edgelayer}
\draw [style=bool edge] (91.center) to (105.center);
\draw [style=real edge] (96.center) to (104.center);
\draw [style=real edge] (111.center) to (112.center);
\draw [style=real edge] (93) to (110.center);
\end{pgfonlayer}
\end{tikzpicture}
}
\newcommand{\MixCirc}{\mathsf{MixCirc}}
\newcommand{\MixGauss}{\mathsf{MixGauss}}
\newcommand{\CGMTh}{\mathsf{CGM}}
\newcommand{\CGMeq}{\myeq{CGM}}
\newcommand{\Stoch}{\mathsf{Stoch}}
\newcommand{\sem}[1]{\left\llbracket{#1}\right\rrbracket}
\newcommand{\Gauss}{\mathsf{Gauss}}
\newcommand{\GaussCirc}{\mathsf{GaussCirc}}
\newcommand{\CausCirc}{\mathsf{CausCirc}}
\newcommand{\FinStoch}{\mathsf{FinStoch}}
\newcommand{\CGtype}[4]{\Boolobj^{#1}\Realobj^{#2}\to \Boolobj^{#3}\Realobj^{#4}}
\newcommand{\CGterm}[5]{#1\colon\CGtype{#2}{#3}{#4}{#5}}
\newcommand{\GaussCircterm}[3]{#1\colon \Realobj^{#2}\to \Realobj^{#3}}
\newcommand{\Dirac}[2]{\delta\left({#1}\,\middle\vert\,{#2}\right)}
\definecolor{comb}{HTML}{ddf1fb}
\definecolor{seq}{HTML}{6ae4a3}
\definecolor{neutral}{HTML}{f8f8f2}
\definecolor{unit}{HTML}{b9b9b9}
\newcommand{\dotsize}{3pt}
\tikzset{x=0.9em, y=2ex, baseline=-0.5ex}
\tikzset{ihbase/.style={inner sep=0,circle,draw,fill=lightgray,minimum size={\dotsize}, node contents={}}}
\tikzset{ihblack/.style={ihbase,fill=black}}
\tikzset{ihblue/.style={ihbase,fill=blue}}
\tikzset{ihwhite/.style={ihbase,fill=white}}
\tikzset{mat/.style={draw,fill=white,rectangle,node font=\scriptsize}}
\tikzset{ha/.style={mat,rounded rectangle,rounded rectangle left arc=none}}
\tikzset{haop/.style={mat,rounded rectangle,rounded rectangle right arc=none}}
\tikzset{blackha/.style={mat,rounded rectangle,rounded rectangle left arc=none,font=\color{white},fill=black}}
\tikzset{blackhaop/.style={mat,rounded rectangle,rounded rectangle right arc=none,font=\color{white},fill=black}}
\tikzset{anti/.style={inner sep=0,isosceles triangle,fill=black,draw=black, minimum width=0.75em, node contents={}}}
\tikzset{antiop/.style={anti,shape border rotate=180}}
\tikzset{antisq/.style={inner sep=0,rectangle,fill=black, minimum height=1em, minimum width=0.6em, node contents={}}}
\tikzset{count/.style={above,inner ysep=0.15em,font=\scriptsize}}
\tikzset{axiom/.style={above,font=\small}}
\tikzset{dir/.style={-Latex}}
\tikzset{st/.style={decoration={markings,
mark={at position 0.5 with {\draw (0, 2pt) to (0, -2pt);}}},
postaction=decorate}}
\tikzstyle{none}=[inner sep=0mm]
\tikzstyle{flip}=[draw={booltype}, fill={booltype},
\tikzstyle{cflip}=[circle, draw={rgb,255: red,128; green,128; blue,128}, fill={rgb,255: red,128; green,128; blue,128}, inner sep=0pt, minimum size=4pt]
\tikzstyle{scalar}=[draw={red}, fill={red},
\tikzstyle{and}=[fill={booltype}, draw={booltype}, shape=and gate US, scale=.4]
\tikzstyle{or}=[fill=white, draw={booltype}, or gate US, scale=.4, anchor=center]
\tikzstyle{not}=[fill={booltype}, draw={booltype}, not gate US, scale=.4]
\tikzstyle{xor}=[fill={booltype}, draw=black, xor gate US, scale=.6]
\tikzstyle{if-mix}=[trapezium, draw={booltype}, fill={mixedtype}, minimum width=6pt, minimum height=8pt, rotate=270]
\tikzstyle{plain}=[inner sep=0pt]
\tikzstyle{black}=[circle, draw=black, fill=black, inner sep=0pt, minimum size={\dotsize}]
\tikzstyle{black-faded}=[circle, draw=light-gray, fill=light-gray, inner sep=0pt, minimum size=4pt]
\tikzstyle{white}=[circle, draw=black, fill=white, inner sep=0pt, minimum size={\dotsize}]
\tikzstyle{white-faded}=[circle, draw=light-gray, fill=white, inner sep=0pt, minimum size=4.5pt]
\tikzstyle{reg}=[draw={realtype}, fill={realtype}, rounded rectangle, rounded rectangle left arc=none, minimum height=1.2em, minimum width=1.4em, text=white, node font={\scriptsize},scale=.8]
\tikzstyle{effect}=[draw, fill=white, rounded rectangle, rounded rectangle left arc=none, minimum height=1.2em, minimum width=1.4em]
\tikzstyle{coreg}=[draw, fill=white, rounded rectangle, rounded rectangle right arc=none, minimum height=1.2em, minimum width=1.4em, node font={\scriptsize}]
\tikzstyle{box}=[shape=rectangle, text height=1.5ex, text depth=0.25ex, yshift=0.2mm, fill=white, draw=black, minimum height=3mm, minimum width=5mm, font={\small}]
\tikzstyle{basic box}=[shape=rectangle, text height=1.5ex, text depth=0.25ex, yshift=0.2mm, fill={probcircuitcolor}, draw={probcircuitcolor}, minimum height=3mm, minimum width=5mm, text={probcircuitcolorop}, inner sep=4pt]
\tikzstyle{small box}=[draw, fill={probcircuitcolor}, rectangle, minimum height=1.2em, minimum width=1.4em, node font={\scriptsize}, text={probcircuitcolorop}]
\definecolor{realtype}{rgb}{0,0,0}
\definecolor{booltype}{rgb}{0.5,0.5,0.5}
\definecolor{mixedtype}{rgb}{1,1,1}
\definecolor{probcircuitcolor}{rgb}{0.85,0.85,0.85}
\definecolor{probcircuitcolorop}{rgb}{0,0,0}
\definecolor{boolcircuitcolor}{rgb}{1.0,1.0,1.0}
\definecolor{boolcircuitcolorop}{rgb}{0.0,0.0,0.0}
\definecolor{mixedcircuitcolor}{rgb}{0.6,0.6,0.6}
\tikzstyle{circuit}=[shape=rectangle, text height=1.5ex, text depth=0.25ex, yshift=0.2mm, fill=white, draw=black, minimum height=3mm, minimum width=5mm, text=black, inner sep=4pt]
\tikzstyle{prob circuit}=[shape=rectangle, text height=1.5ex, text depth=0.25ex, yshift=0.2mm, fill={realtype}, draw={realtype}, minimum height=3mm, minimum width=5mm, text=white, inner sep=4pt, font=\scriptsize]
\tikzstyle{small prob circuit}=[draw={realtype}, fill={realtype}, rectangle, minimum height=1.2em, minimum width=1.4em, node font={\scriptsize}, text=white]
\tikzstyle{tall prob circuit}=[draw, shape=rectangle, text height=1.5ex, text depth=0.25ex, yshift=0.2mm, fill={realtype}, minimum height=8mm, minimum width=5mm, text=white, font={\scriptsize}]
\tikzstyle{bool circuit}=[shape=rectangle, text height=1.5ex, text depth=0.25ex, yshift=0.2mm, fill={booltype}, draw={booltype}, minimum height=3mm, minimum width=5mm, font={\scriptsize}, text=white]
\tikzstyle{small bool circuit}=[draw={booltype}, fill={booltype}, rectangle, minimum height=1.2em, minimum width=1.4em, node font={\scriptsize}, text=white]
\tikzstyle{tall bool circuit}=[rectangle, draw={booltype}, fill={booltype}, minimum height=8mm, minimum width=5mm ,text=white, font={\scriptsize}]
\tikzstyle{taller bool circuit}=[rectangle, draw={booltype}, fill={booltype}, minimum height=16mm, minimum width=6mm ,text=white, font={\scriptsize}]
\tikzstyle{small mixed circuit}=[draw={booltype}, fill={mixedtype}, rectangle, minimum height=1.2em, minimum width=1.4em, text=black, font={\scriptsize}]
\tikzstyle{tall mixed circuit}=[rectangle, draw={booltype}, fill={mixedtype}, minimum height=8mm, minimum width=5mm ,text=black, font={\scriptsize}]
\tikzstyle{taller mixed circuit}=[rectangle, draw={booltype}, fill={mixedtype}, minimum height=12mm, minimum width=6mm ,text=black, font={\scriptsize}]
\tikzstyle{real type} = [black]
\tikzstyle{bool type} = [mixedcircuitcolor]
\tikzstyle{bold real type} = [black, line width=.75pt]
\tikzstyle{bold bool type} = [mixedcircuitcolor, line width=.85pt]
\newcommand{\genericcounit}[2]{
\tikz \draw[#2] (0, 0) -- (1, 0) node[ihbase,#1, solid];
}
\newcommand{\genericcomult}[2]{
\begin{tikzpicture}
\node at (1, 0) [ihbase,solid,name=copy,#1];
\draw[#2] (copy) .. controls (1.25, 0.5) .. (2, 0.5);
\draw[#2] (0, 0) -- (copy);
\draw[#2] (copy) .. controls (1.25, -0.5) .. (2, -0.5);
\end{tikzpicture}
}
\newcommand{\genericunit}[2]{
\tikz \draw[#2] (0, 0) node[ihbase,fill=#1,#2, solid] -- (1, 0);
}
\newcommand{\genericmult}[2]{
\tikz {
\node at (1, 0) [ihbase,solid,fill=#1,name=copy,#2];
\draw[#2] (0,  0.5) .. controls (0.75,  0.5) .. (copy);
\draw[#2] (0, -0.5) .. controls (0.75, -0.5) .. (copy);
\draw[#2] (copy) -- (2, 0);
}
}
\newcommand{\BoolCounit}[1][]{\genericcounit{booltype}{style=#1,draw={booltype}}}
\newcommand{\BoolComult}[1][]{\genericcomult{booltype}{style=#1,draw={booltype}}}
\newcommand{\RealCounit}[1][]{\genericcounit{realtype}{style=#1,draw={realtype}}}
\newcommand{\RealComult}[1][]{\genericcomult{realtype}{style=#1,draw={realtype}}}
\newcommand{\RealWUnit}[1][]{\genericunit{white}{style=#1,draw={realtype}}}
\newcommand{\RealMult}[1][]{\genericmult{white}{style=#1,draw={realtype}}}
\newcommand{\Scalar}[1]{
\begin{tikzpicture}
\begin{pgfonlayer}{nodelayer}
\node [style=reg] (93) at (0, 0) {$#1$};
\node [style=none] (96) at (-1.25, 0) {};
\node [style=none] (103) at (1.25, 0) {};
\end{pgfonlayer}
\begin{pgfonlayer}{edgelayer}
\draw [color={realtype}] (96.center) to (93);
\draw [color={realtype}] (93) to (103.center);
\end{pgfonlayer}
\end{tikzpicture}   
}
\newcommand{\Flip}[1]{
\begin{tikzpicture}
\begin{pgfonlayer}{nodelayer}
\node [style=flip] (93) at (0, 0) {$#1$};
\node [style=none] (103) at (1.25, 0) {};
\end{pgfonlayer}
\begin{pgfonlayer}{edgelayer}
\draw [color={booltype}] (93) to (103.center);
\end{pgfonlayer}
\end{tikzpicture}
}
\newcommand{\Foot}{
\begin{tikzpicture}
\begin{pgfonlayer}{nodelayer}
\node [style=none] (0) at (0, 0) {};
\node [style=none] (1) at (1.15, 0) {};
\node [style=none] (2) at (0, 0.25) {};
\node [style=none] (3) at (0, -0.25) {};
\end{pgfonlayer}
\begin{pgfonlayer}{edgelayer}
\draw [color={realtype}] (0.center) to (1.center);
\draw [color={realtype}] (2.center) to (3.center);
\end{pgfonlayer}
\end{tikzpicture}
}
\newcommand{\Andgate}{
\begin{tikzpicture}
\begin{pgfonlayer}{nodelayer}
\node [style=none] (91) at (-1.5, 0.5) {};
\node [style=and gate US, draw={booltype}, fill={booltype}, scale=.4] (93) at (0, 0) {};
\node [style=none] (96) at (-1.5, -0.5) {};
\node [style=none] (103) at (1.25, 0) {};
\end{pgfonlayer}
\begin{pgfonlayer}{edgelayer}
\draw [color={booltype}, in=0, out=150] (93) to (91.center);
\draw [color={booltype}, in=-150, out=0, looseness=0.75] (96.center) to (93);
\draw [color={booltype}] (93) to (103.center);
\end{pgfonlayer}
\end{tikzpicture}
}
\newcommand{\Notgate}[1][]{
\begin{tikzpicture}
\begin{pgfonlayer}{nodelayer}
\node [style=not gate US, draw={booltype}, fill={booltype}, scale=.4] (93) at (0, 0) {};
\node [style=none] (96) at (-1.25, 0) {};
\node [style=none] (103) at (1.25, 0) {};
\end{pgfonlayer}
\begin{pgfonlayer}{edgelayer}
\draw [style=#1, color={booltype}] (96.center) to (93);
\draw [style=#1, color={booltype}] (93) to (103.center);
\end{pgfonlayer}
\end{tikzpicture}
}
\newcommand{\NN}{
\genericunit{{realtype}, regular polygon, regular polygon sides=3, minimum width=0.175cm, shape border rotate=90, inner sep=0pt, draw={realtype}, fill={realtype}}{draw={realtype}}}
\newcommand{\myeq}[1]{\mathrel{\overset{\makebox[0pt]{\mbox{\normalfont\tiny\sffamily #1}}}{=}}}
\newcommand{\eqSMC}{\equiv}
\begin{document}

\maketitle

\begin{abstract}
  We extend the synthetic theories of discrete and Gaussian categorical probability by introducing a diagrammatic calculus for reasoning about hybrid probabilistic models in which continuous random variables, conditioned on discrete ones, follow a multivariate Gaussian distribution. This setting includes important classes of models such as Gaussian mixture models, where each Gaussian component is selected according to a discrete variable. We develop a string diagrammatic syntax for expressing and combining these models, give it a compositional semantics, and equip it with a sound and complete equational theory that characterises when two models represent the same distribution.
\end{abstract}

\section{Introduction}
\label{sec:introduction}

Mixture models, particularly Gaussian mixture models (GMMs), are among the most widely used tools in probabilistic modelling. They serve as flexible approximations for complex multimodal distributions and appear in diverse applications, ranging from clustering and density estimation to generative modelling and signal processing~\cite{bishop2006pattern,mclachlan2000fmm,sung2004gaussian}. Moreover, it is well-known that any continuous distribution can be arbitrarily well approximated by a finite mixture of Gaussians~\cite{Goodfellow-et-al-2016}.   Despite their ubiquity and expressiveness, reasoning formally about mixtures, especially about when two mixture models with different structures define the same distribution, remains a subtle and largely unexplored problem. 

This paper develops a compositional, algebraic theory for models that include GMMs, allowing one to reason about their structure and equivalence. Our motivation stems from the observation that mixture models are not just statistical tools but structured expressions that admit algebraic transformations. For instance, a mixture with nested components can often be flattened or reparametrised without changing its semantics, and different symbolic expressions may describe the same underlying distribution. Yet these equivalences are rarely made explicit or given formal status.

The first step to address this gap is to define a formal syntax for a class of hybrid models that include discrete and Gaussian random variables. Our approach is grounded in the categorical framework of symmetric monoidal categories (SMCs)~\cite{maclane1971} and their associated language of \emph{string diagrams}~\cite{Selinger10}. In recent years, SMCs have been identified as a convenient algebraic setting in which to study how probabilistic models can be composed both in sequence and in parallel~\cite{Fritz_SyntheticApproach}. In this setting, string diagrams provide an intuitive and rigorous graphical syntax for reasoning about such models. Notably, structural features such as conditional independence, marginalisation, and factorisation can be expressed and manipulated directly in the diagrammatic language~\cite{Fritz_SyntheticApproach,fritz2023d,jacobs2018logicalessentialsbayesianreasoning}, making it particularly intuitive for human use. Unlike conventional graphical models, where diagrams are informal aids to understanding, our diagrams \emph{are} the syntax---they provide a finer-grained perspective on the internal structure of the distributions they represent and come with a formal, compositional semantics, given through a symmetric monoidal functor into a suitable category of probabilistic maps, assigning to each diagram a well-defined stochastic kernel that captures its intended probabilistic meaning. As a result, we can reason not only about the overall distribution but also about how probabilistic dependencies compose and combine in a modular and compositional manner.

Hybrid models that incorporate discrete and continuous random variables have been extensively studied in the context of probabilistic graphical models. A paradigmatic example is the family of \emph{Conditional Gaussian} distributions (or, simply, CG-distributions), a family of models in which the conditional distribution of continuous variables \emph{given} the discrete ones follows a Gaussian distribution~\cite{Lauritzen84,Lauritzen89}.  CG-distributions were first studied in the context of generalised graphical linear models,  associated to \emph{conditional linear Gaussian} networks \cite{Koller09}---Bayesian networks that combine discrete and continuous nodes, but in which discrete nodes cannot have continuous parents. Note also that GMMs are a specific case of conditional linear Gaussian networks, in which all of discrete variables are latent. 

The main result of this work is an algebraic axiomatisation of diagrammatic equivalence for this class of hybrid models: a \emph{sound and complete} set of equational rules that captures precisely when two diagrams represent the same distribution. That is, whenever two diagrams encode the same distribution, there is an equational derivation that transforms one into the other. In short, our contribution can be understood as providing a formal syntax, semantics, and complete equational theory for reasoning about CG-distributions in a modular syntax that refines that of conditional linear Gaussian networks. The benefit is two-fold. First, it offers a formal language for verifying that two models define the same distribution without having to compute the semantics explicitly, paving the way to implementation and automation. Second, it supports compositional reasoning: by breaking complex mixture models into parts, we can apply \emph{local} equational rules, much like in standard algebra. 

A key technical challenge in developing a diagrammatic framework for hybrid models is how to combine two fundamentally different kinds of structure: one that captures discrete (specifically, Boolean) probabilistic variables, and another tailored towards continuous Gaussian variables with conditional dependencies given by affine transformations. Addressing this challenges required us to 1) design a formal syntax in which the two types of variables could interact meaningfully; 2) develop an appropriate \emph{compositional} semantics capable of accomodating both discrete and continuous variables; and 3) identify a sound and complete set of algebraic rules that characterises all the non-trivial interactions between the two fragments. Our solution draws on and generalises two earlier diagrammatic axiomatisations, one for discrete~\cite{probcirc}, and another for Gaussian models~\cite{gqa}, integrating the two into a coherent setting for reasoning about hybrid systems.

\noindent \textbf{Outline.} In Section~\ref{sec:background}, we recall the notation, terminology, and mathematical background from (categorical) probability theory, and introduce the semantic domain of interest. Section~\ref{sec:syntax} gives the formal syntax of the diagrams used throughout, motivates its use for both discrete and continuous variables, and Section~\ref{sec:semantics} equips it with a formal semantics capturing their behaviour. In Section~\ref{sec:equational-theory}, we present an equational theory sound for the chosen semantics. Section~\ref{sec:axiomatisation} shows the completeness of this theory via a high-level normalisation argument. Finally, Section~\ref{sec:conclusions} concludes with a discussion of future work.

\section{Background on (Categorical) Probability Theory}
\label{sec:background}

We begin by reviewing the terminology and background from probability theory relevant to our development. As our main objects of study are \emph{mixtures of Gaussians}, which combine discrete and continuous components, we draw on some basic concepts of measure theory. We also cover some of the key ideas of categorical probability theory, building on basic notions of category theory, including that of a SMC.  Following this review of standard material, we present our first original contribution: the definition of a (symmetric monoidal) category in which discrete and Gaussian distributions coexist and can be composed in a coherent, unified framework. A passing familiarity with measurable spaces and functions is assumed. 

\subsection{Stochastic kernels}
\label{sec:stoch}

In what follows, we define the category $\Stoch$ of measurable spaces and stochastic kernels, following the presentation of Fritz~\cite[Section 4]{Fritz_SyntheticApproach}. Given two measurable spaces $(X,\Sigma_X)$ and $(Y,\Sigma_Y)$, a \emph{stochastic kernel} $f\colon (X,\Sigma_X)\distto(Y,\Sigma_Y)$ is a map
$$
f\colon \Sigma_Y\times X\to [0,1], \qquad (U,x)\mapsto f(U|x)
$$
such that $f(\cdot | x)\colon \Sigma_Y\to [0,1]$ is a probability distribution for every $x\in X$ and $f(U|\cdot)\colon X\to [0,1]$ is a measurable function for every $U\in\Sigma_Y$. The notation $f(U|x)$ indicates that we can think of $f$ as a probability distribution over $Y$, conditional on $X$. If $g\colon(Y,\Sigma_Y)\distto(Z,\Sigma_Z)$ is another stochastic kernel, $g\circ f$ is given by the Chapman-Kolmogorov equation:
$$
(g\circ f )(V|x) = \int_{y\in Y} g(V|y)f(dy|x)
 $$
 This defines another stochastic kernel, making measurable spaces and stochastic kernels into a category, $\Stoch$, with identity morphisms given by Dirac deltas:
 $$
 \id_X(U|x) = \Dirac{U}{x} = \begin{cases}
 								1 & \text{if $x\in U$},\\
 								0 & \text{otherwise}.
						 \end{cases}
 $$
We call the full subcategory of $\Stoch$ spanned by finite sets $\FinStoch$. Notice that when $X$ and $Y$ are finite sets, any stochastic kernel $f\colon (X, \mathcal{P}(X))\distto (Y, \mathcal{P}(Y))$ can be thought of as a (column-)stochastic matrix, \emph{i.e.}, a matrix whose columns sum to one; the composition $g\circ f$ of two stochastic kernels between finite sets then boils down to standard matrix multiplication.

Given two measurable spaces $(X_1,\Sigma_1)$ and $(X_2,\Sigma_2)$, we can form the product space $(X_1\times X_2,\Sigma_1\otimes \Sigma_2)$, where $\Sigma_1\otimes \Sigma_2$ is the $\sigma$-algebra generated by the rectangle subsets, \emph{i.e.}, those of the form $U_1\times U_2$ for $U_1\in\Sigma_1$ and $U_2\in\Sigma_2$. Moreover, given two measures $\mu_1$ and $\mu_2$ over $(X_1,\Sigma_1)$ and $(X_2,\Sigma_2)$ respectively, we can form the product measure $\mu_1\otimes \mu_2$ uniquely defined by $(\mu_1\otimes \mu_2)(U_1\times U_2) = \mu_1(U_1)\mu_2(U_2)$. This construction extends to stochastic kernels, whose product we characterise as follows on rectangle subsets:
$$
f_1\otimes f_2 \colon (X_1\times X_2, \Sigma_1\otimes \Sigma_2)\distto (Y_1\times Y_2, \Omega_1\otimes \Omega_2), \quad (U_1\times U_2 \mid x_1,x_2) \mapsto f_1(U_1|x_1)f_2(U_2|x_2)
$$
The unit for this monoidal structure is the singleton set $1= \{\bullet\}$ equipped with its powerset as $\sigma$-algebra. This equips $\Stoch$ with the structure of a \emph{symmetric monoidal category} $(\Stoch, \otimes, 1, \sigma)$, with symmetry $\sigma_{X,Y} \colon (X\times Y, \Sigma_X\otimes \Sigma_Y)\distto (Y\times X, \Sigma_Y\otimes \Sigma_X)$ defined by $\sigma(V\times U|x,y) = \delta_x(U)\otimes \delta_y(V)$. $\FinStoch$ is then a symmetric monoidal subcategory of $\Stoch$.

Beyond finite sets, in this work, we will only consider one other class of measurable spaces: the space $\R^n$ equipped with its Borel $\sigma$-algebra $\Borel{\R^n}$. Note that $\Borel{\R^m}\otimes\Borel{\R^n} \cong \Borel{\R^{m+n}}$. We will use this isomorphism implicitly whenever it is convenient. 

\subsection{Gaussian Probability}

We review here the basics of Gaussian distributions and their transformation under affine maps \cite{Fritz_SyntheticApproach,lauritzenjensen}. For a more detailed account, we refer the reader to the extensive literature on the subject~\cite{Billingsley,Rao_LinearStatisticalInference}. Gaussian distributions---aka \emph{normal distributions}---are one of the most fundamental classes of continuous probability distributions for real-valued random variables.

\textbf{Univariate Gaussians.} 
A random variable $\textbf{X}$ is \emph{Gaussian} with mean $\mu\in\mathbb{R}$ and variance $\sigma^2$ with $\sigma\in[0,\infty)$ if it has a probability density function given by
    $f(x\mid\mu,\sigma^2) = \big\{2\pi\sigma^2 \big\}^{-1/2}\exp\Big\{-\frac{(x-\mu)^2}{2\sigma^2} \Big\}$
  for $x\in\mathbb{R}$ with respect to the Lebesgue measure. This is typically denoted by $\textbf{X}\sim\mathcal{N}(\mu,\sigma^2)$. 
  
  \textbf{Multivariate Gaussians.} Normal distributions generalise naturally to high dimensional spaces by pushing forward products of independent \emph{standard normal distributions}, $\mathcal{N}(0,1)$ under some affine transformation.

  \begin{definition}[Multivariate Gaussian distribution]
    \label{def:mgaussian}
    A $k$-dimensional random vector $\textbf{X}$ follows a \emph{multivariate Gaussian distribution} (\emph{$k$-variate}) if every linear combination of its components is normally distributed. Equivalently, $\textbf{X}$ can be expressed as $\textbf{X}=A\cdot \textbf{Z} + \mu$, where $A\in\mathbb{R}^{k\times n}$ has rank $n$, $\mu\in\mathbb{R}^k$ and the random vector $\textbf{Z}$ has components $Z_1,\dots,Z_n\sim\mathcal{N}(0,1)$. We call $\mu$ the mean vector and $\Sigma := AA^T$ the \emph{covariance matrix} of  $\textbf{X}$, and write $\textbf{X}\sim\Normal{k}{\mu}{\Sigma}$.
  \end{definition}

   Similar to the univariate case, a multivariate Gaussian $\textbf{X}\sim\mathcal{N}_k(\mu,\Sigma)$ is fully characterised by its mean vector $\mu\in\mathbb{R}^k$ and its covariance matrix, $\Sigma\in\mathbb{R}^{k\times k}$, representing the expectation and the pairwise covariances of the variables. 
   Note that $\Sigma$ is necessarily \emph{positive semi-definite} and can be factored as $\Sigma=LL^T$ for some lower triangular matrix $L$ (a Cholesky decomposition of $\Sigma$). 
We also write $\Normal{k}{\mu}{\Sigma}(Y)$ for the probability of an event $Y\in\Borel{\R^k}$. Note that, when the covariance matrix is $0$, the corresponding measure is a Dirac at its mean, $\Dirac{\cdot}{\mu}$.

  \textbf{Gaussian maps.} We are interested in particularly simple stochastic kernels comprising an affine transformation with additive Gaussian noise. Formally, a Gaussian map is a stochastic kernel $f\colon \R^m\distto \R^n$ for which there exists an $n\times m$ matrix $A$, a vector $b\in\R^n$ and a positive semi-definite $n\times n$ matrix $\Sigma$ such that $f(\cdot | x)\sim \Normal{n}{Ax+b}{\Sigma}$\footnote{Here, we see $f(\cdot | x)$ as a random variable with value in $\R^n$.}; we write $f(x) = \Normal{n}{Ax+b}{\Sigma}$. The composite of two Gaussian maps in $\Stoch$ is still a Gaussian map: with $f$ as before, and $g\colon \R^n\distto \R^k$ given by $g(y) =  \Normal{k}{Cy+d}{\Theta}$, then $(g\circ f)(x) =  \Normal{k}{CAx +Cb+d}{C\Sigma C^T+\Theta}$. Thus, Gaussian maps form a subcategory of $\Stoch$~\cite{Fritz_SyntheticApproach}. In fact, they form a symmetric monoidal subcategory of $\Stoch$ with the monoidal product given as in Section~\ref{sec:stoch}: for two Gaussian maps $f_1(x) =  \Normal{n_1}{A_1x +b_1}{\Sigma_1}$ and $f_2(x) =  \Normal{n_2}{A_2x +b_2}{\Sigma_2}$, then for $x_1\in\R^{m_1}$
  and $x_2\in\R^{m_2}$, let
  $$
  (f_1\otimes f_2)\left(\begin{matrix}x_1\\x_2\end{matrix}\right) = \Normal{n_1+n_2}{ \left(\begin{matrix} A_1 & 0\\ 0 & A_2\end{matrix}\right)\left(\begin{matrix}x_1\\x_2\end{matrix}\right)+\left(\begin{matrix}b_1\\b_2\end{matrix}\right)}{\left(\begin{matrix}\Sigma_1 & 0\\0 & \Sigma_2\end{matrix}\right)}
  $$
  
  \subsection{Mixtures of Gaussians}
  \label{sec:gaussian-mixtures}
  
 In this work, we are concerned with mixtures: convex combinations of several component distributions. Intuitively, sampling from a mixture proceeds by first choosing a component based on the mixture weights, then sampling from the chosen component. 

  \begin{definition}[Mixture distribution]
    A \emph{mixture distribution} is a finite convex combination of distributions $\{\mu_i\}_{i=1,\dots,k}$ over the same measurable space $(X,\Omega)$, i.e., a measure $\mu$ defined by
    $
      \mu(E)=\sum_{i=1}^k p_i\cdot \mu_i(E),
    	\text{where } \sum_{i=1}^k p_i=1 \text{ and } p_i\geq 0, \text{ for a measurable set } E\in\Omega.
    $
  \end{definition}

  In what follows we will be concerned with measurable spaces $(\mathbb{R}^n,\mathcal{B}(\mathbb{R}^n))$ for some natural number $n$, equipped with their usual Borel $\sigma$-algebra $\mathcal{B}(\mathbb{R}^n)$ where the component distributions $\mu_i$ are all (multivariate) Gaussians---we call these \emph{mixtures of Gaussians}. Note that a mixture of Gaussians is different from taking the convex sum of several Gaussian random variables, which is still Gaussian. 
  In what follows, we will make extensive use of the following unique characterisation result. 
 \begin{restatable}{proposition}{uniqueparamsmultigaussians}
    \label{prop:mixture-uniqueness}
	Mixtures of Gaussians $\sum_i p_i\cdot \Normal{k}{\mu_i}{\Sigma_i}$ are uniquely determined by their parameters, \emph{i.e.}, the mixture weights $p_i$, component means $\mu_i$, and covariances $\Sigma_i$.
 \end{restatable}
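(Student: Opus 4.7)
This is a classical identifiability statement for Gaussian mixtures, going back to Teicher. The plan is to proceed via characteristic functions: recall that $\Normal{k}{\mu}{\Sigma}$ has characteristic function $t\mapsto \exp\bigl(i\, t^T\mu - \tfrac{1}{2}\, t^T \Sigma\, t\bigr)$, and that two Borel probability measures on $\R^k$ coincide if and only if their characteristic functions do, by L\'evy's uniqueness theorem. Supposing two mixtures $\sum_i p_i\,\Normal{k}{\mu_i}{\Sigma_i}$ and $\sum_j q_j\,\Normal{k}{\nu_j}{\Omega_j}$ agree as measures, subtracting the two characteristic functions and re-indexing reduces the claim to the following linear-independence statement: if $(\mu_i,\Sigma_i)_{i\in I}$ are finitely many pairwise distinct mean--covariance pairs and $\sum_i c_i \exp\bigl(i\, t^T\mu_i - \tfrac{1}{2}\, t^T\Sigma_i\, t\bigr)=0$ for all $t\in\R^k$, then each $c_i=0$.

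The next step is to reduce to the one-dimensional case by restricting to a slice $t = sv$ for a sufficiently generic $v\in\R^k$. Since a symmetric matrix is determined by its associated quadratic form, the set $\{v : v^T(\Sigma_i-\Sigma_j)v = 0\}$ is a proper algebraic (hence Lebesgue-null) subset of $\R^k$ whenever $\Sigma_i\neq\Sigma_j$; similarly $\{v : v^T(\mu_i-\mu_j)=0\}$ is null whenever $\mu_i\neq\mu_j$. Avoiding the finite union of these bad loci yields a $v$ for which $v^T\Sigma_i v$ separates distinct $\Sigma_i$'s, and $v^T\mu_i$ separates distinct $\mu_i$'s within each $\Sigma$-class. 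The problem then becomes the univariate statement: if $\sum_i c_i\exp\bigl(is\alpha_i - \tfrac{1}{2}\, s^2\beta_i\bigr)=0$ for all $s\in\R$, with the pairs $(\alpha_i,\beta_i)$ pairwise distinct, then each $c_i=0$.

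To settle this univariate linear independence, the idea is to extend both sides to entire functions of $s\in\mathbb{C}$ and substitute $s = iy$, turning the identity into $\sum_i c_i \exp\bigl(-y\alpha_i + \tfrac{1}{2}\, y^2\beta_i\bigr)=0$ for all $y\in\R$. The dominant term as $y\to +\infty$ is the one with the largest $\beta_i$, and ties in $\beta_i$ are broken by the smallest $\alpha_i$ (unique because the pairs are distinct). Dividing through by that dominant exponential and letting $y\to +\infty$ forces the coefficient of that term to vanish; removing it and iterating eliminates every $c_i$ in turn.

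The main obstacle is bookkeeping rather than conceptual depth: the iterated asymptotic elimination must be organised so the recursion is well-founded, and $v$ must be chosen to satisfy all finitely many genericity conditions simultaneously. The L\'evy uniqueness reduction and the genericity argument are standard moves; the real work lives in the asymptotic analysis, which could alternatively be bypassed by invoking Teicher's classical identifiability theorem for finite Gaussian mixtures directly.
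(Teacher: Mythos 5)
Your proof is correct and, at the level of overall strategy, runs parallel to the paper's: both reduce the multivariate claim to a univariate one by projecting along directions in $\R^k$ (the Cram\'er--Wold device), and both settle the univariate case by identifying and eliminating a dominant term asymptotically, then inducting on the number of components. The technical vehicle differs, though. The paper works directly with densities: it orders the components by increasing spread (variance, then mean) and shows that the ratio of the mixture density to the density of the most spread-out component tends to that component's weight as $x\to\infty$, which pins down the last component and permits induction. You instead pass to characteristic functions via L\'evy uniqueness and prove linear independence of $s\mapsto\exp\bigl(is\alpha-\tfrac{1}{2}s^2\beta\bigr)$ over distinct pairs $(\alpha,\beta)$ by continuing to $s=iy$ and running the analogous dominant-exponential elimination. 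Your route buys two things: it is explicit about the genericity required of the projection direction (the paper's ``since $y$ is arbitrary'' glosses over the need to avoid finitely many null sets of bad directions), and it covers degenerate components with singular $\Sigma_i$, where densities do not exist but characteristic functions do --- a case that genuinely occurs in this paper, since Dirac deltas $\Normal{n}{\mu}{0}$ pervade the semantics, and which the paper's density-ratio argument does not address. Both arguments are, as you note, instances of Teicher's classical identifiability theorem, which could be cited outright.
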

  
  \subsection{Conditional Gaussian mixtures}
  \label{sec:mixgauss}
  
  The first original contribution of our paper is the definition of a SMC that unifies $\FinStoch$, the category of stochastic kernels between finite sets (specifically, arrays of Booleans), and $\Gauss$, the category of Gaussian maps.  Morphisms of this category will be stochastic kernels of type 
  $\Bool^p\otimes \R^m\distto\Bool^q\otimes \R^n$ satisfying certain conditions. Given such a kernel $f\colon\Bool^p\otimes \R^m\distto\Bool^q\otimes \R^n$, $a\in\Bool^p$, $x\in\R^m$, and $b\in \Bool^q$, let $f(\cdot | b,a,x)$ be the distribution over $\R^n$ obtained by conditioning $f$ on the discrete output component being equal to $b$. The morphisms of our category are those for which the conditional $f(\cdot | b,a,x)$  is a mixture of Gaussians (if the probability of obtaining $b$ is zero, the conditional may be defined arbitrarily as a mixture of Gaussians). This implies that we can find some finite set $I$ and stochastic kernels $\varphi\colon\B^p\distto I\otimes\B^q$, $f_i\colon \Bool^p\otimes \Real^m\distto\Real^n$, such that $f_i(a,x) = A_i(a)x + \Normal{n}{\mu_i(a)}{\Sigma_i(a)}$, for all $i\in I, a\in\Bool^p,x\in\Real^m$ and, for all $B\subseteq\Bool^q,Y\in\Borel{\Real^n}$, we have
  \begin{equation}
    \begin{aligned}
      \label{eq:mixgauss-map}
      f(B\times Y\mid a,x) &= \sum_{b\in B}\sum_{i\in I}\varphi(i,b|a)\cdot f_i(Y|a,x)\\
      &= \sum_{b\in B}\sum_{i\in I}\varphi(i,b|a)\cdot\mathcal{N}_n\left(A_i(a) x+\mu_i(a),\Sigma_i(a)\right)(Y)\quad
    \end{aligned}
  \end{equation}

  \begin{definition}[Conditional Gaussian mixture]
    \label{def:mixgauss-map}
 	We call \emph{conditional Gaussian mixture} (CG-mixture) a stochastic kernel $f\colon \Bool^p\otimes \R^m\distto\Bool^q\otimes \R^n$ expressible as in~\eqref{eq:mixgauss-map}. 
  \end{definition}
The core idea is that CG-mixtures represent joint distributions over discrete and continuous variables, such that the conditional distribution over the continuous variables---given fixed values of the discrete variables in its domain---is a Gaussian mixture. The following is a consequence of Proposition~\ref{prop:mixture-uniqueness}.
\begin{restatable}{proposition}{uniqueparamscgmixtures}
	\label{prop:CGM-unique-params}
	CG-mixtures are uniquely determined by their parameters, \emph{i.e.}, in the notation above, the set $I$ and the stochastic kernels $\varphi$, $f_i(a,x)$.
\end{restatable}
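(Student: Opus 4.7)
The plan is to reduce the statement to the uniqueness of Gaussian mixtures (Proposition~\ref{prop:mixture-uniqueness}) by slicing the CG-mixture along each fixed pair of discrete input and output. Suppose two parameterisations $(I, \varphi, \{f_i\}_{i\in I})$ and $(J, \varphi', \{f'_j\}_{j\in J})$---with $f_i(a,x) = A_i(a)x + \Normal{n}{\mu_i(a)}{\Sigma_i(a)}$, and analogously for $f'_j$---realise the same CG-mixture $f$. Specialising equation~\eqref{eq:mixgauss-map} at $B = \{b\}$ for each $a\in\B^p$, $b\in\B^q$ and $x\in\R^m$ would then yield, for every $Y\in\Borel{\R^n}$, the identity
\begin{equation*}
\sum_{i\in I}\varphi(i,b\mid a)\,\Normal{n}{A_i(a)x+\mu_i(a)}{\Sigma_i(a)}(Y) = \sum_{j\in J}\varphi'(j,b\mid a)\,\Normal{n}{A'_j(a)x+\mu'_j(a)}{\Sigma'_j(a)}(Y)
\end{equation*}
of sub-probability measures on $\R^n$.

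First, taking $Y = \R^n$ extracts the common total mass $p(b\mid a) := \sum_i\varphi(i,b\mid a) = \sum_j\varphi'(j,b\mid a)$, which in particular identifies the discrete marginal $\B^p\distto\B^q$. When $p(b\mid a)>0$, dividing both sides by this scalar produces an equality of honest Gaussian mixtures to which Proposition~\ref{prop:mixture-uniqueness} applies, giving a bijection (after merging identical components) between the weight-parameter triples on each side; when $p(b\mid a)=0$, both sides vanish and the corresponding weights are forced to be zero.

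Second, I would leverage the $x$-dependence of the mean to recover each individual parameter. The covariances $\Sigma_i(a)$ do not involve $x$ and can therefore be read off directly from the matching. The mean $A_i(a)x + \mu_i(a)$ is affine in $x$: evaluating at $x = 0$ recovers $\mu_i(a)$, and then evaluating at the standard basis vectors of $\R^m$ recovers each column of $A_i(a)$. Together with the weights $\varphi(i,b\mid a)$, this pins down the whole parameter set, so the two parameterisations coincide up to a relabelling of the index set $I$.

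The main obstacle is ensuring that the component-matching supplied by Proposition~\ref{prop:mixture-uniqueness} at one value of $x$ is consistent with the matching at another, so that components may be tracked coherently across all of $\R^m$. I would address this by first passing, without loss of generality, to a reduced parameterisation in which no two components share the same triple $(A_i(a),\mu_i(a),\Sigma_i(a))$, and then observing that the $x$-invariance of the covariances provides a stable anchor for the matching across $x$; any residual ambiguity between components sharing the same covariance is resolved by the continuity (indeed affineness) of the mean function in $x$, which prevents two distinct affine mean maps from agreeing on a set of positive measure.
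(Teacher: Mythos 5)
Your proposal is correct and follows essentially the same route as the paper: fix the discrete input and output, reduce to the uniqueness of Gaussian-mixture parameters (Proposition~\ref{prop:mixture-uniqueness}), and then recover $A_i(a)$ and $\mu_i(a)$ by evaluating the affine mean map at $0$ and at a basis of $\R^m$. You are somewhat more careful than the paper's one-paragraph argument—in particular about normalising the conditional slices by their total mass and about keeping the component matching consistent across different values of $x$—but these are refinements of the same proof rather than a different approach.
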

 In light of the previous proposition, we use the notation 
$
f(a,x) = \sum_{i\in I} \varphi(i,\cdot| a)\cdot \Normal{n}{A_i(a)x + \mu_i(a)}{\Sigma_i(a)}
$ for CG-mixtures.
Note that any Gaussian map can be seen as a CG-mixture (with a single component); similarly, any  stochastic kernel between finite sets can be seen as a CG-mixture, with trivial Gaussian components $\Normal{0}{\bullet}{()}$, the only distribution over $\R^0\cong 1 =\{\bullet\}$ with covariance the unique $0\times 0$ matrix. Finally, any mixture of Gaussians $\sum_i p_i \cdot \Normal{n}{\mu_i}{\Sigma_i}$ is a CG-mixture $f\colon 0\distto \R^n$ given by $f(\bullet,\bullet) = \sum_i\varphi(i,\cdot\mid \bullet)\cdot\Normal{n}{\mu_i}{\Sigma_i}$ where $\varphi(i,\cdot \mid\bullet) = p_i$.

A related definition has appeared in the work of Lauritzen, in which Bayesian networks with discrete and Gaussian nodes are considered~\cite{lauritzenjensen}. He calls \emph{Conditional Gaussian} a distribution over sets of discrete and continuous random variables, such that the conditional distribution of the continuous variables given the discrete ones is Gaussian. Our approach recasts these ideas categorically. Importantly, we work in a variable-free setting, where latent variables can be thought of as the index set of the mixture and the observable discrete variables as its inputs/outputs. This is also why our main semantic object of interest, CG-mixtures, are mixtures and not simply Gaussians: as we saw, composing stochastic maps involves marginalising over the intermediate variables (which become latent), so that composing conditional Gaussians in the sense of Lauritzen returns a mixture in general (and not necessarily a single Gaussian). 
\begin{restatable}{proposition}{cgmixturescompose}
	\label{prop:mixgauss-category}
	CG-mixtures are closed under composition and monoidal product in $\Stoch$.
\end{restatable}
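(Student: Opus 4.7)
The plan is to verify both closure properties by direct calculation, using the parametrised form of CG-mixtures in~\eqref{eq:mixgauss-map} together with the fact, recalled earlier, that Gaussian maps are already closed under composition and monoidal product in $\Stoch$.

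For closure under composition, suppose $f\colon\B^p\otimes \R^m\distto\B^q\otimes \R^n$ and $g\colon \B^q\otimes \R^n\distto\B^r\otimes \R^k$ are CG-mixtures with index sets $I$ and $J$, and parameters $(\varphi_f, \{A_i,\mu_i,\Sigma_i\}_{i\in I})$ and $(\varphi_g,\{C_j,\nu_j,\Theta_j\}_{j\in J})$ respectively. First, I expand $(g\circ f)(C\times Z\mid a,x)$ using the Chapman--Kolmogorov formula, noting that because the intermediate measurable space is $\B^q\otimes\R^n$, the integration splits into a finite sum over $b\in\B^q$ and an integral over $y\in\R^n$. Substituting the mixture expansions of $f$ and $g$ and interchanging the (finite) sums with the integral, the surviving inner integral becomes a standard Gaussian--Gaussian composition which, by the formula stated earlier for composition of Gaussian maps, yields the Gaussian $\Normal{k}{C_j(b)(A_i(a)x+\mu_i(a))+\nu_j(b)}{C_j(b)\Sigma_i(a)C_j(b)^T+\Theta_j(b)}$.

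Collecting terms, the result is manifestly of the form~\eqref{eq:mixgauss-map}, with new index set $\widetilde I := I\times J\times \B^q$, new mixing kernel
$$
\widetilde\varphi\bigl((i,j,b),c\,\big|\,a\bigr) \,=\, \varphi_f(i,b\mid a)\,\varphi_g(j,c\mid b),
$$
and Gaussian parameters $\widetilde A_{(i,j,b)}(a) = C_j(b)A_i(a)$, $\widetilde\mu_{(i,j,b)}(a) = C_j(b)\mu_i(a)+\nu_j(b)$, and $\widetilde\Sigma_{(i,j,b)}(a) = C_j(b)\Sigma_i(a)C_j(b)^T+\Theta_j(b)$. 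A short check that $\widetilde\varphi$ is a genuine stochastic kernel follows by summing over $(i,j,b,c)$ and using the normalisation of $\varphi_f$ and $\varphi_g$.

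For closure under monoidal product, since stochastic kernels in $\Stoch$ are determined by their values on rectangle measurable sets, I substitute the mixture expansions of $f_1$ and $f_2$ into the defining rectangle formula and distribute the product of finite sums, yielding a single sum indexed by $I_1\times I_2$ whose summands are products of Gaussians. By the monoidal product formula for Gaussian maps recalled earlier, each such product is a block-diagonal Gaussian, and the new mixing kernel $\widetilde\varphi((i_1,i_2),(b_1,b_2)\mid(a_1,a_2)) = \varphi_1(i_1,b_1\mid a_1)\varphi_2(i_2,b_2\mid a_2)$ is stochastic because $\varphi_1$ and $\varphi_2$ are, so $f_1\otimes f_2$ is a CG-mixture.

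The main obstacle is bookkeeping in the composition argument: the intermediate discrete variable $b$ appears both as an index of the Gaussian parameters of $g$ (through $C_j(b),\nu_j(b),\Theta_j(b)$) and as a marginalisation variable, and must be absorbed into the composite index set alongside $I$ and $J$, rather than being summed away. Once this is recognised, the rest of the argument reduces to routine algebraic manipulation; the exchange of finite sums with the Chapman--Kolmogorov integral is harmless since all sums are finite and all integrands are non-negative measurable functions dominated by Gaussian densities.
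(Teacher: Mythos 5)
Your proposal is correct and follows essentially the same route as the paper's proof: both expand the Chapman--Kolmogorov composite into a finite sum over the intermediate Boolean value $b$ and an integral over $\R^n$, reduce the inner integral to the known Gaussian--Gaussian composition formula, and package the result as a CG-mixture indexed by $I\times J\times\B^q$ with mixing kernel $\varphi_f(i,b\mid a)\varphi_g(j,c\mid b)$, and both treat the monoidal product via the block-diagonal Gaussian formula on rectangles. Your explicit remark that the intermediate variable $b$ must be absorbed into the composite index set rather than summed away is a helpful clarification but not a departure from the paper's argument.
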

\noindent As a result,  CG-mixtures form a symmetric monoidal subcategory of $\Stoch$ (symmetries and identities of the relevant type can readily be seen to be CG-mixtures). 
\begin{definition}[$\MixGauss$]
  \label{def:mixgauss}
  We call  $\MixGauss$ the symmetric monoidal subcategory of $(\Stoch, \otimes, \sigma)$ whose objects are measurable spaces of the form $\Bool^q\otimes \R^n$ for some $q,n\in\N$ and whose morphisms are CG-mixtures.\footnote{The reader may notice that $(\Bool^p\otimes \R^m)\otimes(\Bool^q\otimes \R^n)\neq \Bool^{p+q}\otimes \R^{m+n}$, but that the two sides are merely isomorphic. In what follows, we will behave as if they were equal, assuming implicitly that we can use the isomorphisms $\Bool^p\otimes\Bool^q \cong \Bool^{p+q}$ and $\R^m\otimes\R^n \cong \R^{m+n}$ wherever necessary. This defines a \emph{strict} SMC, which is what we need as semantics to our diagrammatic syntax, also a strict SMC.}
\end{definition}

\section{String diagrammatic syntax}
\label{sec:syntax}

Our main goal is to provide a sound and complete equational theory for CG-mixtures, which faithfully capture both discrete (Boolean) and continuous (Gaussian) variables, as well as their interaction. We approach this problem by first giving a modular syntax for these distributions in terms of string diagrams, a standard graphical language used to represent morphisms in SMCs. 
While we review some of the basics of string diagrams below, we refer the reader to \cite{Introdiagrams} for a more comprehensive introduction.

The two-dimensional syntax we use is a formal, graphical notation for morphisms of a \textit{coloured product and permutation category} (or simply, a \emph{prop}). A prop is a strict SMC whose objects are the set $C^*$ of words over a (finite) number of \emph{colours} $C$ and the monoidal product is given by concatenation.

The prop that will serve as our syntax $P_\Sigma$ is freely generated from a \emph{monoidal signature} $\Sigma$, a set of generating morphisms $g:v\to w$, through \emph{sequential} ($c\poi d$) and \emph{parallel} ($c\otimes d$) composition of generators of appropriate type, together with identities $\idone:c\to c$ for $c\in C$, symmetry $\sym:cd\to dc$ for $c,d\in C$ and empty generator $\idzero:\varepsilon\to \varepsilon$ (where $\varepsilon$ denotes the empty word). The two binary operations used to construct new terms, $(-\poi -):P_\Sigma\times P_\Sigma\to P_\Sigma$ and $(-\otimes -):P_\Sigma\times P_\Sigma\to P_\Sigma$ are the categorical composition and the monoidal product of the prop, respectively. The morphisms of $P_\Sigma$ are thus terms of a $C^*\times C^*$-sorted syntax quotiented by the laws of SMCs:
\begin{center}
  $c_1\otimes(c_2\otimes c_3) = (c_1\otimes c_2)\otimes c_3 \qquad (c_1\otimes c_2)\poi(d_1\otimes d_2)=(c_1\poi d_1)\otimes(c_2\poi d_2)$
  $(c\poi d)\poi e=c\poi(d\poi e) \qquad c\poi \idone = c = \idone \poi c \qquad \idzero\otimes c =c= c\otimes \idzero$
  $(\idone\otimes c)\poi\sym = \sym\poi (c\otimes \idone)\qquad\sym\poi \sym = \idone\otimes \idone$
\end{center}
where $c, d, e$ and $c_i$, $d_i$ range over $\Sigma$-terms of the appropriate type (omitted for clarity). These axioms state that the two forms of composition are associative and unital, that they satisfy a form of interchange law, and that the wire crossings behave as expected. Each $\Sigma$-term $c$ of type $v\to w$ will be graphically represented as a string diagram with labelled wires $\circuit{c}{v}{w}$. For $c:u\to v$, $d:v\to w$, $c_i:v_i\to w_i$, we depict $c\poi d$ as sequential composition and $c_1 \otimes c_2$ as parallel composition of $\Sigma$-terms of type $u\to w$ and $v_1v_2\to w_1w_2$,
\begin{center}
$
\InputIfFileExists{c-d-normal-font.tikz}{}{\input{./tikz/c-d-normal-font.tikz}}
 \;=\;
\InputIfFileExists{seq-compose-normal-font.tikz}{}{\input{./tikz/seq-compose-normal-font.tikz}}

\qquad 
\InputIfFileExists{c1xc2-normal-font.tikz}{}{\input{./tikz/c1xc2-normal-font.tikz}}
 \;=\;
\InputIfFileExists{par-compose-normal-font.tikz}{}{\input{./tikz/par-compose-normal-font.tikz}}
$
\end{center}

\noindent
The signature $\Sigma$ over which we generate our graphical syntax contains
\begin{itemize}
	\item two colours, $\Boolobj$ and $\Realobj$, whose identities  we depict respectively as $\Boolid$ and $\Realid$ respectively;
	\item the following generating morphisms:
	\begin{center}
		$\BoolCounit \mid \BoolComult \mid \Andgate \mid \Notgate \mid \Flip{p}$
		
		$\RealCounit \mid \RealComult \mid \RealWUnit \mid \RealMult \mid \Scalar{k} \mid \Foot \mid \NN$
		
		$\ifmix$
	\end{center}
\end{itemize}
\noindent
These can be seen as the constants of our language, and indeed will be the basic components of the diagrams that make up our syntax. Intuitively, our string diagrams will be freely formed much like conventional circuits, by wiring the above generators in sequence or in parallel, crossing wires (with the symmetries, $\textcolor{gray}{\sym}$, $\textcolor{black}{\sym}$, $\symp{style={draw={booltype}}}{style={draw={realtype}}}$, $\symp{style={draw={realtype}}}{style={draw={booltype}}}$, etc.) and making them as long as required (with identities, $\Boolid$ or $\Realid$). We refer to the resulting two-coloured prop as $\MixCirc$, and its morphisms as \emph{mixed circuits} or simply \emph{circuits}. We write $\Boolobj^p$ (resp. $\Realobj^n$) to denote a word containing $p$ (resp. $n$) successive $\Boolobj$ (resp. $\Realobj$). A  circuit
$
\CGterm{c}{p}{m}{q}{n}
$
will be depicted using a white box $\mixcircuit{c}{p}{q}{m}{n}$, in which the first $p$ input and $q$ output wires have Boolean type $\Boolobj$, while the remaining $m$ input and $n$ output wires denote the real type $\Realobj$. To simplify our syntax, we will sometimes avoid labelling wires explicitly: in this case, unlabelled thick wires represent an arbitrary number of wires (including potentially zero wires), while a normal-width wire indicates a single wire.

Although we have not yet defined a formal semantics for interpreting these, the colours and shapes hint at their intended meaning. The \textit{grey generators} resemble traditional Boolean circuit gates with the addition of a \emph{probabilistic gate}, $\Flip{p}$, which emits a $\mathsf{T}$ (true) with probability $p$ and a $\mathsf{F}$ (false) with probability $1-p$. 
The remaining grey generators correspond to the standard logical gates of conjunction ($\Andgate$) and negation ($\Notgate$), together with a copying gate, $\BoolComult$, that broadcasts its input to two output wires, and a terminating wire, $\BoolCounit$, that discards any input. 

We will refer to string diagrams in this fragment as \emph{(probabilistic) Boolean circuits}, and use grey boxes to represent a generic Boolean circuit $\Boolobj^q\to\Boolobj^p$. Note that such circuits have appeared in previous work, equipped with a sound and complete axiomatisation~\cite{probcirc}. 

Similarly, the \emph{black generators} are interpreted as processing real values. The preliminary intuition for the \emph{copier}, $\RealComult$, and \emph{discarder}, $\RealCounit$, is the same as their Boolean counterpart; the generator $\RealMult$ denotes addition, $\RealWUnit$ produces the value 0, $\Foot$ produces the value 1, and $\Scalar{k}$ multiply by the scalar $k\in\mathbb{R}$.  This set of generators has previously been used to give a compositional syntax for \emph{affine maps}~\cite{gla}. The additional generator $\NN$ is interpreted as sampling values randomly from a \emph{standard normal distribution}, $\mathcal{N}(0,1)$. Together, these allow us to represent Gaussian maps~\cite{gqa}, as we will explain in the following section. 
We will refer to string diagrams in this fragment as \emph{Gaussian circuits}, denote the corresponding prop as $\GaussCirc$, and use black boxes to represent generic circuits in this fragment. 

 
The last generator, $\ifmix$, serves as the interface between the Boolean and Gaussian fragments of our syntax. It behaves like an \emph{if-then-else} gate, which selects the value of its output wire based on the value of the discrete one, which we call the \emph{guard}: if it is true, the gate outputs the second input; if it is false, it outputs its third input. 

\begin{remark}
	We extend the syntactic sugar given by the above thick wires to represent multiple instances of these generators,
	\begin{center}
          \begin{align*}
	    \label{eq:general-gates}
	    
\InputIfFileExists{and-1xn.tikz}{}{\input{./tikz/and-1xn.tikz}}
\; := \;
\InputIfFileExists{and-1xn-def.tikz}{}{\input{./tikz/and-1xn-def.tikz}}
\qquad 
\InputIfFileExists{copy-1xn.tikz}{}{\input{./tikz/copy-1xn.tikz}}
 \;:=\;
\InputIfFileExists{copy-1xn-def.tikz}{}{\input{./tikz/copy-1xn-def.tikz}}

          \end{align*}
	\end{center}
	\noindent
	and analogously with $\Notgate[thick]$ and $\BoolCounit[thick]$.

Finally, we will also make use of $n$-ary if-then-else circuits (for $n>1$), defined as the composite of $n$ if-then-else generators which share a single guard. Similarly, we will consider thickenned version of if-then-else circuits involving $p$ guards:
\begin{center}
  
\InputIfFileExists{thick-mixture-conditionals.tikz}{}{\input{./tikz/thick-mixture-conditionals.tikz}}

\end{center}
\end{remark}

\section{Semantics}
\label{sec:semantics}

We define the semantics of circuits as a mapping $\sem{\cdot}$ from the generators to stochastic kernels---specifically, to CG-mixtures. Since the diagrammatic syntax is freely-generated, the mapping will extend to a symmetric monoidal functor into $\MixGauss$ (Definition~\ref{def:mixgauss}), with which we can compute the behaviour of arbitrary composite circuits. 

Note that, for the purely Boolean or Gaussian fragments, our semantics coincides with that given in previous work: Boolean circuits are interpreted as stochastic kernels between sets of the form $\Bool^p$~\cite{probcirc}, while Gaussian circuits are interpreted as Gaussian maps~\cite{gqa}. 

We write $()$ for the unique $0\times 0$ matrix, $\bullet$ for the unique element of $\R^0$ and $\Bool^0$, and $\Dirac{\cdot}{x}$ for the Dirac delta at $x$. Each generator $\CGterm{g}{p}{m}{q}{n}$ is interpreted as a conditional Gaussian mixture $\sem{g}\colon \Bool^p\otimes\R^m\distto\Bool^q\otimes\R^n$, specified using the notation $\sem{g}(a,x) = \sum_{i\in I} \varphi(i,\cdot| a)\cdot \Normal{n}{A_i(a)x + \mu_i(a)}{\Sigma_i(a)}$ introduced in Section~\ref{sec:mixgauss}. Notice that none of our generators represents a non-trivial mixture of Gaussians---these will arise by composing them. Since for all the generators, either the Boolean or the real output component is trivial, we omit the irrelevant part of the expression, to avoid notational overload. For example, when $n=0$ (no real-valued output), we simply write $\sem{g}(a,x) = \sum_i\varphi(i,\cdot|a) = \sum_i p_i\cdot \Dirac{\cdot}{i}$ for some weights $p_i\in[0,1]$ that add to one. Moreover, when $q=0$ (no Boolean-valued output) and $I=1$ (trivial mixture), the resulting distribution is a single Gaussian map, which we write simply as $\sem{g}(a,x) = \Normal{}{Ax + \mu}{\Sigma}$. With these notational preliminaries out of the way, we are ready to specify the semantics of our generators:

\begin{figure}
  \begin{align*}
    &\sem{\;\BoolCounit\;}(b,\bullet) := \Dirac{\cdot}{\bullet} & & \sem{\;\BoolComult\;}(b,\bullet) := \Dirac{\cdot}{\begin{pmatrix}b\\b\end{pmatrix}} \\
    &\sem{\Andgate}\left(\begin{pmatrix}b_1\\b_2\end{pmatrix},\bullet\right) := \Dirac{\cdot}{b_1\land b_2} & & \sem{\Notgate}(b,\bullet) := \Dirac{\cdot}{\lnot b} \\
      &\sem{\;\Flip{p}\;}(\bullet,\bullet) := p\cdot \Dirac{\cdot}{1} + (1-p)\cdot \Dirac{\cdot}{0} & & \sem{\;\NN\;}(\bullet,\bullet) := \Normal{1}{0}{1}\\
      &\sem{\;\RealCounit\;}(\bullet, x) :=\Normal{0}{\bullet}{()} & & \sem{\;\RealComult\;}(\bullet, x) := \Normal{2}{\begin{pmatrix}x\\x\end{pmatrix}}{\begin{pmatrix}0 & 0\\0 & 0\end{pmatrix}}\\
      &\sem{\;\RealWUnit\;}(\bullet, \bullet) := \Normal{1}{0}{0} & & \sem{\;\RealMult\;}\left(\bullet, \begin{pmatrix}x_1\\x_2\end{pmatrix}\right):=\Normal{1}{x_1+x_2}{0}\\
        &\sem{\;\Scalar{k}\;}(\bullet, x) := \Normal{1}{k\cdot x}{0} & & \sem{\;\Foot\;}(\bullet,\bullet) := \Normal{1}{1}{0}
  \end{align*}
 \caption{Semantics of circuit generators in terms of CG-mixtures.}
\end{figure}
\begin{center}
	$
	\sem{\ifmix}\left(b,\begin{pmatrix}x_1\\x_2\end{pmatrix}\right):=  \begin{cases}
		\Normal{1}{x_1}{0} \text{ if } b = 1\\
		\Normal{1}{x_2}{0}  \text{ if } b = 0
	\end{cases}
	$
\end{center}
Recall also that the degenerate Gaussian $\Normal{n}{\mu}{0}$ is equal to $\Dirac{\cdot}{\mu}$, the Dirac at $\mu$. For example, $\sem{\;\Foot\;}(\bullet,\bullet):= \Normal{1}{1}{0}$ could have also been written as $\Dirac{\cdot}{1}$ (and the same goes for the interpretations of $\RealWUnit, \RealMult, \Scalar{k}, \RealComult, \RealCounit$).
The following is a simple consequence of the freeness of the diagrammatic syntax. 
\begin{proposition}
  $\sem{\cdot}$ extends to a symmetric monoidal functor $\MixCirc\to\MixGauss$.
\end{proposition}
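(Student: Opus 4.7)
The plan is to exploit the universal property of freely-generated props: since $\MixCirc$ is the free (two-coloured) prop on the signature $\Sigma$, any mapping on $\Sigma$ that respects types and lands in a symmetric monoidal category extends uniquely to a symmetric monoidal functor. So the core of the argument is to verify the hypotheses of this universal property for the data already defined.

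First I would fix the action on objects by sending the colours $\Boolobj \mapsto \Bool$ and $\Realobj \mapsto \R$, and extending this to words by $\sem{\Boolobj^p \Realobj^m} := \Bool^p \otimes \R^m$, using the implicit isomorphisms $\Bool^p\otimes\Bool^q \cong \Bool^{p+q}$ and $\R^m\otimes\R^n \cong \R^{m+n}$ noted in the footnote following Definition~\ref{def:mixgauss}. Next I would check that the assignment on each generator, as displayed, has the correct type as a stochastic kernel and is moreover a CG-mixture in the sense of Definition~\ref{def:mixgauss-map}. For the purely Gaussian generators ($\RealCounit$, $\RealComult$, $\RealWUnit$, $\RealMult$, $\Scalar{k}$, $\Foot$, $\NN$), the image is a Gaussian map, and any Gaussian map is a CG-mixture with trivial index set and no discrete component (as already noted in Section~\ref{sec:mixgauss}). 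For the purely Boolean generators ($\BoolCounit$, $\BoolComult$, $\Andgate$, $\Notgate$, $\Flip{p}$), the image is a stochastic kernel between finite powers of $\Bool$, and any such kernel is a CG-mixture whose Gaussian components are the trivial $\Normal{0}{\bullet}{()}$. For the mixed generator $\ifmix$, one verifies directly that $\sem{\ifmix}(b, (x_1,x_2)) = \Dirac{\cdot}{x_1}$ if $b=1$ and $\Dirac{\cdot}{x_2}$ if $b=0$, which fits the CG-mixture template with $I=1$ and a single Dirac (degenerate Gaussian) component depending on $b$.

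With the typed assignment in place, I would then extend $\sem{-}$ inductively along the grammar of $\MixCirc$ by setting $\sem{c\poi d} := \sem{d}\circ\sem{c}$, $\sem{c\otimes d} := \sem{c}\otimes\sem{d}$, $\sem{\idone_c} := \id_{\sem{c}}$, and $\sem{\sym_{c,d}} := \sigma_{\sem{c},\sem{d}}$, where the operations on the right are those of the ambient SMC $\MixGauss$. Well-definedness on the SMC equivalence classes is immediate because $\MixGauss$ itself satisfies the strict SMC axioms (associativity, unitality, interchange, and the symmetry equations), having been established as a symmetric monoidal subcategory of $\Stoch$ in Proposition~\ref{prop:mixgauss-category}; each syntactic axiom of $\MixCirc$ listed in Section~\ref{sec:syntax} therefore holds after interpretation.

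The only non-routine step is knowing that composite expressions remain in $\MixGauss$ and not merely in $\Stoch$: this is precisely what Proposition~\ref{prop:mixgauss-category} (closure of CG-mixtures under $\circ$ and $\otimes$) guarantees, so the inductive extension never leaves the target subcategory. The functor is then unique by construction, since its values on arbitrary circuits are forced by its values on generators together with the SMC operations. I expect no real obstacle beyond the case-by-case type check on generators; the heavy lifting has already been done in Proposition~\ref{prop:mixgauss-category}, and everything else is a direct application of the universal property of free props.
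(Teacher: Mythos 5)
Your proposal is correct and matches the paper's justification, which simply invokes the freeness of the diagrammatic syntax; your additional checks (generators land in $\MixGauss$, and Proposition~\ref{prop:mixgauss-category} keeps composites inside the subcategory) are exactly the details the paper leaves implicit.
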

This means that each circuit $\circuit{c}{v}{w}$ can be assigned a morphism $\sem{\circuit{c}{v}{w}}\colon  \Bool^p\otimes\R^m\distto\Bool^q\otimes\R^n$ of $\MixGauss$ where $p$ (resp. $q$) is the number of occurrences of $\Boolobj$ in $v$ (resp. $w$) and $m$ (resp. $n$) is the number of occurrences of $\Realobj$ in $v$ (resp. $w$) . Crucially, the previous proposition states that this assignment can be made \emph{compositional}---in other words, the semantics of any composite circuit can be computed from the semantics of the generators above following the expressions for composition and the monoidal product in $\MixGauss$ (which boil down to composition in $\Stoch$, \emph{cf.} Section~\ref{sec:background}). In particular, when $\sem{\circuit{c}{u}{v}}(a,x) = \sum_{i\in I} \varphi(i,\cdot| a)\cdot \Normal{m}{A_i(a)x + \mu_i(a)}{\Sigma_i(a)} $
and $\sem{\circuit{d}{v}{w}}(b,y) = \sum_{j\in J} \psi(j,\cdot| b)\cdot \Normal{n}{B_j(b)y + \nu_j(b)}{\Theta_j(b)}$ their composition is given by
$$\displaystyle
\sem{
\InputIfFileExists{seq-compose-normal-font.tikz}{}{\input{./tikz/seq-compose-normal-font.tikz}}
}(a,x) = \sum_{(i,j,b)\in I\times J\times B} \pi(i,j,b,\cdot| a)\cdot h_{i,j,b}(a,x)
$$
where $h_{i,j,b}(a,x) :=  \Normal{n}{B_j(b)A_i(a) x + B_j(b){\mu}_i(a) + \nu_j(b)}{B_ j(b)\Sigma_i(a)B_j(b)^T + \Theta_j(b)}$ and $\pi(i,j,b,\cdot |x):=\psi(j,\cdot|b)\varphi(i,b|a)$. Intuitively speaking, for every pair of Gaussian maps $f_i$ and $g_j$ of the two mixtures $\sem{c}$ and $\sem{d}$, we compose them \emph{qua} Gaussian maps (as explained in Section~\ref{sec:background}) and assign the product of the weights of their respective weights to the composite $g_j\circ f_i$ in the resulting mixture. 
Moreover, when
 $
 \sem{\circuit{c_1}{v_1\,}{\,w_1}}(a,x) = \sum_{i\in I} \varphi(i,\cdot| a)\cdot \Normal{m}{A_i(a)x + \mu_i(a)}{\Sigma_i(a)} $ and $ \sem{\circuit{c_2}{v_2\,}{\,w_2}}(b,y) = \sum_{j\in J} \psi(j,\cdot| b)\cdot \Normal{o}{B_j(b)y + \nu_j(b)}{\Theta_j(b)} 
 $
 their monoidal product $\sem{
\InputIfFileExists{par-compose-normal-font.tikz}{}{\input{./tikz/par-compose-normal-font.tikz}}
}\left(\!\left(\begin{matrix}\scriptstyle a\\\scriptstyle b\end{matrix}\right),\left(\begin{matrix}\scriptstyle x\\\scriptstyle y\end{matrix}\right)\!\right)$ is given by
 $$
 \!\!\!\!\sum_{(i,j)\in I\times J} \!\!\!\!\varphi(i,\cdot| a)\psi(j,\cdot| b) \cdot \Normal{m+o}{\!\!\left(\begin{matrix} \scriptstyle A_i(a) &\!\!\scriptstyle 0\\ \scriptstyle 0 & \!\!\scriptstyle B_j(b)\end{matrix}\right)\left(\begin{matrix}\scriptstyle x\\\scriptstyle y\end{matrix}\right)+\left(\begin{matrix}\scriptstyle\mu_i(a)\\\scriptstyle\nu_j(b)\end{matrix}\right)}{\left(\begin{matrix}\scriptstyle\Sigma_i(a) & \!\!\scriptstyle 0\\\scriptstyle 0 & \!\!\scriptstyle\Theta_j(b)\end{matrix}\right)\!\!}
 $$

\begin{example}\label{ex:mixture}
	We illustrate how to form mixtures of univariate Gaussians using this syntax. First, the following circuit allows us to take convex sum of its (real-valued) inputs:
	$$
	\sem{
\InputIfFileExists{convex-mix-if.tikz}{}{\input{./tikz/convex-mix-if.tikz}}
}\begin{pmatrix}x_1\\x_2\end{pmatrix} = p\cdot \Normal{1}{x_1}{0}  + (1-p)\cdot \Normal{1}{x_2}{0}  = p\cdot \Dirac{\cdot}{x_1} + (1-p)\cdot \Dirac{\cdot}{x_2}
	$$
	To form convex mixtures of Gaussians, we first need to understand how Gaussian circuits allow us to represent (multivariate) Gaussians, following~\cite{gqa}. As we saw, $\NN$ represents a single univariate centred and normalised Gaussian variable, $ \Normal{1}{0}{1}$. By multiplying it by some $\sigma\in\R$, we scale the variance by $\sigma^2$, obtaining a circuit $
\InputIfFileExists{NN-sigma.tikz}{}{\input{./tikz/NN-sigma.tikz}}
$ encoding $\Normal{1}{0}{\sigma^2}$; if we want to shift the mean by some value $\mu\in\R$, we can use $\RealMult$ to add $
\InputIfFileExists{mu-constant.tikz}{}{\input{./tikz/mu-constant.tikz}}
$. For example, 
	$$
	\sem{
\InputIfFileExists{gaussian-ex-1.tikz}{}{\input{./tikz/gaussian-ex-1.tikz}}
} = \Normal{1}{3}{1}\qquad \sem{
\InputIfFileExists{gaussian-ex-2.tikz}{}{\input{./tikz/gaussian-ex-2.tikz}}
} = \Normal{1}{0}{4}
	$$
	We can now take the $p$-mixture of the two circuits above:
	$$
	\sem{
\InputIfFileExists{mixture-gaussians-ex-1.tikz}{}{\input{./tikz/mixture-gaussians-ex-1.tikz}}
} = p\cdot \Normal{1}{3}{1}+(1-p)\cdot \Normal{1}{0}{4}
	$$
	For \emph{multivariate} Gaussians, we first need to recall how matrices (and vectors, which are just a special case) are encoded in graphical linear algebra~\cite{gla}.
  An $n \times m$ matrix $A$ may be represented by a circuit $
\InputIfFileExists{matrixcircuit.tikz}{}{\input{./tikz/matrixcircuit.tikz}}
$, using $\RealComult$, $\Scalar{x}$, $\RealMult$ as follows: its input wires stand for the columns of $A$, its output wires stand for the rows, and the $j$-th input wire is connected to the $i$-th output wire through a scalar $\Scalar{x}$ if the coefficient $A_{ij}$ is $x$. Moreover, since $\Scalar{0}$ is semantically equal to $\RealCounit\;\RealWUnit$ and $\Scalar{1}$ to $\Realid$, we can depict $A_{ij} = 0$ by not connecting the two input/output wires, and $A_{ij} =1$ by a plain wire. 
	\begin{center}
		$\text{For instance, for } A =  { \begin{pmatrix}
				\small x_1 & \small 0 & \small 0 \\
				\small 1 & \small 1 & \small 0 \\
				\small x_2 & \small 0 & \small 0 \\
				\small 0 & \small 0 & \small 0
		\end{pmatrix}}  \text{ let } 
\InputIfFileExists{matrixcircuit.tikz}{}{\input{./tikz/matrixcircuit.tikz}}
 := 
\InputIfFileExists{ex-matrix.tikz}{}{\input{./tikz/ex-matrix.tikz}}
 .$\end{center}
	Then, $\sem{
\InputIfFileExists{matrixcircuit.tikz}{}{\input{./tikz/matrixcircuit.tikz}}
}(x) = \Normal{n}{Ax}{0} = \Dirac{\cdot}{Ax}$, justifying the encoding. Now we can encode any multivariate Gaussian $\Normal{n}{\mu}{\Sigma}$ as the circuit
	$$
	
\InputIfFileExists{gaussian-nf.tikz}{}{\input{./tikz/gaussian-nf.tikz}}

	$$
	with $\Sigma = RR^T$, \emph{i.e.} some Cholesky decomposition of the covariance matrix.
        Then,
	$$
	 \sem{
\InputIfFileExists{2-mixture-gaussians.tikz}{}{\input{./tikz/2-mixture-gaussians.tikz}}
}= p\cdot \Normal{n}{\mu_1}{R_1R_1^T} + (1-p)\cdot \Normal{n}{\mu_2}{R_2R_2^T}
	$$
	Moreover, this can be iterated to mixtures with more components in a straightforward way. For example, for three components, we have
	$$
	\sem{
\InputIfFileExists{3-mixture-gaussians.tikz}{}{\input{./tikz/3-mixture-gaussians.tikz}}
} = p_1\cdot\Normal{n}{\mu_1}{R_1R_1^T} + p_2\cdot\Normal{n}{\mu_2}{R_2R_2^T} + p_3\cdot\Normal{n}{\mu_3}{R_3R_3^T}
	$$
	where $p = p_1$ and $q = \frac{p_2}{1-p_1}$.
	We will make use of this representation when defining a normal form for circuits below. 
\end{example}
From this example, it is easy to see that \emph{any} mixture of Gaussians can be encoded as a circuit; in fact, this is true of any CG-mixture: for $\sem{f}(a,x) = \sum_{i\in I} \varphi(i,\cdot| a)\cdot \Normal{n}{A_i(a)x + \mu_i(a)}{\Sigma_i(a)}$, we can encode $\varphi$ as a Boolean circuit and the individual Gaussian maps as Gaussian circuits~\cite{gqa}. Then, we can put these components together using if-then-else gates arranged in sequence as above, in order to select the right component for each Boolean input and each index $i\in I$.
\begin{restatable}{proposition}{universality}
	\label{prop:universality}
	For any CG-mixture $f$, there exists a circuit $c$ such that $\sem{c} = f$.
\end{restatable}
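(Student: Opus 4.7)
The plan is to use the canonical decomposition of a CG-mixture given by~\eqref{eq:mixgauss-map} and build a circuit by combining, via if-then-else gates, diagrammatic encodings of each ingredient of this decomposition, each of which is already known to be realisable in one of the two pure fragments of the syntax.

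Concretely, starting from a CG-mixture $f$ with parameters $(I, \varphi, \{f_i\}_{i\in I})$ as in Definition~\ref{def:mixgauss-map}, I would proceed in two preliminary steps. First, since $I$ is finite, I identify it with $\Bool^k$ for $k = \lceil \log_2 |I|\rceil$, so that $\varphi \colon \Bool^p \distto \Bool^{k+q}$ is a stochastic kernel between finite powers of $\Bool$; by the universality of the Boolean fragment established in~\cite{probcirc}, there exists a Boolean circuit $c_\varphi \colon \Boolobj^p \to \Boolobj^{k+q}$ whose semantics is $\varphi$. Second, for each pair $(a, i) \in \Bool^p \times I$, the map $x \mapsto \Normal{n}{A_i(a)x + \mu_i(a)}{\Sigma_i(a)}$ is a Gaussian map and, by universality of the Gaussian fragment from~\cite{gqa}, is realisable as a Gaussian circuit $c_{a,i} \colon \Realobj^m \to \Realobj^n$ following the encoding illustrated in Example~\ref{ex:mixture}.

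The final step is to assemble these pieces. The Boolean input $a$ is copied and fed both to $c_\varphi$ and forward, so that $c_\varphi$ produces a Boolean word $(i, b) \in \Bool^{k+q}$ carrying both the latent mixture index $i$ and the discrete component $b$ of the output. The $p + k$ Boolean wires carrying $a$ and $i$ then serve as the guards of a cascade of thickened if-then-else circuits, enumerating over all $(a, i)$ configurations and selecting the output of the matching $c_{a,i}$; the auxiliary index wires are then discarded so that $i$ is marginalised. Functoriality of $\sem{\cdot}$, together with the fact that the if-then-else gate on a real wire evaluates to a Dirac mass at the real input selected by its Boolean guard, ensures the resulting distribution matches precisely the sum in~\eqref{eq:mixgauss-map}.

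The main obstacle is the combinatorial bookkeeping needed to route the guards so that each Gaussian component is selected by exactly one Boolean configuration of $(a, i)$, and to wire in and out the auxiliary index-carrying wires cleanly without accidentally altering the joint distribution of $(i, b)$ produced by $c_\varphi$. Once the circuit is laid out, verifying its semantics reduces to an application of the explicit formulas for composition and monoidal product of CG-mixtures given after Proposition~\ref{prop:mixgauss-category}, which confirms that the weights $\varphi(i, b \mid a)$ correctly multiply the Gaussian components indexed by $(a, i)$.
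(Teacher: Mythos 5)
Your proposal is correct and follows essentially the same route as the paper: decompose $f$ into a Boolean circuit realising $\varphi$ (with the latent index $i$ carried on auxiliary Boolean wires) and a tree of if-then-else gates guarded by the $(a,i)$ wires that selects the appropriate Gaussian circuit for each component, then appeal to the universality of the two pure fragments. The only difference is that you encode $I$ in binary over $\lceil\log_2|I|\rceil$ wires whereas the paper uses a one-hot encoding over $|I|$ wires; both handle the unused guard configurations by placing arbitrary Gaussian circuits at leaves that are selected with probability zero.
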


\section{Equational Theory}\label{equational-theory}
\label{sec:equational-theory}

While the laws of SMCs capture some of the semantic equivalences between circuits, they are not sufficient to derive all valid equivalences. To address this, we introduce a additional set of equations, ruling the interactions between the different components of our syntax. For more background on equational theories of string diagrams, aka symmetric monoidal theories, we refer the reader to a more detailed introductory text~\cite{Introdiagrams}. For a given theory $\mathsf{T}$, we write $\myeq{T}$ for an equality which can be derived by applying a finite sequence of axions of $\mathsf{T}$. Following standard terminology, we say that $\mathsf{T}$ is \emph{sound} (for a given semantics $\sem{\cdot}$) when, for any two circuits $c,d$ of the same type, $c\myeq{T} d$ implies that $\sem{c}=\sem{d}$. In other words, two circuits can be shown equal only if they denote the same stochastic kernel.  When the converse is true---$\sem{c}=\sem{d}$ implies $c\myeq{T} d$---we say that the theory is \emph{complete}. A sound and complete theory is said to \emph{axiomatise} the corresponding domain of interpretation. 

In this work, we are interested in the equational theory that rules the interaction of the Boolean and Gaussian fragments of our syntax, since each of these two individual fragments have already been axiomatised in previous work: probabilistic Boolean circuits~\cite{probcirc} and Gaussian circuits~\cite{gqa}. For this reason, we will always assume that two probabilistic Boolean (resp. Gaussian) circuits can be shown to be equal when they have the same interpretation, that is, when they denote the same stochastic kernel (appealing implicitly to the completeness results of the two cited papers). We will also assume that any stochastic kernel of type $\Bool^p\distto\Bool^q$ and any Gaussian map $\R^m\distto\R^n$ can be encoded as a circuit in the corresponding fragment---again, this follows from the cited results~\cite{probcirc,gqa}.

The main result of this paper is the soundness and completeness of the theory given in Fig.~\ref{fig:eqs} (relative to the completeness of the Boolean and Gaussian fragments). In what follows, we call this theory $\CGMTh$. We will justify its soundness in this section (Theorem~\ref{thm:soundness}); its completeness is more challenging and will be the object of the next section. 
%
%
\begin{figure}[!ht]
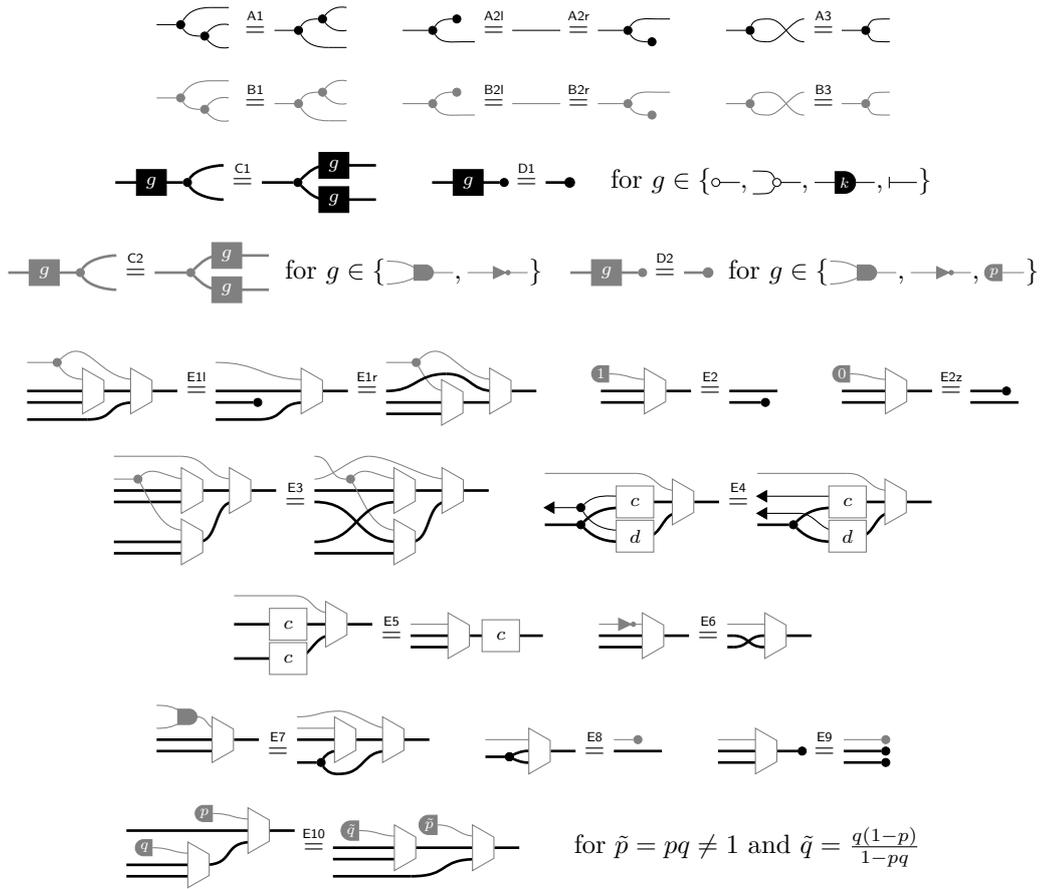

  {
    \centering
    \begin{center}
      $
\InputIfFileExists{axioms/real-copy-associative.tikz}{}{\input{./tikz/axioms/real-copy-associative.tikz}}
 \;\myeq{A1} 
\InputIfFileExists{axioms/real-copy-associative-1.tikz}{}{\input{./tikz/axioms/real-copy-associative-1.tikz}}
 \qquad 
\InputIfFileExists{axioms/real-copy-unital-left.tikz}{}{\input{./tikz/axioms/real-copy-unital-left.tikz}}
 \myeq{A2l} \idone\myeq{A2r} 
\InputIfFileExists{axioms/real-copy-unital-right.tikz}{}{\input{./tikz/axioms/real-copy-unital-right.tikz}}
\qquad  
\InputIfFileExists{axioms/real-copy-commutative.tikz}{}{\input{./tikz/axioms/real-copy-commutative.tikz}}
 \myeq{A3} \RealComult$
    \end{center}
    \begin{center}
      $
\InputIfFileExists{axioms/bool-copy-associative.tikz}{}{\input{./tikz/axioms/bool-copy-associative.tikz}}
 \;\myeq{B1} 
\InputIfFileExists{axioms/bool-copy-associative-1.tikz}{}{\input{./tikz/axioms/bool-copy-associative-1.tikz}}
 \qquad 
\InputIfFileExists{axioms/bool-copy-unital-left.tikz}{}{\input{./tikz/axioms/bool-copy-unital-left.tikz}}
 \myeq{B2l} \idonebool\myeq{B2r} 
\InputIfFileExists{axioms/bool-copy-unital-right.tikz}{}{\input{./tikz/axioms/bool-copy-unital-right.tikz}}
\qquad  
\InputIfFileExists{axioms/bool-copy-commutative.tikz}{}{\input{./tikz/axioms/bool-copy-commutative.tikz}}
 \myeq{B3} \BoolComult$
    \end{center}
    \begin{center}
      $
\InputIfFileExists{axioms/g-real-copy.tikz}{}{\input{./tikz/axioms/g-real-copy.tikz}}
\myeq{C1}
\InputIfFileExists{axioms/real-copy-gxg.tikz}{}{\input{./tikz/axioms/real-copy-gxg.tikz}}
 \qquad 
\InputIfFileExists{axioms/g-real-del.tikz}{}{\input{./tikz/axioms/g-real-del.tikz}}
\myeq{D1}\RealCounit[thick] \quad\text{ for } g\in\{\RealWUnit, \RealMult, \Scalar{k}, \Foot \}$
    \end{center}
    \begin{center}
      $
\InputIfFileExists{axioms/g-bool-copy.tikz}{}{\input{./tikz/axioms/g-bool-copy.tikz}}
\myeq{C2}
\InputIfFileExists{axioms/bool-copy-gxg.tikz}{}{\input{./tikz/axioms/bool-copy-gxg.tikz}}
\: \text{ for } g\in\{\Andgate, \Notgate\}\quad 
\InputIfFileExists{axioms/g-bool-del.tikz}{}{\input{./tikz/axioms/g-bool-del.tikz}}
\myeq{D2}\BoolCounit[thick]\:  \text{ for } g\in\{\Andgate, \Notgate, \Flip{p} \}$
    \end{center}
    \begin{center}
      $
\InputIfFileExists{axioms/axiom1lhs.tikz}{}{\input{./tikz/axioms/axiom1lhs.tikz}}
\myeq{E1l}
\InputIfFileExists{axioms/axiom1rhs.tikz}{}{\input{./tikz/axioms/axiom1rhs.tikz}}
\myeq{E1r}
\InputIfFileExists{axioms/axiom1bislhs.tikz}{}{\input{./tikz/axioms/axiom1bislhs.tikz}}
 \qquad 
\InputIfFileExists{axioms/axiom2lhs.tikz}{}{\input{./tikz/axioms/axiom2lhs.tikz}}
\myeq{E2}
\InputIfFileExists{axioms/axiom2rhs.tikz}{}{\input{./tikz/axioms/axiom2rhs.tikz}}
 \qquad 
\InputIfFileExists{axioms/axiom2bislhs.tikz}{}{\input{./tikz/axioms/axiom2bislhs.tikz}}
\myeq{E2z}
\InputIfFileExists{axioms/axiom2bisrhs.tikz}{}{\input{./tikz/axioms/axiom2bisrhs.tikz}}
$
    \end{center}
    \begin{center}
      $
\InputIfFileExists{axioms/axiom3lhs.tikz}{}{\input{./tikz/axioms/axiom3lhs.tikz}}
\myeq{E3}
\InputIfFileExists{axioms/axiom3rhs.tikz}{}{\input{./tikz/axioms/axiom3rhs.tikz}}
\qquad
\InputIfFileExists{axioms/axiom4lhs.tikz}{}{\input{./tikz/axioms/axiom4lhs.tikz}}
\myeq{E4}
\InputIfFileExists{axioms/axiom4rhs.tikz}{}{\input{./tikz/axioms/axiom4rhs.tikz}}
$
    \end{center}
    \begin{center}
      $
\InputIfFileExists{axioms/axiom5lhs.tikz}{}{\input{./tikz/axioms/axiom5lhs.tikz}}
\myeq{E5}
\InputIfFileExists{axioms/axiom5rhs.tikz}{}{\input{./tikz/axioms/axiom5rhs.tikz}}
\qquad
\InputIfFileExists{axioms/axiom6lhs.tikz}{}{\input{./tikz/axioms/axiom6lhs.tikz}}
\myeq{E6}
\InputIfFileExists{axioms/axiom6rhs.tikz}{}{\input{./tikz/axioms/axiom6rhs.tikz}}
$
    \end{center}
    \begin{center}
      $
\InputIfFileExists{axioms/axiom7lhs.tikz}{}{\input{./tikz/axioms/axiom7lhs.tikz}}
\myeq{E7}
\InputIfFileExists{axioms/axiom7rhs.tikz}{}{\input{./tikz/axioms/axiom7rhs.tikz}}
\qquad
\InputIfFileExists{axioms/axiom8lhs.tikz}{}{\input{./tikz/axioms/axiom8lhs.tikz}}
\myeq{E8}
\InputIfFileExists{axioms/axiom8rhs.tikz}{}{\input{./tikz/axioms/axiom8rhs.tikz}}
\qquad
\InputIfFileExists{axioms/axiom9lhs.tikz}{}{\input{./tikz/axioms/axiom9lhs.tikz}}
\myeq{E9}
\InputIfFileExists{axioms/axiom9rhs.tikz}{}{\input{./tikz/axioms/axiom9rhs.tikz}}
$
    \end{center}
    \begin{center}
      $
\InputIfFileExists{axioms/axiom10lhs.tikz}{}{\input{./tikz/axioms/axiom10lhs.tikz}}
\myeq{E10}
\InputIfFileExists{axioms/axiom10rhs.tikz}{}{\input{./tikz/axioms/axiom10rhs.tikz}}
 \qquad \text{for } \tilde p = pq \neq 1 \text{ and } \tilde q= \frac{q(1-p)}{1-pq}$
    \end{center}
  }
  \caption{Axioms of $\CGMTh$, the theory of Conditional Gaussian mixtures.}\label{fig:eqs} 
\end{figure}
%
A few comments are in order to clarify the meaning of the axioms of $\CGMTh$.
\noindent Axioms \textax{A1}-\textax{A3} encode the co-associativity, co-unitality, and co-commutativity of the copying ($\RealComult$) and discarding ($\RealCounit$) generators for real values.  These are common structure in many diagrammatic languages, guaranteeing that the different ways of sharing some value between different parts of a diagram are equal to one another. Immediately below, we have similar axioms \textax{B1}-\textax{B3} for the same operations on Boolean values.

The next two lines encode the interaction of the other generators with copying and discarding. Axioms \textax{C1}, \textax{C2} allow us to copy only those generators which denote \emph{deterministic} maps, \emph{i.e.}, those stochastic maps that, given their inputs, return a Dirac delta at their output. For example, $g = \RealMult$ represents addition, a deterministic operation that does not carry any randomness; unpacking \textax{C1} 
for this generator, we obtain
\begin{center}
  
\InputIfFileExists{axioms/wmult-copy.tikz}{}{\input{./tikz/axioms/wmult-copy.tikz}}
 $\myeq{C1}$ 
\InputIfFileExists{axioms/copyxcopy-wmultxwmult.tikz}{}{\input{./tikz/axioms/copyxcopy-wmultxwmult.tikz}}

\end{center}

\noindent Axioms \textax{D1}, \textax{D2} allow us to discard any generator: in other words, if we discard the output of some generator, we might as well discard its input. In the semantics, this corresponds to the fact that all the stochastic kernels that we can represent with our syntax correspond to \emph{normalised} (conditional) distributions.

The remaining axioms concern the interaction between the probabilistic behaviour of the discrete and the continuous parts of our syntax, mediated through the mixed \textit{if-then-else}. 

\noindent Axiom \textax{E1} captures the fact that the repeated evaluation of one condition inside of two nested if-then-else gates is redundant. Indeed, this axiom only rearranges if-then-elses, eliminating the test that depends on a shared Boolean guard, getting rid of the redundancy.

\noindent Axiom \textax{E2} is transparent in stating the simple fact that, whenever the guard of some conditional evaluates to true (resp. false), it results on always taking the \textit{then} (resp. \textit{else}) branch, allowing us to discard the other one. 

\noindent Axiom \textax{E3} states a form of distributivity of if-then-else over itself, provided that we swap the corresponding guards. This corresponds to a well-known property of if-then-else gates. 

\noindent Axiom \textax{E4} is an axiom scheme which holds for any circuits $c$ and $d$. It captures the intuitive idea that if the two branches of an if-then-else gate share a sample from the same Gaussian, we could also sample independently one Gaussian in each branch, since only one of them will be used.   
It can be seen as a form of commutativity of Gaussian sampling within a mixture-branching, allowing us to always sample locally when in presence of disjoint branches.

\noindent Axiom \textax{E5} is another axiom scheme, valid for any circuit $c$, which captures the distributivity of all operations over if-then-else gates: evaluating a function $f$ on two different inputs and later selecting only one of the outputs, is the same as first selecting the input and evaluating it on $f$. In categorical terms, if-then-else is \emph{natural}. Note that if $c=\RealComult$, this axiom also allows us to copy the if-then-else gate (like \textax{C1}, \textax{C2} allowed us to copy some other generators), and ensures that it is a deterministic operation. Similarly, if $c=\RealCounit$, this axioms allows us to discard if-then-else (like \textax{D1}, \textax{D2} allow us to discard all other generators); in other words, we can discard all three inputs of the if-then-else generator if we have discarded its output.

\noindent Axiom \textax{E6} encodes the simple fact that negating the guard of an if-then-else gate is equivalent to swapping the two branches. 

\noindent Axiom \textax{E7} states that a compound conditional with a conjunctive guard can be rewritten using nested if-then-else statements, where each Boolean variable is tested in sequence and the original branches are preserved accordingly.

\noindent Axiom \textax{E8} captures the fact that if the two branches of an if-then-else are the same, then we do not need an if-then-else in the first place (and can discard the guard).

\noindent Axiom \textax{E9} allows us to discard an if-then-else generator: similar to \textax{D1}, \textax{D2}, if we discard the output of some if-then-else, we might as well discard its guard and corresponding branches. 

\noindent Axiom \textax{E10} is the diagrammatic analogue of the well-known skew-associativity property of convex (aka barycentric) algebras~\cite{stone1949postulates}. To see this, recall (from Example~\ref{ex:mixture}) that the diagram 
 $ 
\InputIfFileExists{convex-mix-if.tikz}{}{\input{./tikz/convex-mix-if.tikz}}
$
behaves as a convex sum operation. Then, \textax{E10} allows us to re-associate different bracketings of convex sums, at the cost of changing the weights.

Crucially, our equational theory is \emph{sound}: any syntactic equality derivable from our axioms is a valid semantic equality. 
\begin{restatable}[Soundness]{theorem}{soundness}
	\label{thm:soundness}
	For all circuits, $c,d\colon v \to w$, if $c\CGMeq d$ then $\sem{c}=\sem{d}$.
\end{restatable}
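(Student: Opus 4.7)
The plan is to leverage the fact that $\sem{\cdot}$ is a symmetric monoidal functor from $\MixCirc$ to $\MixGauss$: since derivability $\CGMeq$ is generated by closing the axioms under reflexivity, symmetry, transitivity, and congruence with respect to $\poi$ and $\otimes$, compositionality of $\sem{\cdot}$ reduces soundness to checking that each axiom in Fig.~\ref{fig:eqs} is individually valid under the semantics. I would therefore proceed axiom by axiom, grouping them into classes of increasing difficulty, and handle each class uniformly.

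The comonoid axioms \textax{A1}--\textax{A3} and \textax{B1}--\textax{B3} are immediate from the definitions in Section~\ref{sec:semantics}: both copying and discarding are interpreted using Dirac deltas, and the required associativity, commutativity, and unitality reduce to trivial identities on the product structure of $\R^n$ and $\Bool^q$. Axioms \textax{C1} and \textax{C2} follow from the fact that the generators involved ($\RealWUnit$, $\RealMult$, $\Scalar{k}$, $\Foot$, $\Andgate$, $\Notgate$) are interpreted as \emph{deterministic} kernels---their outputs are Dirac deltas---hence they commute with copying. Axioms \textax{D1} and \textax{D2} hold because every generator denotes a normalised (conditional) distribution, so marginalising the output yields the unique distribution on $1$.

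For the \textax{E}-family, I would unpack the semantics of $\ifmix$ and compute directly. For \textax{E2}, instantiating the Boolean input to $\Flip{0}$ or $\Flip{1}$ collapses the mixture to a single branch. Axioms \textax{E1}, \textax{E3}, \textax{E6}, \textax{E7}, \textax{E8}, \textax{E9} are established similarly by case analysis on the guard: both sides evaluate to the same CG-mixture, and uniqueness of parameters (Proposition~\ref{prop:CGM-unique-params}) seals the equality. Axiom \textax{E5}---naturality of if-then-else---follows because the guard deterministically picks exactly one branch, so applying any kernel before or after branching produces the same overall kernel; this is formalised using the product-of-kernels formula for $\otimes$ in $\Stoch$ and bilinearity of composition over convex combinations. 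Axiom \textax{E10} reduces to the algebraic identity
\begin{equation*}
  p \cdot c_1 + (1-p)\bigl(q \cdot c_2 + (1-q) \cdot c_3\bigr) \;=\; \tilde{p}\bigl(\tilde{q}\cdot c_1 + (1-\tilde{q})\cdot c_2\bigr) + (1-\tilde{p})\cdot c_3
\end{equation*}
for the stated values of $\tilde{p}$ and $\tilde{q}$, which is a direct calculation once both sides are expanded as convex sums.

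The main obstacle is \textax{E4}: semantically, sampling a single $\NN$ and sharing the value across the two branches of an if-then-else must equal sampling two independent copies, one per branch. This is true because the guard deterministically selects exactly one branch, so the value in the unused branch is marginalised away; but formalising this requires writing both sides explicitly as CG-mixtures, carrying through the Gaussian composition formulas from Section~\ref{sec:semantics}, and observing that after marginalising the irrelevant noise the surviving mixture component has identical mean, covariance, and weight on both sides. Invoking Proposition~\ref{prop:CGM-unique-params} then yields the semantic equality. Once \textax{E4} is settled, assembling the full soundness statement by induction on the derivation is routine.
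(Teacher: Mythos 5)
Your proposal is correct and follows essentially the same strategy as the paper: reduce soundness to verifying each axiom individually (justified by $\sem{\cdot}$ being a symmetric monoidal functor, so derivability is compatible with composition), then check each axiom by direct computation on CG-mixtures, with \textax{E4} and \textax{E5} singled out as the only non-routine cases. The paper's proof likewise only writes out \textax{E4} and \textax{E5} explicitly---by case analysis on the Boolean guard and comparison of the resulting mixtures---and dismisses the remaining axioms as similar, so your more detailed grouping of the easy axioms is, if anything, slightly more thorough than what appears in the appendix.
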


\section{Completeness}
\label{sec:axiomatisation}
The main contribution of our work is the converse of Theorem~\ref{thm:soundness}, which states the \emph{completeness} of our theory for the chosen semantics. 

Our proof of completeness follows a normal form argument, where a syntactic representative is given for every semantic object in the image of the interpretation functor considered, later showing that when two circuits have the same semantics, they can always be transformed into their unique syntactic representative through a series of rewrite steps, following the equational axioms presented in Section~\ref{sec:equational-theory}. The core of the completeness argument is thus a normalisation proof: an explicit procedure to rewrite any given circuit into a normal form using only the axioms of our equational theory.

At a high-level, our completeness proof consists of two main steps, of which the first one is the most technically challenging. 

First, we explain how to normalise circuits of type $w\to \Realobj^n$, \emph{i.e.}, where the only possible outputs are real-valued random variables. Semantically, these correspond to mixtures of Gaussians (conditioned on their inputs, which can be both discrete or continuous). At a high level, the normalisation proof proceeds by structural induction: assuming that we have some circuit $d$ in normal form, we show how to normalise any circuit obtained by composing $d$ with any of the generators in our signature. 

Second, we address the normalisation of arbitrary circuits ---which can have both Boolean and real outputs--- by decomposing them into two parts: a purely Boolean subcircuit and another subcircuit of type $w\to \Realobj^n$. Finally, the completeness for Boolean circuits and the previous normalisation procedure for circuits of type $w\to \Realobj^n$ will allow us to obtain the general completeness result we are looking for. 

\begin{remark}\label{rmk:wlog}
	In what follows we will focus on circuits of type $\Boolobj^p\Realobj^m\to\Boolobj^q\Realobj^n$ without loss of generality. Indeed, for any arbitrary circuit $d\colon v\to w$, we can always pre-and post-compose it with the wire crossings ($\textcolor{gray}{\sym}$, $\textcolor{black}{\sym}$, $\symp{style={draw={booltype}}}{style={draw={realtype}}}$, $\symp{style={draw={realtype}}}{style={draw={booltype}}}$) to move all wires of type $\Boolobj$ before those of type $\Realobj$ (while preserving the order within the wires of each type), thereby obtaining a circuit $\CGterm{R(d)}{p}{m}{q}{n}$. This process is clearly reversible: it suffices to pre-and post-compose with the same symmetries in reverse to obtain the circuit $d$ with which we started. Moreover, for any two circuits $\CGterm{c,d}{p}{m}{q}{n}$, $c=d$ if and only if $R(c)=R(d)$. This is because, using the laws of SMCs, any axiom that we use to show that $c=d$ can be applied to show that $R(c)=R(d)$ and vice-versa.
\end{remark}

For clarity, and to simplify the overall proof, we break the definition of our normal form into parts. This highlights the interaction between the different components and allows us to define different normalisation procedures for circuits with different types, depending on whether they have Boolean inputs or outputs. First, the normal form for circuits with no Boolean inputs or outputs is a simple cascade of convex sums. 
\begin{definition}
  \label{def:cnf}
  A circuit $c\colon\Realobj^m\to\Realobj^n$ is in \emph{convex normal form} (CNF) if it is in the form
  \begin{center}
    
\InputIfFileExists{mixture-gaussians-nf.tikz}{}{\input{./tikz/mixture-gaussians-nf.tikz}}

  \end{center}
  where $p_0\in (0,1)$, $c_0:\Realobj^m\to \Realobj^n$ is in $\GaussCirc$, and $\GaussCircterm{c'}{m}{n}$ is in CNF or in $\GaussCirc$.
\end{definition}
\begin{definition}
  \label{def:mixgauss-nf}
  A circuit $d\colon\Boolobj^{p+1}\Realobj^m\to\Realobj^n$ is in \emph{normal form} (NF) if it is in the form inductively defined below,
  \begin{center}
    
\InputIfFileExists{cg-normal-form.tikz}{}{\input{./tikz/cg-normal-form.tikz}}

  \end{center}
  where \textsub{d}{i}, $i\in\{0,1\}$ are themselves in normal form, or in CNF (base case, see Definition~\ref{def:cnf}).
\end{definition}

Intuitively, \textsub{d}{1} corresponds with the circuit that results from $d$ conditioned on its first Boolean variable being true, and similarly with \textsub{d}{0}, where the same Boolean variable is assumed to be false instead.
The uniqueness of these normal forms stems from the unique characterisation of CG-mixtures by their parameters (Proposition~\ref{prop:CGM-unique-params}).
\begin{restatable}{proposition}{nfrealoutputunique}
	 \label{prop:nf-uniqueness}
	Any two circuits $c,d\colon\Boolobj^p\Realobj^m\to\Realobj^n$ in NF such that $\sem{c}=\sem{d}$, are equal.
\end{restatable}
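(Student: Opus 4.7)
The plan is to proceed by induction on the number $p$ of Boolean input wires, matching the inductive structure of the normal form itself. The two principal ingredients are Proposition~\ref{prop:mixture-uniqueness}, which guarantees that mixtures of Gaussians are uniquely characterised by their parameters, together with the completeness of the equational theory for the purely Gaussian fragment, which implies that any two Gaussian circuits with the same denotation coincide syntactically.

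For the base case $p = 0$, the circuit is in CNF (Definition~\ref{def:cnf}): a nested cascade of if-then-else gates with guards $\Flip{p_i}$ branching between Gaussian circuits $c_0,\dots,c_k$. Unfolding the semantics using Section~\ref{sec:semantics}, such a circuit denotes the conditional mixture $x \mapsto \sum_{i=0}^{k} w_i \cdot \sem{c_i}(x)$ where $w_i = p_i \prod_{j < i}(1 - p_j)$ for $i<k$ and $w_k = \prod_{j<k}(1-p_j)$. Since each $\sem{c_i}$ is an affine Gaussian $\Normal{n}{A_i x + \mu_i}{\Sigma_i}$, this is a conditional mixture of Gaussians in the sense of Proposition~\ref{prop:mixture-uniqueness}. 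Two CNFs with the same semantics therefore carry the same list of weights $w_i$ and the same Gaussian parameters $(A_i,\mu_i,\Sigma_i)$; from the $w_i$ one recovers the probabilities $p_i \in (0,1)$ defining the $\Flip{p_i}$ gates, so the outer if-then-else skeletons of $c$ and $d$ agree, and appealing to completeness of the Gaussian fragment on each branch gives $c_i = d_i$, whence $c = d$.

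For the inductive step $p \geq 1$, both $c$ and $d$ are of the form given by Definition~\ref{def:mixgauss-nf}: an if-then-else whose guard is the first Boolean input, with sub-circuits $c_0,c_1$ and $d_0,d_1$ respectively, each in NF (or in CNF, if $p = 1$) on $p - 1$ Boolean inputs. Using the semantics of the if-then-else, conditioning $\sem{c}$ on the first Boolean taking value $1$ (resp.\ $0$) yields exactly $\sem{c_1}$ (resp.\ $\sem{c_0}$), and likewise for $d$. The hypothesis $\sem{c}=\sem{d}$ therefore forces $\sem{c_i}=\sem{d_i}$ for $i\in\{0,1\}$, and the inductive hypothesis gives $c_i = d_i$, whence $c = d$.

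The main obstacle lies in the base case, specifically in matching a nested CNF cascade to the flat list of parameters produced by Proposition~\ref{prop:mixture-uniqueness}: one has to verify that the bijection between the syntactic nesting and the semantic mixture is canonical, taking into account potential subtleties when distinct syntactic components happen to coincide semantically, so that the reconstruction of the $\Flip{p_i}$ probabilities and of the ordering of components is unambiguous. Once this bookkeeping is dealt with, the inductive step is essentially immediate from the operational interpretation of the if-then-else as branching on its guard.
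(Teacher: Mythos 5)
Your overall strategy---induction on the nesting structure of the normal form, conditioning on the first Boolean input to peel off the sub-circuits, and grounding the base case in the uniqueness of mixture parameters---is the same as the paper's, and your inductive step is carried out correctly. The problem is precisely the point you defer as ``bookkeeping'' in your final paragraph: in the base case it is \emph{not} true that two CNFs with the same denotation carry the same ordered \emph{list} of weights and Gaussian parameters. Proposition~\ref{prop:mixture-uniqueness} identifies the parameters of a mixture only as an unordered collection, and nothing in Definition~\ref{def:cnf} imposes a canonical ordering of components or forbids a component from occurring twice with its weight split (compare the two-component cascade denoting $\tfrac12\Normal{1}{0}{1}+\tfrac12\Normal{1}{0}{1}$ with the single Gaussian circuit for $\Normal{1}{0}{1}$). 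Consequently two syntactically distinct CNFs can denote the same mixture, your reconstruction of the probabilities $p_i$ of the $\Flip{p_i}$ gates from the $w_i$ is only well defined once an ordering has been fixed, and the conclusion ``$c=d$'' cannot be literal syntactic identity---it must be derivable equality $c\CGMeq d$ in the theory.

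This is exactly what the paper isolates as Proposition~\ref{prop:cnf-uniqueness}, and it is the real content of the base case: explicit equational derivations showing (i) that any two adjacent components of a CNF cascade can be swapped, with the weights re-adjusted via the skew-associativity axiom \textax{E10} together with \textax{E6}, so that the two cascades can be brought into the same order; and (ii) that two occurrences of a semantically equal component can be collapsed into one using \textax{E10}, \textax{E5}, \textax{A1}, \textax{E8} and \textax{D2}, so that the two cascades have the same length. Only after these rewrites can one line the cascades up component-by-component, recover the $p_i$ from the $w_i$, and invoke completeness of the Gaussian fragment on the leaves. Without (i) and (ii) your base case does not go through; you should either prove these two rewriting facts or factor them out as a separate CNF-uniqueness lemma, as the paper does.
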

  Intuitively, our normal form represents a circuit as a \emph{conditional probability distribution tree} \cite[Chapter~5]{Koller09}---a rooted tree where leaf nodes are labelled with distributions and interior nodes correspond to parent variables, with each outgoing edge associated with a unique variable assignment. This global structure, which captures the entire distribution, is mirrored by our circuits: convex combinations of (multivariate) Gaussians take the role of the leaves, and if-then-else gates ``split'' the tree by conditioning on discrete variables.

The axioms of $\CGMTh$ are sufficient to rewrite any given circuit without any Boolean outputs into one in normal form. The proof of this is the main technical contribution of our paper, and relies on a lengthy structural induction.
\begin{restatable}{theorem}{normalisationrealoutput}
  \label{thrm:completness-bprm-to-rn}
	Any circuit of type $\Boolobj^p\Realobj^m\to\Realobj^n$ is equal to one in normal form.
\end{restatable}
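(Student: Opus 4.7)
The plan is to proceed by structural induction on the circuit $c\colon\Boolobj^p\Realobj^m\to\Realobj^n$, following the inductive presentation of $\MixCirc$ as a free prop. The base cases are the individual generators, identities and symmetries: each can be exhibited in NF by embedding it in a depth-$p$ if-then-else tree with identical CNF leaves, which can always be built by repeatedly expanding a CNF via axiom \textax{E8} in reverse. For the inductive step, $c$ is either $c_1\poi c_2$ or $c_1\otimes c_2$, and the induction hypothesis supplies NFs for $c_1$ and $c_2$; it then suffices to show that composing two NFs (sequentially or in parallel) can itself be rewritten to NF. Parallel composition is reduced to sequential composition by the SMC laws. The sequential case is handled by a secondary induction, whereby I fix one of the NFs, say $d$, and walk through the generators of the other circuit one by one, showing that $d\poi g$ (or $g\poi d$) is normalisable for every generator $g$ of the appropriate type.

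The case analysis for this secondary induction splits by the kind of generator. For post-composition of a NF with a real-valued generator in $\{\RealMult, \RealComult, \Scalar{k}, \Foot, \NN, \RealWUnit, \RealCounit\}$, the key tool is axiom \textax{E5}, which lets us push the generator through every if-then-else node down into each CNF leaf; once the generator sits at a leaf, completeness of $\GaussCirc$~\cite{gqa} is invoked locally to restore a Gaussian circuit, and axioms \textax{A1}-\textax{A3} allow us to re-group copies. Pre-composition with a Gaussian generator on a real input wire works analogously, using \textax{E5} to distribute through the Boolean conditioning tree before applying Gaussian completeness at each leaf. Pre-composition with a Boolean generator on a Boolean input wire acts directly on the top of the NF tree: $\Notgate$ swaps the two branches via \textax{E6}; $\Andgate$ is absorbed via the split axiom \textax{E7}; $\BoolComult$ on a guard used twice is absorbed via \textax{E1}; $\BoolCounit$ on a guard collapses its two branches through \textax{E8}; and $\Flip{p}$ applied to a guard converts a Boolean split into a probabilistic mixture using \textax{E2} together with the convex-sum pattern already exhibited in Example~\ref{ex:mixture}.

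The main obstacle is post-composition with the if-then-else generator itself, which requires turning a convex combination of two existing NFs into a single NF in the cascaded CNF format of Definition~\ref{def:cnf}. In general the two input NFs may be discriminated by different sequences of Boolean variables and may involve shared Gaussian samples across branches; accordingly, one first aligns their Boolean trees by duplicating identical branches (inverting \textax{E8}), then commutes independent samples across branches via \textax{E4}, uses \textax{E3} to reorder nested if-then-elses, and finally applies the skew-associativity axiom \textax{E10} to re-bracket the resulting convex sum into the strictly right-leaning cascade demanded by CNF. The sheer combinatorial interplay between the Boolean-conditioning tree at the top of the NF and the mixture cascade at its leaves is where the bulk of the technical work resides; the entire list \textax{E1}--\textax{E10} was in fact designed precisely to enable these rearrangements.

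To make the overall induction well-founded, I would exhibit a termination measure, for instance a lexicographic ordering on (i) the number of generators not yet absorbed into the NF skeleton, (ii) the depth of if-then-else gates whose guards are not among the Boolean inputs, and (iii) the size of the Gaussian sub-circuit sitting at each leaf that remains to be normalised; each rewrite step strictly decreases this measure. Completeness of $\PBCirc$~\cite{probcirc} and of $\GaussCirc$~\cite{gqa} can then be used as black boxes for the purely Boolean and purely Gaussian sub-fragments, so that the axioms of $\CGMTh$ need only carry the burden of mediating the interaction between the two, which is exactly what the \textax{E}-axioms do.
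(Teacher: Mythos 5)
Your overall strategy---absorb the generators of a circuit one at a time into an inductively maintained normal form, using the \textax{E}-axioms to mediate and treating the Boolean and Gaussian fragments as black boxes---is the same as the paper's. However, the outer induction as you set it up does not typecheck. Normal form (Definition~\ref{def:mixgauss-nf}) is only defined for circuits of output type $\Realobj^n$, yet in a structural induction over the free prop the base cases include generators such as $\Andgate$ and $\BoolComult$, and the left factor $c_1$ of a sequential composite $c_1\poi c_2$ can have arbitrary (in particular Boolean) output type; for these the phrase ``the induction hypothesis supplies an NF'' is meaningless. The paper sidesteps this by inducting on generator layers peeled from the \emph{input} side of a circuit whose output type stays fixed at $\Realobj^n$, i.e., by showing that $(g\otimes\id)\poi d$ normalises whenever $d$ is already in NF---which is exactly your ``secondary induction'', so the repair is implicit in your plan, but it must be the primary induction rather than a sub-step of an ill-typed structural recursion.

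More substantively, two of the hardest cases are either misattributed or missing. First, pre-composing $\NN$ onto a real input of an NF is not handled by \textax{E5}: the NF duplicates its real inputs via $\RealComult$ before feeding them to the two branches, so $\NN$ hits a copier, and $\NN$ is not copiable ($\NN\poi\RealComult$ is a shared sample, not two independent ones). The correct tool is \textax{E4}, propagated through the whole if-then-else tree by an induction on the number of guards (the paper's Lemma~\ref{lemma:g-copy}); you invoke \textax{E4} only in the if-then-else case, where it is not the crux. Second, absorbing $\Andgate$ via \textax{E7} duplicates one branch of the conditional, and that branch is an entire NF subcircuit sharing its inputs with the other branch; one therefore needs a lemma stating that arbitrary NF (and CNF) circuits can be pushed past $\BoolComult\otimes\RealComult$, which in the paper is a multi-stage nested induction with the non-copiable $\NN$ again the delicate base case. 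Your sketch treats $\Andgate$ as a one-axiom step and never confronts this duplication, which is where the bulk of the technical work of the theorem actually lies.
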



The normal form for arbitrary circuits below is a diagrammatic form of \emph{disintegration}, mirroring how a joint distribution $p(x,y)$ can be decomposed (that is, disintegrated) into the product $p(x)p(y|x)$ of a marginal and a conditional distribution. Any circuit $c$ can be decomposed in the same way as the composition of a Boolean circuit $b$ and a mixed circuit $d$ with only real-valued output. Semantically, the $\sem{d}$ is the CG-mixture obtained by conditioning $\sem{c}$ on  its Boolean output.

\begin{definition}
  \label{def:normal-form}
  A circuit $c\colon \CGtype{p}{m}{q}{n}$ is in \emph{normal form} (NF) if there exists some Boolean circuit $b$ and some mixed circuit $d$ in normal form (Definition~\ref{def:mixgauss-nf}) such that
  \begin{center}
    $\mixcircuit{c}{p}{q}{m}{n} = 
\InputIfFileExists{general-nf.tikz}{}{\input{./tikz/general-nf.tikz}}
$
  \end{center}
  which satisfy $\sem{d}(\cdot | a', a,x)= \Normal{n}{0}{0}$ if $\sem{b}(a'|a) = 0$, for $a\in\Bool^p, a'\in\Bool^q$, and $x\in\R^m$.
\end{definition}
The last condition is here to deal with edge cases in which we are conditioning on events of measure zero---a classic issue when defining disintegrations. When $\sem{b}(a'|a) = 0$, the conditional $\sem{d}$ is not well-defined when its Boolean input is $(a',a)$; therefore, any circuit will do to represent this case. However, to guarantee the uniqueness of the normal form, we need a convention: here, we choose $\Normal{n}{0}{0}$ or, equivalently, a Dirac concentrated at $0\in\R^n$.
\begin{restatable}{proposition}{nfunique}
	\label{prop:nf-circuits-equal}
	Normal forms are unique, \emph{i.e.}, for any two circuits $c,c'\colon\CGtype{p}{m}{q}{n}$ in normal form, if $\sem{c}=\sem{c'}$ then $c\CGMeq c'$. 
\end{restatable}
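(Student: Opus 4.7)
The plan is to disintegrate the semantic equality $\sem{c}=\sem{c'}$ into two independent pieces of data---the Boolean marginal on the discrete output, and the real-valued conditional given the Boolean data---each of which is already uniquely represented in its own fragment. By Definition~\ref{def:normal-form}, both circuits take the canonical shape where a Boolean circuit produces the $q$ discrete outputs, which are then copied; one copy exits, the other feeds, together with the original Boolean and real inputs, into a circuit in the normal form of Definition~\ref{def:mixgauss-nf}. Call these pieces $b, d$ for $c$ and $b', d'$ for $c'$. Compositional semantics then gives
\[
\sem{c}(a',Y\mid a,x) \;=\; \sem{b}(a'\mid a)\cdot \sem{d}(Y\mid a',a,x),
\]
and likewise for $c'$, for all $a\in\Bool^p$, $a'\in\Bool^q$, $x\in\R^m$ and $Y\in\Borel{\R^n}$.

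Next I recover the two factors separately. Marginalising the real output (taking $Y=\R^n$) and using that $\sem{d}(\R^n\mid a',a,x)=1$ yields $\sem{b}(a'\mid a)=\sem{b'}(a'\mid a)$ for all $a,a'$; by the completeness of the purely Boolean fragment~\cite{probcirc}, this gives $b\CGMeq b'$. For the conditional factor, on inputs $(a',a,x)$ with $\sem{b}(a'\mid a)>0$ dividing out forces $\sem{d}(\cdot\mid a',a,x)=\sem{d'}(\cdot\mid a',a,x)$, while on inputs with $\sem{b}(a'\mid a)=0$ the edge-case clause of Definition~\ref{def:normal-form} fixes both sides to $\Normal{n}{0}{0}$. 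Hence $\sem{d}=\sem{d'}$ as CG-mixtures, and Proposition~\ref{prop:nf-uniqueness} gives $d\CGMeq d'$. Congruence of $\CGMeq$ with respect to sequential and parallel composition (inherited from the SMC laws) then lets me substitute $b'$ for $b$ and $d'$ for $d$ inside the shared normal-form template, concluding $c\CGMeq c'$.

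The main obstacle is the treatment of events of probability zero. Without a convention fixing the value of $\sem{d}(\cdot\mid a',a,x)$ whenever $\sem{b}(a'\mid a)=0$, the conditional would be semantically underdetermined on such inputs and distinct normal forms could share the same global semantics, so the step $\sem{d}=\sem{d'}$ would fail off the support of $\sem{b}$. The edge-case condition in Definition~\ref{def:normal-form} is exactly what pins down a canonical representative on those fibres and makes the disintegration well-defined; the real work is thus done upstream, both in formulating the definition and in ensuring that the normalisation procedure feeding this proposition enforces the same convention syntactically.
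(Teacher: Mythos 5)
Your proposal is correct and follows essentially the same route as the paper's proof: recover $b$ by marginalising (the paper does this diagrammatically, by post-composing with the real discarder and invoking discardability), then recover $d$ via the same case split on $\sem{b}(a'\mid a)=0$ using the edge-case convention of Definition~\ref{def:normal-form}. The only cosmetic difference is that you conclude $d\CGMeq d'$ directly from Proposition~\ref{prop:nf-uniqueness}, whereas the paper routes through its completeness theorem for circuits without Boolean outputs, which is itself derived from that proposition.
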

The following theorem relies extensively on Theorem~\ref{thrm:completness-bprm-to-rn} (normalisation of circuits without Boolean outputs); in this more general case, we simply need to check that the side condition of Definition~\ref{def:normal-form} (guaranteeing the uniqueness of the normal form) is satisfied. 
\begin{restatable}{theorem}{normalisationgeneral}
  \label{theorem:mixcirc-nf}
	Every circuit $\CGtype{p}{m}{q}{n}$ is equal to one in normal form.
\end{restatable}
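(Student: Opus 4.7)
The strategy is a \emph{diagrammatic disintegration}: to factor any mixed circuit into a Boolean sub-circuit computing the marginal distribution of its Boolean outputs, followed by a real-output-only sub-circuit (in the normal form of Definition~\ref{def:mixgauss-nf}) computing the conditional distribution of the real outputs. This reduction is semantically natural because, by the defining shape~\eqref{eq:mixgauss-map} of CG-mixtures, the Boolean output distribution depends only on the Boolean inputs (the parameter $\varphi$ is independent of the real input~$x$). The remaining real-output part is exactly what Theorem~\ref{thrm:completness-bprm-to-rn} can normalise, so the existing machinery does most of the work.

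Given $c\colon\CGtype{p}{m}{q}{n}$ with $\sem{c}(a,x)=\sum_{i}\varphi(i,\cdot\mid a)\cdot\Normal{n}{A_i(a)x+\mu_i(a)}{\Sigma_i(a)}$, I would first build the candidate normal form semantically. Set the marginal $\psi(a'\mid a):=\sum_{i}\varphi(i,a'\mid a)$ and the conditional $k(\cdot\mid a',a,x):=\tfrac{1}{\psi(a'\mid a)}\sum_{i}\varphi(i,a'\mid a)\cdot\Normal{n}{A_i(a)x+\mu_i(a)}{\Sigma_i(a)}$ when $\psi(a'\mid a)>0$, and adopt the convention $k(\cdot\mid a',a,x):=\Normal{n}{0}{0}$ otherwise. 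By completeness of the Boolean fragment~\cite{probcirc}, some Boolean circuit $b$ realises $\psi$; by Proposition~\ref{prop:universality}, some mixed circuit realises $k$, and by Theorem~\ref{thrm:completness-bprm-to-rn} we may take a representative $d$ in the NF of Definition~\ref{def:mixgauss-nf}. Wiring $b$ and $d$ as in Definition~\ref{def:normal-form} yields a candidate $c_{nf}$ whose semantics equals $\sem{c}$ (semantic disintegration) and which satisfies the side condition by construction.

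The remaining step is to establish the syntactic equality $c\CGMeq c_{nf}$. The plan here is to first rewrite $c$ so as to expose its Boolean outputs as a separable sub-circuit: pre-compose with $\BoolComult$ on each Boolean output wire (one copy is emitted, the other is routed internally), and then use the interaction axioms---notably naturality of the if-then-else gate \textax{E5}, discarding of if-then-else \textax{E9}, the copy-and-discard axioms \textax{C2}, \textax{D2} for Boolean generators, and where needed \textax{E2} and \textax{E8} to collapse if-then-elses with constant or equal branches---to drive the diagram into the shape of Definition~\ref{def:normal-form}. Once in this shape, the upper Boolean half can be equated to the canonical $b$ by completeness of the probabilistic Boolean fragment~\cite{probcirc}, and the lower mixed half (which now has only real outputs) can be equated to the canonical $d$ by Theorem~\ref{thrm:completness-bprm-to-rn}.

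The principal obstacle is precisely this syntactic extraction: showing that any circuit can be algorithmically rearranged to isolate its Boolean output. Semantically the property is clear---the Boolean output of any CG-mixture is independent of the real inputs---but witnessing this diagrammatically requires careful sequencing of the interaction axioms \textax{E1}--\textax{E10}, which are the sole tools for manipulating $\ifmix$ and hence the only bridge across the Boolean/real interface. A secondary difficulty is enforcing the side condition on zero-probability Boolean outputs: since it is a semantic statement, it must be respected by the chosen syntactic representative of $d$ at every step of the rewriting, including when Theorem~\ref{thrm:completness-bprm-to-rn} is invoked. Together these considerations make the decomposition delicate even though its overall shape---disintegrate, normalise each piece, recombine---is conceptually clean.
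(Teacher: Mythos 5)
Your overall architecture---disintegrate into a Boolean marginal followed by a real-output conditional, normalise the latter, and enforce the $\Normal{n}{0}{0}$ convention on zero-probability branches---is exactly the architecture of the paper's normal form, and your semantic construction of the candidate $c_{nf}$ is unobjectionable. The problem is that your proof stops where the theorem actually begins. You state that ``the principal obstacle is precisely this syntactic extraction'' of the Boolean output, and then you only list the axioms you expect to use (\textax{E5}, \textax{E9}, \textax{C2}, \textax{D2}, \textax{E2}, \textax{E8}) without exhibiting a rewriting procedure that terminates in the shape of Definition~\ref{def:normal-form}. Since completeness is not yet available at this point---this theorem is an ingredient of the completeness proof---knowing $\sem{c}=\sem{c_{nf}}$ gives you nothing on its own; the entire content of the statement is the derivation $c\CGMeq c_{nf}$, and that derivation is the part you have deferred. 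Equating your ``upper Boolean half'' to the canonical $b$ and the ``lower mixed half'' to the canonical $d$ only makes sense once the circuit has already been put in the factored shape, which is the step you have not supplied.

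The paper closes this gap by structural induction on the circuit itself: it assumes the circuit is already in the form of Definition~\ref{def:normal-form} and shows that post-composing with each generator of the signature---split into the three cases of a Boolean generator, a Gaussian generator, and the if-then-else---can be pushed back into that shape, invoking Theorem~\ref{thrm:completness-bprm-to-rn} to re-normalise the real-output component $d$ after each step, and re-verifying the side condition $\sem{d}(\cdot\mid a',a,x)=\Normal{n}{0}{0}$ whenever $\sem{b}(a'\mid a)=0$ by an explicit semantic computation. The non-copiable generator $\Flip{p}$ is the delicate case and is handled by a genuine diagrammatic disintegration of the Boolean part using Boolean completeness; your single ``global rewriting pass'' would have to reprove essentially all of these case analyses anyway. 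Your own observation that the side condition ``must be respected at every step of the rewriting'' is precisely why the paper carries it as part of the induction hypothesis. To repair the proof, replace the unspecified rewriting pass with this generator-by-generator induction (or some other concrete, terminating strategy), verifying the side condition after each rewrite.
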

\noindent Putting together the last two results, we can prove the completeness of our theory. 
\begin{theorem}[Completeness]
  For any two circuits $c,d$, if $\sem{c}=\sem{d}$ then $c\CGMeq d$.
\end{theorem}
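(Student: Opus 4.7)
The plan is to deduce completeness as a direct corollary of the two main technical pillars already established in this section: the existence of normal forms (Theorem~\ref{theorem:mixcirc-nf}) and their uniqueness (Proposition~\ref{prop:nf-circuits-equal}), together with soundness (Theorem~\ref{thm:soundness}). Since all the heavy lifting has been done in proving these results, the argument here is a straightforward chaining of implications.

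First, I would use Remark~\ref{rmk:wlog} to reduce to the case where $c$ and $d$ both have type $\Boolobj^p\Realobj^m \to \Boolobj^q\Realobj^n$. Concretely, if $c,d\colon v\to w$ are arbitrary circuits with $\sem{c}=\sem{d}$, then pre- and post-composing with the appropriate symmetries yields circuits $R(c),R(d)\colon \Boolobj^p\Realobj^m \to \Boolobj^q\Realobj^n$ with $\sem{R(c)}=\sem{R(d)}$ (by functoriality of $\sem{\cdot}$), and $c\CGMeq d$ iff $R(c)\CGMeq R(d)$ (since the rewiring is reversible using only the laws of SMCs). So it suffices to prove completeness at the restricted type.

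Next, I would apply Theorem~\ref{theorem:mixcirc-nf} to obtain circuits $c',d'$ in normal form (in the sense of Definition~\ref{def:normal-form}) such that $R(c)\CGMeq c'$ and $R(d)\CGMeq d'$. By soundness (Theorem~\ref{thm:soundness}), $\sem{c'}=\sem{R(c)}=\sem{R(d)}=\sem{d'}$. Proposition~\ref{prop:nf-circuits-equal} then gives $c'\CGMeq d'$, and stitching the three equalities together yields $R(c)\CGMeq R(d)$, which by the first step transfers back to $c\CGMeq d$.

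There is no genuine obstacle left at this stage, since every technical difficulty has been absorbed into the three cited results: the structural induction for Theorem~\ref{theorem:mixcirc-nf} (the true workhorse of the completeness proof) and the uniqueness argument of Proposition~\ref{prop:nf-circuits-equal} (which in turn rests on the uniqueness-of-parameters result for CG-mixtures, Proposition~\ref{prop:CGM-unique-params}). The only minor care required is to ensure that the soundness invocation is legitimate, i.e.\ that the axioms used inside the normalisation theorem are semantics-preserving; this is guaranteed by Theorem~\ref{thm:soundness}, which applies uniformly to all axioms of $\CGMTh$.
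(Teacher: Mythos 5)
Your proof is correct and follows exactly the same route as the paper's: reduce to the type $\CGtype{p}{m}{q}{n}$ via Remark~\ref{rmk:wlog}, normalise both circuits using Theorem~\ref{theorem:mixcirc-nf}, transfer the semantic equality to the normal forms by soundness, and conclude via the uniqueness of normal forms (Proposition~\ref{prop:nf-circuits-equal}) and transitivity. Your added remark that $\sem{R(c)}=\sem{R(d)}$ follows by functoriality is a small but welcome explicit justification that the paper leaves implicit.
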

\begin{proof}
	By Remark~\ref{rmk:wlog}, we can assume wlog that $c,d$ are two circuits of type $\CGtype{p}{m}{q}{n}$. By Theorem~\ref{theorem:mixcirc-nf}, these are equal to circuits $c'$ and $d'$ in normal form. Then, by soundness, if $\sem{c}=\sem{d}$, we also have $\sem{c'}=\sem{d'}$ and therefore $c'\CGMeq d'$, by Proposition~\ref{prop:nf-circuits-equal}, since both circuits are in normal form. Finally, by transitivity of equality, $c\CGMeq d$. 
\end{proof}

\section{Conclusions}
\label{sec:conclusions}

We presented a sound and complete axiomatisation of equivalence for Conditional Gaussian mixtures. We achieved this result through a calculus of string diagrams, whose semantics target morphisms in $\MixGauss$, a suitable category that combines discrete and Gaussian probability, as well as the complex interactions between them.

While probabilistic graphical models have long provided a flexible framework for specifying decision systems that operate in environments characterised by uncertainty \cite{Bernardo00,Koller09,lauritzenbook,Pearl88,Pearl09}, it is only in recent work that the notion of Markov categories \cite{Fritz_SyntheticApproach} and its diagrammatic syntax have been applied to the formalisation of these graphical models \cite{fong2013causaltheoriescategoricalperspective,jacobs2018logicalessentialsbayesianreasoning,jacobs2019causalinferencestringdiagram,lorenzin2025algebraicapproachmoralisationtriangulation}. Our work extends this line of research by giving a diagrammatic equational theory capable of capturing CG-mixtures, providing compositional semantics and a detailed image of the internal construction of the distribution represented.

We plan to extend this work in two orthogonal directions. First, we aim to extend our diagrammatic calculus with a primitive for \emph{conditioning}, enabling us to incorporate observed data into our models. An axiomatisation for CG-mixtures with conditioning would allow us to implement inference---particularly, parameter learning---as equational reasoning. 
Second, we plan to generalise our results to a broader classes of mixture models. Indeed, the only property of Gaussians on which our completeness proof depends is that mixtures of Gaussians are uniquely determined by their parameters (Proposition~\ref{prop:mixture-uniqueness}). This property underpins our definition of the normal form we use to show completeness. The rest of the proof is modular and can be adapted to other subcategories of stochastic maps, such as exponential or Poisson distributions, provided that similar characterisation results hold for their mixtures.

\newpage

\bibliography{refs}

\newpage

\appendix

\section{Appendix to Section~\ref{sec:background}}
\label{sec:app-cgp}
We restate the definitions and results from Section~\ref{sec:background}, providing proofs whenever these were omitted in the main text.

\begin{proposition}
  \label{prop:unique-params-univariate}
  Mixtures of univariate Gaussians are uniquely determined by their parameters.
\end{proposition}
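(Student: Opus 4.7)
The plan is to prove uniqueness via characteristic functions. Suppose two convex combinations
\[
\sum_{i=1}^k p_i\,\mathcal{N}(\mu_i,\sigma_i^2) \;=\; \sum_{j=1}^\ell q_j\,\mathcal{N}(\nu_j,\tau_j^2)
\]
agree as measures on $\mathbb{R}$. Since the characteristic function uniquely determines a probability measure, and the characteristic function of $\mathcal{N}(\mu,\sigma^2)$ is $\exp(i\mu t - \sigma^2 t^2/2)$, equating characteristic functions and bringing everything to one side yields an identity
\[
\sum_{m=1}^N c_m \exp\!\bigl(i\alpha_m t - \tfrac{1}{2}\beta_m t^2\bigr) \;=\; 0 \qquad \text{for all } t\in\mathbb{R},
\]
where the pairs $(\alpha_m,\beta_m)\in\mathbb{R}\times[0,\infty)$ are pairwise distinct and $c_m\in\mathbb{R}$. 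The claim reduces to showing each $c_m = 0$: once this is established, the positivity of the original weights (summing to one on each side) together with distinctness of the component pairs forces the two representations to coincide term by term.

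The key step is to stratify by variance. Let $\beta^* := \min_m \beta_m$ and multiply the identity by $\exp(\beta^* t^2/2)$. Terms with $\beta_m > \beta^*$ then carry a factor $\exp(-(\beta_m - \beta^*)t^2/2)$ that decays superexponentially as $|t|\to\infty$, while those with $\beta_m = \beta^*$ remain bounded oscillating exponentials $c_m \exp(i\alpha_m t)$. Hence
\[
\sum_{m:\,\beta_m = \beta^*} c_m \exp(i\alpha_m t) \;\longrightarrow\; 0 \quad \text{as } |t|\to\infty.
\]
Since the $\alpha_m$ in this residual sum are pairwise distinct, one concludes $c_m = 0$ for every such index; then one iterates the argument on the remaining terms, whose minimal variance is strictly larger. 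After finitely many iterations all $c_m$ vanish.

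The main obstacle is the final implication: extracting $c_m=0$ from the fact that a linear combination of distinct pure oscillations $\sum c_m e^{i\alpha_m t}$ tends to zero at infinity. The cleanest route is to compute the long-time average
\[
\frac{1}{2T}\int_{-T}^{T} \left|\sum_m c_m e^{i\alpha_m t}\right|^2 dt \;\xrightarrow[T\to\infty]{}\; \sum_m |c_m|^2,
\]
since cross terms $c_m\overline{c_{m'}}\,e^{i(\alpha_m-\alpha_{m'})t}$ with $m\neq m'$ average to zero. Convergence to zero of the integrand then forces $|c_m|=0$ for all $m$. An alternative avoiding time averages is to extend the identity to complex $t$ and compare the growth rates of the different Gaussian characteristic functions in the complex plane, which separates components of distinct variance and then of distinct mean. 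The degenerate case $\sigma_i=0$ (Dirac components) fits seamlessly into this scheme as $\beta_m = 0$.
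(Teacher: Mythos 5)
Your proof is correct, but it takes a genuinely different route from the paper's. The paper works in the spatial domain with probability \emph{densities}: it orders the components by increasing variance (breaking ties by mean), examines the ratio $f(x)/f_{\mu_n,\sigma_n}(x)$ of the mixture density to the density of the most spread-out component as $x\to\infty$, shows this ratio converges to that component's weight while any other choice of $(\mu,\sigma)$ makes the limit degenerate, and then peels off components one at a time by induction. You instead work in the Fourier domain: you pass to characteristic functions, stratify by the \emph{smallest} variance, use the superexponential decay of the higher-variance terms to isolate the minimal stratum as a trigonometric sum tending to zero, and kill it with the mean-square average $\frac{1}{2T}\int_{-T}^{T}|\sum_m c_m e^{i\alpha_m t}|^2\,dt \to \sum_m |c_m|^2$ (linear independence of characters). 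Both arguments are sound, but yours has a concrete advantage in this paper's setting: degenerate components $\sigma=0$ (Diracs) arise unavoidably --- the semantics interprets all deterministic generators as $\Normal{n}{\mu}{0}$, and the Cram\'er--Wold projection used to lift the univariate result to the multivariate case can produce zero-variance marginals even from nondegenerate Gaussians --- and your characteristic-function argument absorbs these seamlessly as $\beta_m=0$, whereas the paper's density-ratio argument does not literally apply to components without a density and would need a separate patch. The trade-off is that your route invokes a little more machinery (uniqueness of characteristic functions, Weyl/Bohr-type averaging) where the paper's is an elementary tail-asymptotics computation.
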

\begin{proof}
  We write $f_{\mu,\sigma}$ as a shorthand for the probability density function of a univariate Gaussian distribution with mean $\mu$ and standard deviation $\sigma$. Let $p,q:X\to[0,1]$ be two finitely supported distributions on $X$. We want to show that for two arbitrary mixtures of Gaussians, $\sum_{i=1}^np_if_{\mu,\sigma,i}=\sum_{i=1}^mq_if_{\nu,\tau,i}$, these are equal if and only if $n=m$ and $(p,\mu,\sigma)_i=(q,\nu,\tau)_i$ for all $i\in\{1,\dots,n\}$.

  We will consider our mixtures to be increasing convex ordered. Intuitively, this entails that the first components in the mixture are the less spread and, amongst equally spread Gaussians, those having their means centered on lower values of $\Real$ are positioned first. Formally, given $f_{\mu_1,\sigma_1},f_{\mu_2,\sigma_2}$ two Gaussians, $f_{\mu_1,\sigma_1}$ is below $f_{\mu_2,\sigma_2}$ and we write it as $f_{\mu_1,\sigma_1}\leq f_{\mu_2,\sigma_2}$ if and only if $\sigma_1\leq\sigma_2$ and $\mu_1\leq\mu_2$ when $\sigma_1=\sigma_2$.

  We now examine what happens to the most spread-out component in the mixture, \emph{i.e.}, the last component according to our ordering, $p_nf_{\mu,\sigma,n}$, whenever its input, $x$, tends to infinity. To this end, we write the mixture as
  \[
  f(x) = \sum_{i=1}^n p_i\cdot\left(f_{\mu,\sigma}(x)\right)_i = p_n\cdot\left(f_{\mu,\sigma}(x)\right)_n+\sum_{i=1}^{n-1}p_i\cdot\left(f_{\mu,\sigma}(x)\right)_i
  \]
  and hence,
  \[
  \frac{f(x)}{\left(f_{\mu,\sigma}(x)\right)_n} = p_n + \sum_{i=1}^{n-1}p_i\left(\frac{\left(f_{\mu,\sigma}(x)\right)_i}{\left(f_{\mu,\sigma}(x)\right)_n}\right).
  \]

  We have that every $f_{\mu,\sigma,i}$ is a Gaussian and thus, the density of the above is given by
  \begin{align*}
    &= p_n + \sum_{i=1}^{n-1} p_i \left(\frac{\left\{2\pi\sigma_i^2\right\}^{1/2}\exp\left\{-\frac{1}{2}\frac{\left(x-\mu_i\right)^2}{\sigma_i^2}\right\}}{\left\{2\pi\sigma_n^2\right\}^{1/2}\exp\left\{-\frac{1}{2}\frac{\left(x-\mu_n\right)^2}{\sigma_n^2}\right\}}\right)\\
    &= p_n + \sum_{i=1}^{n-1} p_i \frac{\left\{2\pi\sigma_i^2\right\}^{1/2}}{\left\{2\pi\sigma_n^2\right\}^{1/2}}\exp\left\{-\frac{1}{2}\left(\frac{\left(x-\mu_i\right)^2}{\sigma_i^2}\right)-\left(-\frac{1}{2}\left(\frac{\left(x-\mu_n\right)^2}{2\sigma_n^2}\right)\right)\right\}\\
    &= p_n + \frac{1}{\left\{2\pi\sigma_n^2\right\}^{1/2}} \sum_{i=1}^{n-1} p_i\left\{2\pi\sigma_i^2\right\}^{1/2}\exp\left\{\frac{\left(x-\mu_i\right)^2}{2\sigma_i^2}-\frac{\left(x-\mu_n\right)^2}{\sigma_n^2}\right\}
  \end{align*}
  where, expanding the exponential in the last expression yields
  \[
  \exp\left\{\frac{(\sigma_i^2-\sigma_n^2)x^2+2(\mu_i\sigma_n^2-\mu_n\sigma_i^2)x+(\mu_n^2\sigma_i^2-\mu_i^2\sigma_n^2)}{2\sigma_i^2\sigma_n^2}\right\}.
  \]
  Hence, we have that the limiting value of the exponentials is governed by the difference $\sigma_i^2-\sigma_n^2$, which is less or equal to 0, given that the mixture is increasing convex ordered. Note that, if the standard deviation of these two components is the same, the exponential will be governed by the second factor, where $\mu_i\sigma_n^2-\mu_n\sigma_i^2=\mu_i-\mu_n<0$, and thus, the limit of the exponential is 0 for any component of the mixutre. From here,
  \[
  \lim_{x\to\infty}\frac{f(x)}{\left(f_{\mu,\sigma}(x)\right)_n}=p_n.
  \]
  Had we taken any pair of values other than $(\mu_n,\sigma_n)$, our limit would have diverged or gotten to 0. From here, $(\mu_n,\sigma_n)$ is the only parameter for which we converge to the weight of the n-th component, $p_n$. We then have that, given an arbitrary mixture of $n$ (distinct) Gaussian components, this is determined by its last component and, inductively, we get that if
  \[
  \sum_{i=1}^n p_if_{\mu,\sigma,i}=\sum_{i=1}^mq_if_{\nu,\tau,i}
  \]
  then $n=m$ and $(p,\mu,\sigma)_i=(q,\nu,\tau)_i$ for all $i\in\{1,\dots,n\}$.
\end{proof}

\uniqueparamsmultigaussians*
\begin{proof}
  This follows from the unique characterisation for mixtures of univariate Gaussians above. Let $\textbf{X}$ be a random vector in $\mathbb{R}^k$, a mixture of multivariate Gaussians given by $\sum_{i=1}^k p_i\cdot\mathcal{N}_k({\mu}_i,\Sigma_i)$. Then, for any vector in ${y}\in\mathbb{R}^k$, ${y}^T\cdot\textbf{X}\in\mathbb{R}^k$ follows a univariate mixture of Gaussians. We have that its parameters characterise the distribution of ${y}^T\textbf{X}$ uniquely by Proposition~\ref{prop:unique-params-univariate}. Since ${y}$ is arbitrary, we can recover each of the $\mu_i$ and $\Sigma_i$ uniquely.
\end{proof}

\uniqueparamscgmixtures*
\begin{proof}
  For a CG-mixture $f\colon \Bool^p\otimes \R^m\distto\Bool^q\otimes \R^n$, $f(\cdot | a,x)$ is a mixture of Gaussians for every $a\in \Bool^p$ and $x\in \R^m$. By Proposition~\ref{prop:mixture-uniqueness}, each such mixture is uniquely determined by its parameters $\varphi(i,\cdot|a)$, $A_i(a) x+\mu_i(a)$, and $\Sigma_i$. It is enough to evaluate the affine map $x\mapsto A_i (a)x+\mu_i(a)$ on some basis of $\R^m$ to recover the matrix $A_i(a)$ and vector $\mu_i(a)$ uniquely, thus proving the proposition. 
\end{proof}

\cgmixturescompose*
\begin{proof}
  Let $f\colon\Bool^p\otimes \R^\ell\distto\Bool^q\otimes \R^m$, $g\colon\Bool^q\otimes \R^m\distto\Bool^r\otimes \R^n$, be two CG-mixtures given by
  $$
  f(a,x) = \sum_{i\in I} \varphi(i,\cdot| a)\cdot f_i(a,x)\quad \text{and}\quad g(b,y) = \sum_{j\in J} \psi(j,\cdot| b)\cdot g_j(b,x)
  $$ 
  where $f_i(a) :=\lambda x. f_i(a,x)$ and $g_j(b):=\lambda y. g_j(b,y)$ are Gaussian maps.
  Then, for $a\in\Bool^\ell, x\in\R^\ell, C\subseteq\Bool^r,Z\in\Borel{\Real^n}$, we have
      \begin{align*}
  	&g\circ f(C,Z\mid a,x) = 
  	\\
  	&=\int_Y \sum_{b\in\Bool^q} g(C,Z\mid b,y) f(b,\diff y\mid a,x)
  	\\
  	&= \int_Y \sum_{b\in \Bool^q}  \sum_{c\in C}\left(\sum_{j\in J}\psi(j,c|b)\cdot g_j(Z|b,y)\right)  \left(\sum_{i\in I}\varphi(i,b|a)\cdot f_i(\diff y|a,x)\right)
  	\\
	&= \int_Y \sum_{b\in \Bool^q}  \sum_{c\in C}\sum_{j\in J}\sum_{i\in I}\psi(j,c|b)\cdot g_j(Z|b,y) \cdot \varphi(i,b|a)\cdot f_i(\diff y|a,x)
	\\
	&= \sum_{b\in \Bool^q}  \sum_{c\in C}\sum_{j\in J}\sum_{i\in I}\psi(j,c|b)\varphi(i,b|a)\cdot\left(  \int_Y g_j(Z|b,y)  f_i(\diff y|a,x) \right)
	\\
	&= \sum_{b\in \Bool^q}  \sum_{c\in C}\sum_{j\in J}\sum_{i\in I}\psi(j,c|b)\varphi(i,b|a)\cdot \big(g_j(b)\circ  f_i(a)\big)(Z| x) 
  \end{align*}
  where $\big(g_j(b)\circ  f_i(a)\big)$ denotes the composite of the Gaussian maps $g_j(b)$ and $f_i(a)$ in $\Stoch$. If 
  $$
  f_i(a,x) := \Normal{m}{A_i(a)x + \mu_i(a)}{\Sigma_i(a)} \text{ and } g_j(b) := \Normal{n}{B_j(b)y + \nu_j(b)}{\Theta_j(b)} 
  $$
  then let
  $$
  h_{i,j,b}(a,x) :=  \Normal{n}{B_j(b)A_i(a) x + (B_j(b){\mu}_i(a) + \nu_j(b))}{B_ j(b)\Sigma_i(a)B_j(b)^T + \Theta_j(b)}.
  $$
  Hence, the composition $g\circ f\colon\Bool^p\otimes \R^\ell\distto\Bool^r\otimes \R^n$ is given by\footnote{The tedious computation above should already convince the reader of the value of a formal syntax for CG-mixtures such as the one in this paper!}
  \[
  \displaystyle
  (g\circ f)(a,x) = \sum_{(i,j,b)\in I\times J\times B} \pi(i,j,b,\cdot| a)\cdot h_{i,j,b}(a,x), \text{ where }  \pi(i,j,b,\cdot |x):=\psi(j,\cdot|b)\varphi(i,b|a).
  \]
  
  \noindent Similarly, let $f\colon\Bool^p\otimes \R^\ell\distto\Bool^q\otimes \R^m$, $g\colon\Bool^r\otimes \R^n\distto\Bool^s\otimes \R^o$, given by
    $$
  f(a,x) = \sum_{i\in I} \varphi(i,\cdot| a)\cdot f_i(a,x)\quad \text{and}\quad g(b,y) = \sum_{j\in J} \psi(j,\cdot| b)\cdot g_j(b,x)
  $$ 
  where $f_i(a) :=\lambda x. f_i(a,x)$ and $g_j(b):=\lambda y. g_j(b,y)$ are Gaussian maps.
  Their monoidal product is given by
  $$
  (f\otimes g)\left(\left(\begin{matrix}a\\b\end{matrix}\right),\left(\begin{matrix}x\\y\end{matrix}\right)\right) = \sum_{(i,j)\in I\times J} \varphi(i,\cdot| a)\psi(j,\cdot| b) \cdot (f_i\otimes g_j)\left(\left(\begin{matrix}a\\b\end{matrix}\right),\left(\begin{matrix}x\\y\end{matrix}\right)\right) 
      $$
 If  the component Gaussian maps are given as below,
 	$$
     	f_i(a,x) := \Normal{m}{A_i(a)x + \mu_i(a)}{\Sigma_i(a)} \quad \text{and}\quad g_j(b,y) :=  \Normal{o}{B_j(b)x + \nu_j(b)}{\Theta_j(b)} 
     $$  
 then, $ (f\otimes g)\left(\left(\begin{matrix}a\\b\end{matrix}\right),\left(\begin{matrix}x\\y\end{matrix}\right)\right) = $
   $$
 \sum_{(i,j)\in I\times J} \varphi(i,\cdot| a)\psi(j,\cdot| b) \cdot \Normal{m+o}{\!\left(\begin{matrix} A_i(a) & 0\\ 0 & B_j(b)\end{matrix}\right)\left(\begin{matrix}x\\y\end{matrix}\right)+\left(\begin{matrix}\mu_i(a)\\\nu_j(b)\end{matrix}\right)}{\left(\begin{matrix}\Sigma_i(a)& 0\\0 & \Theta_j(b)\end{matrix}\right)\!}
 $$
\end{proof}

\section{Appendix to Section~\ref{sec:axiomatisation}}

\universality*
\begin{proof}
	Consider a CG-mixture $f\colon \Bool^p\otimes \Real^m\to \Bool^q\otimes\Real^n$ given by 
	$$f(a,x)=\sum_{i\in I} \varphi(i,\cdot| a)\cdot \Normal{n}{A_i(a)x + \mu_i(a)}{\Sigma_i(a)}$$

	The idea of this proof is to encode $f$ as the composition of two circuits: one Boolean circuit $b\colon\Boolobj^{p}\to\Boolobj^{|I|}\Boolobj^{q}$ which will represent $\varphi$, and one mixed circuit $d\colon\Boolobj^{|I|}\Realobj^m\to\Realobj^n$ that encodes the different Gaussian maps for each Boolean input and index value $i\in I$. In what follows, we use $|I|$ for the cardinality of the finite set $I$, and we will identify elements of $\Bool^{|I|}$ as subsets of $I$.
	
	First, let $b\colon\Boolobj^{p}\to\Boolobj^{|I|}\Boolobj^{q}$ be a Boolean circuit such that $\sem{b}(\{i\},a'|a) = \varphi(i,a'|a)$ and $\sem{b}(S,a'|a) = 0$ when $S\in\Boolobj^{|I|}$ does not encode a singleton. The intuition is that only one of the $|I|$ output wires of $b$ can be active at any given time (in other words, we use a one-hot encoding of the set $I$ as the singleton subsets of $\Bool^{|I|}$.
	
	Then, each component Gaussian map  $f_i(a,x):= \Normal{n}{A_i(a)x + \mu_i(a)}{\Sigma_i(a)}$ can be encoded as a Gaussian circuit using previous work~\cite{gqa}. 
	
	We have one of these circuits for each $a\in\Bool^p$ and each $i\in I$. To encode this dependence on $a,i$, we form a binary tree of if-then-else gates whose guards at each level are the $p+i$ incoming wires (as in Definition~\ref{def:mixgauss-nf}), and whose leaves are the Gaussian circuits corresponding to each component $f_i(a,x):= \Normal{n}{A_i(a)x + \mu_i(a)}{\Sigma_i(a)}$. For example, for $|I|=2$ and $p=1$, we have total of three layers of if-then-else gates organised as in the diagram below:
	\begin{center}
		
\InputIfFileExists{universality-tree.tikz}{}{\input{./tikz/universality-tree.tikz}}

	\end{center}
	The first two Boolean guard correspond to $\Boolobj^{|I|}$ and the next one to $\Boolobj^p$.
	 The choice of value (true or false) for each guard determines a path through the tree which corresponds to one possible input to the circuit and determines which Gaussian circuit (black box) will be selected. In the example above, the if we chose $i=1$ and set the input $a=\mathsf{F}$, this corresponds to the path leading to the circuit labelled $f_1(\mathsf{F})$, obtained by setting the first guard to true and the second to false encoding the mixture index $1$),  and the last one to false ($\mathsf{F}$). Recall that we are using a one-hot encoding of $I$ so that for guards that correspond to multiple $i\in I$ indices being true at the same time, we can just have any Gaussian circuit as input to the if-then-else gate; since these will never be selected when composed with $b$, their value is irrelevant (see below) and this is why we write them` as `$\dots$' above. 
	 
	 The result is a mixed circuit $d\colon\Boolobj^{|I|}\Realobj^m\to\Realobj^n$ such that 
	 $$
	 \sem{d}(\cdot | \{i\},a,x) = \Normal{n}{A_i(a)x+\mu_i(a)}{\Sigma_i(a)}
	 $$
	Finally, the circuit
	\begin{center}
		$c:= 
\InputIfFileExists{universal-b-d.tikz}{}{\input{./tikz/universal-b-d.tikz}}
$
	\end{center}
	satisfies $\sem{c}(a,x) = \sum_{i\in I} \varphi(i,\cdot| a)\cdot \Normal{n}{A_i(a)x + \mu_i(a)}{\Sigma_i(a)}$.
	
\end{proof}

\section{Appendix to Section~\ref{sec:semantics}}

\soundness*
\begin{proof}
	To prove the stated soundness, it is sufficient to show that, for any of the axioms above, the two sides denote the same conditional Gaussian mixture. 
	We show that the statement holds for $\textax{E4}$ and $\textax{E5}$---the other axioms can be checked in a similar fashion. 
	\begin{description}
		\item[- $\textax{E4}$:] Given that $c$ and $d$ have no Boolean inputs, we have
		\[\sem{c}(x,y)= \displaystyle\sum_{i\in I} p_i\cdot \Normal{m}{\!A_i\begin{pmatrix}x\\y\end{pmatrix} + \mu_i}{\Sigma_i}$ \text{ and }
		$\sem{d}(x,y) = \displaystyle\sum_{j\in J} q_j\cdot \Normal{n}{\!B_j\begin{pmatrix}x\\y\end{pmatrix} + \nu_j}{\Theta_j}\]
		Then
		\begin{align*}
			\sem{
\InputIfFileExists{axioms/axiom4lhs.tikz}{}{\input{./tikz/axioms/axiom4lhs.tikz}}
}(b,x) &= \begin{cases*}  
				\displaystyle\sum_{i\in I} p_i\cdot \Normal{m}{A_i\begin{pmatrix}0\\y\end{pmatrix} + \mu_i}{\Sigma_i}  \text{ if $b=1$}\\ 
				\displaystyle \sum_{j\in J} q_j\cdot \Normal{n}{B_j\begin{pmatrix}0\\y\end{pmatrix} + \nu_j}{\Theta_j}  \text{ if $b=0$}\end{cases*}
			\\
			&= \sem{
\InputIfFileExists{axioms/axiom4rhs.tikz}{}{\input{./tikz/axioms/axiom4rhs.tikz}}
}(b,x)
		\end{align*}
		\item[- $\textax{E5}$:] 
		\begin{align*}
			\sem{
\InputIfFileExists{axioms/axiom5lhs.tikz}{}{\input{./tikz/axioms/axiom5lhs.tikz}}
}\left(b,\begin{pmatrix}x_1\\x_2\end{pmatrix}\right) &= \begin{cases*}\sem{f}(x_1)\text{ if }b=1\\\sem{f}(x_2)\text{ if }b=0\end{cases*}\\
			&= \begin{cases*}\sem{f}\circ \Normal{n}{x_1}{0}\text{ if }b=1\\
				\sem{f}\circ \Normal{n}{x_2}{0}\text{ if }b=0\end{cases*} = \sem{
\InputIfFileExists{axioms/axiom5rhs.tikz}{}{\input{./tikz/axioms/axiom5rhs.tikz}}
}\left(b,\begin{pmatrix}x_1\\x_2\end{pmatrix}\right)
		\end{align*}
	\end{description} 
\end{proof}

\section{Appendix to Section 5}
\label{sec:app-axiomatisation}

\begin{restatable}{proposition}{cnfunique}
	\label{prop:cnf-uniqueness}
	Any two circuits $c,d\colon\Realobj^m\to\Realobj^n$ in CNF such that $\sem{c}=\sem{d}$, are equal.
\end{restatable}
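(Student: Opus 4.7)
The plan is to leverage Proposition~\ref{prop:mixture-uniqueness} together with the assumed completeness of the Gaussian fragment $\GaussCirc$, and to show that the axioms of $\CGMTh$ are powerful enough to transform any CNF cascade into any other with the same semantics. A CNF of depth $k+1$ represents a mixture $\sem{c} = \sum_{i=0}^{k} w_i \cdot \sem{c_i}$ where each $c_i \in \GaussCirc$ and the weights $w_i$ are products of some $p_i$ with complements $1 - p_j$ for $j < i$. By Proposition~\ref{prop:mixture-uniqueness}, after combining duplicate components, the two mixtures $\sem{c}$ and $\sem{d}$ must induce the same multiset of pairs (component semantics, total weight). The proof proceeds by strong induction on the total number of components in $c$ and $d$.

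The crux is to show that $\CGMTh$ supports two basic operations on cascades: (i) \emph{permuting} adjacent components, and (ii) \emph{merging} two adjacent components with equal semantics into a single one whose weight is the sum of the originals. For (ii), axiom \textax{E8} collapses an if-then-else with equal branches into a single branch, and \textax{D2} then discards the dangling $\Flip{p}$; one checks that the resulting outer weight is exactly the sum of the two weights it replaced. For (i), axiom \textax{E10} implements skew-associativity of convex sums: it re-brackets a cascade so that two originally adjacent Gaussians become the two branches of a single inner if-then-else; inside that block, axiom \textax{E6} together with the Boolean identity $\Notgate\poi\Flip{p} = \Flip{1-p}$ (available from the completeness of the Boolean fragment) swaps the two branches; finally, applying \textax{E10} in reverse restores the cascade shape with the two components exchanged. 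Composing such adjacent swaps yields any permutation of the components.

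With these operations in hand, the inductive step runs as follows: pick the leading Gaussian $c_0$ of $c$ with its total semantic weight $w_0$; by Proposition~\ref{prop:mixture-uniqueness}, $d$ must contain a matching Gaussian at the same total weight, possibly after applying (ii) to merge duplicates on the $d$-side (and similarly on the $c$-side to ensure $w_0 = p_0$); use (i) to permute $d$ so that its leading component matches $c_0$ semantically; invoke the completeness of $\GaussCirc$ to derive that the two leading Gaussian circuits are equal, and observe that the outer $\Flip{w_0}$ gates are literally identical since $w_0$ is the same real number on both sides; finally, apply the induction hypothesis to the remaining sub-cascades, whose renormalised semantics coincide by cancellation. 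The base case for single Gaussian circuits is immediate from $\GaussCirc$-completeness, and the mixed base case (one side a single Gaussian, the other a cascade whose components all share the same semantics) is dispatched by iterated applications of \textax{E8} followed by $\GaussCirc$-completeness.

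The main obstacle is operation (i): the careful sequencing of \textax{E10}, \textax{E6}, and Boolean rewrites needs to produce the correct intermediate probabilities $\tilde p, \tilde q$ supplied by \textax{E10} at each step, and the side condition $\tilde p = pq \neq 1$ of \textax{E10} must never be violated---which ultimately follows from $p, q \in (0,1)$ in every well-formed CNF, but requires a small bookkeeping argument tracking the evolution of the parameters through the sequence of rewrites.
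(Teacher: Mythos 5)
Your proposal is correct and follows essentially the same route as the paper's proof: both reduce the problem to the unique parametrisation of Gaussian mixtures (Proposition~\ref{prop:mixture-uniqueness}), and both establish that the theory supports swapping adjacent components of the cascade (via \textax{E10} re-bracketing and \textax{E6}) and merging semantically equal adjacent components (via \textax{E8} and \textax{D2}, after a preliminary \textax{E10} re-bracketing that your sketch of operation (ii) elides but which you clearly have in hand from operation (i)). The only difference is organisational---you run a strong induction matching leading components, while the paper argues multiset equality of components and weights directly---and both handle the $(0,1)$ side condition of \textax{E10} the same way.
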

\begin{proof}
	Let $c,d\colon\Realobj^m\to\Realobj^n$ be two circuits in CNF whose semantics is the same mixture of Gaussian maps $f = \sem{c}=\sem{d}$ where (since there is no Boolean input)
	\[f(x) =  \sum_i r_i \cdot f_i  = \sum_i r_i \cdot \Normal{n}{A_i x +\mu_i}{\Sigma_i}\] 
	Notice that, for any input $x\in\R^m$, $f(x)$ is a mixture of (multivariate) Gaussians, uniquely characterised by its parameters (weights $r_i$, means $A_i x$, and covariances $\Sigma_i$), by Proposition~\ref{prop:mixture-uniqueness}. Since this is true for any $x\in\R^m$, $f$ can be recovered uniquely from its parameters, \emph{i.e.}, the weights $r_i$ and the Gaussian maps $f_i$. 
	
	Moreover, since the two circuits are in CNF, we can unfold the normal form to obtain circuits $\{c_i\}_{0\leq i \leq M}$ and $\{d_j\}_{0\leq j \leq N}$ in $\GaussCirc$ such that
	\begin{align*}
		\sem{c} = p_0 \cdot \sem{c_0} + (1-p_0)\cdot \sem{c'} = p_0 \cdot \sem{c_0} + (1-p_0)\cdot\big(p_1\cdot \sem{c_1} + (1-p_1)\cdot \sem{c''}\big)= \dots\\
		\sem{d} = q_0 \cdot \sem{d_0} + (1-q_0)\cdot \sem{d'} = q_0 \cdot \sem{c_0} + (1-q_0)\cdot\big(q_1\cdot \sem{d_1} + (1-q_1)\cdot \sem{d''}\big) = \dots
	\end{align*}
	
	Since $\sem{c}=\sem{d} = f$ and $f=\sum_i r_i\cdot f_i$ is uniquely characterised by its weights and component maps, for each $c_i$, there must exist some Gaussian map $f_{k_i}$ such that $\sem{c_i}= f_{k_i}$; similarly, for each $d_i$, there exists some $f_{\ell_i}$ such that  $\sem{d_i}= f_{\ell_i}$. Therefore, $\{c_i\}_{0\leq i \leq M}$ and $\{d_j\}_{0\leq j \leq N}$ are equal \emph{as sets}. 
	
	We need to show that we can always reorder the components using the available axioms to make the two circuits equal. For this, it is sufficient to show that we can always swap any two neighbouring components (\emph{e.g.}, $c_i$ and $c_{i+1}$). This is what we show below:
	\begin{align*}
		
\InputIfFileExists{cnf-unfold-2.tikz}{}{\input{./tikz/cnf-unfold-2.tikz}}
 &\myeq{E10} 
\InputIfFileExists{cnf-unfold-2-swap-1.tikz}{}{\input{./tikz/cnf-unfold-2-swap-1.tikz}}

		\myeq{E6} 
\InputIfFileExists{cnf-unfold-2-swap-2.tikz}{}{\input{./tikz/cnf-unfold-2-swap-2.tikz}}
\\
		&\eqSMC 
\InputIfFileExists{cnf-unfold-2-swap-3.tikz}{}{\input{./tikz/cnf-unfold-2-swap-3.tikz}}

		\myeq{E10} 
\InputIfFileExists{cnf-unfold-2-swap-4.tikz}{}{\input{./tikz/cnf-unfold-2-swap-4.tikz}}

	\end{align*}
	where $\tilde{p}_0 = p_0p_1$, $\tilde q_1 = \frac{p_1(1-p_0)}{1-p_0p_1}$, and $q_0 = \tilde p_0(1-\tilde p_1)$, $q_1 = \frac{\tilde p_1(1-\tilde p_0)}{1- \tilde p_0(1-\tilde p_1)}$; note that these are in $(0,1)$ since $p_0,p_1$ are assumed to be in $(0,1)$.
	Thus, $\{c_i\}_{0\leq i \leq M}$ and $\{d_j\}_{0\leq j \leq N}$ are equal \emph{as multisets}.
	
	We now need to show that, whenever $c_i = c_j$ for $i\neq j$, then we can always collapse the two occurrences of $c_i$ in the $c$ into a single occurrence. Since, we can always reorder the components of the CNF, we can assume without loss of generality that $c_i$ and $c_j$ are neighbours, \emph{i.e.}, that $j=i+1$. Then,
	\begin{align*}
		
\InputIfFileExists{cnf-unfold-equal.tikz}{}{\input{./tikz/cnf-unfold-equal.tikz}}
 &\myeq{E10} 
\InputIfFileExists{cnf-unfold-equal-1.tikz}{}{\input{./tikz/cnf-unfold-equal-1.tikz}}

		\myeq{E5} 
\InputIfFileExists{cnf-unfold-equal-2.tikz}{}{\input{./tikz/cnf-unfold-equal-2.tikz}}
\\
		&\myeq{A1} 
\InputIfFileExists{cnf-unfold-equal-3.tikz}{}{\input{./tikz/cnf-unfold-equal-3.tikz}}

		\myeq{E8} 
\InputIfFileExists{cnf-unfold-equal-4.tikz}{}{\input{./tikz/cnf-unfold-equal-4.tikz}}
\\
		&\myeq{D2} 
\InputIfFileExists{cnf-unfold-equal-5.tikz}{}{\input{./tikz/cnf-unfold-equal-5.tikz}}

	\end{align*}
	Hence, $c_i = d_i$ for $0\leq i \leq M$ (and $M=N$). Finally, since the weights $r_i$ in $f=\sem{c}=\sem{d}$ are also unique, then we must have $p_i=q_i$ for  $0\leq i \leq M$, and therefore $c=d$. 
\end{proof}

\nfrealoutputunique*
\begin{proof}
	Let $c,d\colon\Boolobj^p\Realobj^m\to\Realobj^n$ be two circuits in NF whose semantics are the same as conditional Gaussian mixtures, \emph{i.e.}, stochastic kernels $f=\sem{c}=\sem{d}$ of type $\Bool^p\otimes\Real^m\distto\Real^n$ where
	\[
	f(a,x)=\sum_i\varphi(i|a)\cdot f_i(a,x)=\sum_i\varphi(i,\cdot |a)\cdot\Normal{n}{A_i(a)x+\mu_i}{\Sigma_i(a)}
	\]
	for $a\in\Bool^p$, and $x\in\Real^m$. 
	By Proposition~\ref{prop:CGM-unique-params}, $f(a,x)$ is uniquely determined by its parameters,  \emph{i.e.}, $\varphi$, as well as the Gaussian maps $\lambda x. f_i(a,x)$.
	
	Given that $c,d$ are circuits in NF, we can unfold them using our inductive definition,
		\begin{align*}
			\sem{c} &= \begin{cases}
				\sem{c|_1}(a',x) \text{ if } a_0=1\\
				\sem{c|_0}(a',x) \text{ if } a_0=0
			\end{cases}
		\end{align*}
	where $a_0$ is the first argument of $\varphi$, thus giving the only two possible restrictions of $\varphi|_{a_0}$; and similarly for $d$. Following Definition~\ref{def:normal-form}, we have that, given any possible bit-vector $a\in\Bool^p$, results in the same mixture of Gaussians, $f|a=\sem{c|a}=\sem{d|a}$, which, by Proposition~\ref{prop:cnf-uniqueness}, are uniquely determined by their parameters.
\end{proof}

\begin{lemma}
	\label{lemma:g-copy}
	For every generator $g\in\GaussCirc$ and circuits $d|_1,d|_0\colon\Boolobj^p\Realobj^m\to\Realobj^n$ $\MixCirc$ in normal form, the following equality is derivable in $\MixCirc$:
	\begin{center}
		
\InputIfFileExists{lemma2.tikz}{}{\input{./tikz/lemma2.tikz}}

	\end{center}
\end{lemma}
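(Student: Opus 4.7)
The plan is to apply axiom \textbf{E5}---distributivity of Gaussian operations over if-then-else---as the primary rewriting step, supplementing it with structural axioms for generators whose arity requires additional bookkeeping. The intuition is that a normal form is built out of if-then-else gates, and each such gate admits any Gaussian subcircuit to pass through it, appearing inside both branches. Once $g$ has reached the level of the two normal-form branches $d|_1, d|_0$, the equality matches the claimed right-hand side directly.

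First, I would identify the top-level if-then-else gate of the diagram on the side containing $g$ and apply axiom \textbf{E5}, which is exactly the naturality statement needed: any Gaussian subcircuit can be pushed through an if-then-else and appear identically inside each branch. In the simplest case, where $g$ has a single real input wire and a single real output wire acting on the output of the if-then-else, a single invocation of \textbf{E5} already delivers the claim, with $g$ now composed on top of each of $d|_1$ and $d|_0$.

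Second, for generators whose arity is not unity, I would preprocess with structural moves before applying \textbf{E5}. For $g = \RealComult$, the two produced wires may both need to traverse the if-then-else; axioms \textbf{A1}--\textbf{A3} together with the laws of SMCs let us rearrange copies so that \textbf{E5} becomes applicable on the relevant branches. For the state generators $\RealWUnit$ and $\Foot$, and for the binary $g = \RealMult$ or unary $g = \Scalar{k}$, the laws of SMCs (bifunctoriality, associativity) together with the copy axiom \textbf{C1} suffice to put $g$ in position for one or more applications of \textbf{E5}.

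The most delicate case is $g = \NN$, which introduces fresh Gaussian randomness. Pushing a shared $\NN$ sample past an if-then-else requires the observation that, since only one branch is ever actually selected, a single shared sample can be replaced by two independent samples, one per branch. This is precisely the content of axiom \textbf{E4}, whose invocation is the key---and most subtle---step of the proof. Beyond this, organising the case analysis cleanly over the Gaussian generators is the main bookkeeping challenge; the remaining cases reduce to essentially direct applications of \textbf{E5} modulo routine structural manipulations.
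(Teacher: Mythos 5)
You have the right crux---$\NN$ is the only generator that resists a one-step rewrite, and axiom \textax{E4} is what lets a shared Gaussian sample be replaced by independent per-branch samples---but your proposal has a genuine gap in how \textax{E4} gets applied. As the soundness argument for \textax{E4} makes explicit, that axiom scheme is stated for branch circuits with \emph{no Boolean inputs}: its left-hand side has a single $\NN$ sample copied into two circuits $c,d$ whose remaining input wires are all real-typed. In the lemma, however, the branches $d|_1,d|_0$ have type $\Boolobj^p\Realobj^m\to\Realobj^n$ with $p$ possibly positive, so \textax{E4} does not apply directly. The paper's proof is an induction on $p$: the base case $p=0$ is exactly one application of \textax{E4}, and the inductive step unfolds $d|_1$ and $d|_0$ one further level of the normal form, re-associates the nested if-then-else gates (via the higher-arity if-then-else decomposition), applies the induction hypothesis to the inner branches, which have only $p$ Boolean guards, and then reverses the re-association. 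Your proposal compresses all of this into a single invocation of \textax{E4}, which is precisely the step that fails for $p>0$; the induction is the main technical content of the proof and needs to be supplied.

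A secondary issue is your framing of \textax{E5} as the primary rewriting step. In the normal form the generator $g$ is \emph{pre}composed with $d$, so it meets the copier that duplicates the real inputs into the two branches, not the if-then-else gate, which sits at the \emph{output} end selecting between the branches' results. Pushing $g$ into both branches is therefore a matter of commuting $g$ past the copier---a single application of \textax{C1} for the copiable generators (and of the co-associativity/co-unitality axioms when $g$ is $\RealComult$ or $\RealCounit$ itself)---rather than of naturality of if-then-else. You do eventually invoke \textax{C1}, so the easy cases are recoverable, but \textax{E5} is not the axiom doing the work here, and the paper indeed dispatches every generator other than $\NN$ with a single axiom of this copying kind.
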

\begin{proof}
	The only case that cannot be dealt with using a single axiom is that of  $g=\NN$. We therefore focus solely on this case and proceed to show the statement of the lemma by induction on $p$, the number of Boolean input wires to \textsub{d}{1} and \textsub{d}{0}. The base case ($p=0$) follows directly from axiom \textax{E4}:
	\begin{center}
		
\InputIfFileExists{lemma2b-nn-01.tikz}{}{\input{./tikz/lemma2b-nn-01.tikz}}

	\end{center}
	
	\noindent We now assume that the lemma holds for a circuit with $p$ Boolean guards. We want to show that the equality is true for circuits with $p+1$ guards. We begin by unfolding \textsub{d}{1} and \textsub{d}{0}, using the fact that they are in normal form: 
	\begin{center}
		
\InputIfFileExists{lemma2b-nn-02.tikz}{}{\input{./tikz/lemma2b-nn-02.tikz}}

	\end{center}
	Thus, 
	\begin{center}
		
\InputIfFileExists{lemma2b-nn-03.tikz}{}{\input{./tikz/lemma2b-nn-03.tikz}}

	\end{center}
	where we use the definition of higher-arity if-then-else from Section~\ref{sec:syntax}, recalled below,
	\begin{center}
		
\InputIfFileExists{lemma2-nn-03.tikz}{}{\input{./tikz/lemma2-nn-03.tikz}}

	\end{center}
	and where we the rightmost if-then-else is depicted with a single thick wire representing its $2n=n+n$ real-valued input wires.
	
	Now, by the induction hypothesis (which we are entitled to apply since the \textsub{c}{1} and \textsub{c}{0} have $p$ Boolean input wires), we have
	\begin{center}
		
\InputIfFileExists{lemma2b-nn-04.tikz}{}{\input{./tikz/lemma2b-nn-04.tikz}}

	\end{center}
	Finally, repeating the previous steps in reverse, we obtain the desired result:
	\begin{center}
		
\InputIfFileExists{lemma2b-nn-05.tikz}{}{\input{./tikz/lemma2b-nn-05.tikz}}

	\end{center}
\end{proof}

The following lemma, 
can be seen as a proof that $\MixCirc$ quotiented by the equational theory presented in Section~\ref{sec:equational-theory} defines a \emph{Markov category}~\cite{Fritz_SyntheticApproach}. This amounts to showing  that discarding is natural in $\MixCirc$.
\begin{lemma}
	\label{lemma:discard-is-natural}
	Any circuit $c$ is discardable:
	\begin{center}
		
\InputIfFileExists{lemma-discard.tikz}{}{\input{./tikz/lemma-discard.tikz}}

	\end{center}
\end{lemma}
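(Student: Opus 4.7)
The plan is to prove discardability by structural induction on the circuit $c$. Since $\MixCirc$ is freely generated by the signature $\Sigma$ as a prop, it suffices to check that (i) every generator is discardable, (ii) identities and symmetries are discardable, and (iii) the discardability property is preserved by sequential and parallel composition. The last two items are the easy inductive step and will be handled uniformly.

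For the inductive step, suppose $c_1\colon u\to v$ and $c_2\colon v\to w$ are discardable. Then
\begin{center}
\tikzfig{seq-compose-normal-font}$\poi\,(\RealCounit[thick]\otimes\BoolCounit[thick])$
\end{center}
equals, by applying discardability of $c_2$ to the outputs, the composite of $c_1$ with discarders on all of $v$; and then applying discardability of $c_1$ yields discarders on all of $u$. Similarly, for a parallel composite $c_1\otimes c_2$, discarding all outputs decomposes through the interchange law into $(c_1\poi\text{discards})\otimes(c_2\poi\text{discards})$, and the induction hypothesis applied to each factor concludes. Identities and symmetries are discardable by the laws of SMCs (the symmetry composed with two discarders is just two discarders, up to reassociation of the monoidal product).

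For the base cases, each generator is covered by a single axiom or a short derivation. Axioms \textbf{D1} and \textbf{D2} directly handle the generators $\RealWUnit,\RealMult,\Scalar{k},\Foot$ and $\Andgate,\Notgate,\Flip{p}$ respectively; axiom \textbf{E9} handles the if-then-else generator; the discarder generators $\RealCounit$ and $\BoolCounit$ are trivially discardable. For the copiers, a one-line derivation suffices:
\begin{center}
\tikzfig{axioms/real-copy-unital-right}$\;\poi\,\RealCounit\;\myeq{A2r}\;\Realid\;\poi\,\RealCounit\;=\;\RealCounit$
\end{center}
(applying \textbf{A2r} to discard one branch and using the identity law), and similarly for $\BoolComult$ using \textbf{B2}. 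The one generator not explicitly appearing in \textbf{D1} is $\NN$: for this case, we appeal to the assumed completeness of the Gaussian fragment~\cite{gqa}, since $\sem{\NN\poi\RealCounit}$ is the unique Gaussian map $\R^0\distto\R^0$, which is the same as the empty circuit, so this equality is derivable purely within $\GaussCirc$.

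The main obstacle is essentially bookkeeping: ensuring that the thick-wire abbreviations of Remark in Section~\ref{sec:syntax} behave correctly when we unfold $\RealCounit[thick]$ and $\BoolCounit[thick]$ into parallel copies of the unary discarders, and that the interchange law is applied carefully so that the inductive step on $c_1\otimes c_2$ really does factor through the two halves. Beyond this, no subtle interaction between the Boolean and Gaussian fragments is needed—the key observation is that discardability is a purely local property of each generator, and the SMC structure lifts it to arbitrary circuits.
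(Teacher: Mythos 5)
Your proof is correct and follows essentially the same route as the paper's: structural induction on $c$, discharging each generator by the corresponding axiom (\textbf{D1}, \textbf{D2}, \textbf{E9}, and co-unitality for the two copiers) and letting the SMC laws handle identities, symmetries, and sequential/parallel composition. The only divergence is that you treat $\NN$ by appealing to completeness of the Gaussian fragment rather than to a discarding axiom, which is legitimate under the paper's standing assumptions and in fact sidesteps the paper's slightly loose citation of \textbf{D2} for $\NN$ even though $\NN$ does not appear in the generator lists of \textbf{D1}/\textbf{D2} in Figure~\ref{fig:eqs}.
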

\begin{proof}
	We prove the lemma by structural induction on $c$. For the base case, it is enough to notice that each generator in our signature is discardable by the axioms, \textax{A2} for $\BoolComult$, \textax{B2} for $\RealComult$, \textax{D1} for $\Andgate$, $\Notgate$ and $\Flip{p}$, \textax{D2} for $\RealWUnit$, $\RealMult$, $\Scalar{k}$ and $\NN$, and \textax{E9} for $\ifmix$. For the inductive case, we have to show that the lemma holds for sequential and parallel composition, both of which follow immediatelly from the inductive hypothesis.
\end{proof}

\normalisationrealoutput*
\begin{proof}
	We proceed by structural induction. That is, we show that for an arbitrary  circuit $d:\Boolobj^p\Realobj^m\to\Realobj^n$ in normal form, precomposing it with any generator results in a circuit equal to one in normal form---all possible such composites fall into three categories:
	\begin{gather*}
		(i)\newline
		\noindent 
\InputIfFileExists{completeness-bprm-to-rn-shape1.tikz}{}{\input{./tikz/completeness-bprm-to-rn-shape1.tikz}}
 \qquad (ii) \:
\InputIfFileExists{completeness-bprm-to-rn-shape2.tikz}{}{\input{./tikz/completeness-bprm-to-rn-shape2.tikz}}
 \qquad (iii) \:
\InputIfFileExists{completeness-bprm-to-rn-shape3.tikz}{}{\input{./tikz/completeness-bprm-to-rn-shape3.tikz}}
 
	\end{gather*}
	
	\noindent \textit{(i)} \begin{minipage}{\widthof{$g:=\BoolComult\qquad\;$}}
		\begin{mdframed}
			$g:=\BoolComult$
		\end{mdframed}
	\end{minipage} We start by unfolding twice the circuit $d$, which is in NF:
	\begin{center}
		
\InputIfFileExists{comp-bprm-to-rn-copy1.tikz}{}{\input{./tikz/comp-bprm-to-rn-copy1.tikz}}

	\end{center}
	
	\noindent At this point, we note that by the co-associativity of the copier, together with \textax{E1}, we have that
	\begin{center}
		
\InputIfFileExists{comp-bprm-to-rn-copy-prop1.tikz}{}{\input{./tikz/comp-bprm-to-rn-copy-prop1.tikz}}

	\end{center}
	
	\noindent And we can now apply Lemma~\ref{lemma:discard-is-natural} to conclude:
	\begin{center}
		
\InputIfFileExists{comp-bprm-to-rn-copy2.tikz}{}{\input{./tikz/comp-bprm-to-rn-copy2.tikz}}

	\end{center}
	
	\noindent \begin{minipage}{\widthof{$g:=\Andgate\qquad\;$}} 
		\begin{mdframed}
			$g:=\Andgate$
		\end{mdframed}
	\end{minipage}
	
	\begin{center}
		
\InputIfFileExists{comp-bprm-to-rn-and1.tikz}{}{\input{./tikz/comp-bprm-to-rn-and1.tikz}}

	\end{center}
	\noindent We want to be able to push $d|_0$ past the copier to its right. For this, we need the following lemma.
	
	\begin{lemma}
		For circuits $d|_1,d|_0$ in NF, the following equality holds:
		\begin{center}
			\begin{equation}
				\label{eq:comp-copy-nf}
				
\InputIfFileExists{comp-bprm-to-rn-and2.tikz}{}{\input{./tikz/comp-bprm-to-rn-and2.tikz}}

			\end{equation}
		\end{center}
	\end{lemma}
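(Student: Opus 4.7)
The plan is to prove this lemma by structural induction on the normal forms $d|_1$ and $d|_0$, exploiting the recursive structure of a NF given by Definition~\ref{def:mixgauss-nf}. At its heart, the lemma allows a copier on a shared wire to be pushed past two parallel NF subcircuits, and so the inductive argument mimics the layered structure of NFs: convex combinations of Gaussians at the base, wrapped in nested if-then-else tests above.

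In the base case, both $d|_1$ and $d|_0$ are in CNF (Definition~\ref{def:cnf}). I would unfold their CNF presentations to expose a cascade of if-then-else gates with Gaussian-circuit leaves. The copier is then pushed through this cascade in two alternating steps: past every if-then-else using axiom \textax{E5} (naturality of if-then-else), and past each Gaussian leaf using Lemma~\ref{lemma:g-copy} applied generator by generator. Any spare wires produced by the rearrangement can be removed via Lemma~\ref{lemma:discard-is-natural}, and the co-associativity/co-commutativity axioms \textax{A1}--\textax{A3} for the real copier (and \textax{B1}--\textax{B3} on the Boolean side) allow us to reshape the resulting diagram into exactly the syntactic form prescribed by the right-hand side of \eqref{eq:comp-copy-nf}.

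For the inductive step, I would unfold $d|_1$ and $d|_0$ once more via Definition~\ref{def:mixgauss-nf}, exposing a fresh top-level if-then-else testing a new Boolean guard, together with strictly smaller sub-NFs $d|_{1,1}, d|_{1,0}, d|_{0,1}, d|_{0,0}$. I would then apply axioms \textax{E1} (to eliminate redundant retesting of the guard that is being copied), \textax{E3} (to reorder nested if-then-else tests so that the newly exposed guard sits in the intended position), and \textax{E5} (to slide the copier across the newly exposed if-then-else layer). The result is a diagram in which the copier now sits between strictly smaller sub-NFs, and the induction hypothesis applies directly, closing the induction.

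The main obstacle will be the combinatorial bookkeeping that guarantees, after each sequence of rewrites, the diagram still matches the precise syntactic shape specified by \eqref{eq:comp-copy-nf}, with guards correctly shared or separated and every branch assigned the intended subcircuit. Axiom \textax{E1} is the delicate pivot here: it is what prevents the accumulation of duplicate or stale copies of guard wires as the copier travels through the normal form, and applying it at the wrong moment would produce a circuit that, while semantically equal, no longer fits the rigid NF template that the enclosing normalisation theorem requires.
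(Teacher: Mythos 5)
Your overall skeleton---structural induction on the normal form, with a CNF base case and an inductive step that unfolds one if-then-else layer and pushes the copier through via \textax{E5} and \textax{E3} before invoking the induction hypothesis---is essentially the paper's. The gap is in the base case, which is where the real content of the lemma lives. You propose to push the copier past each Gaussian leaf ``using Lemma~\ref{lemma:g-copy} applied generator by generator'', but this glosses over the one point where a naive copying argument fails: the generators $\NN$ and $\Flip{p}$ are \emph{not} copiable (axioms \textax{C1} and \textax{C2} deliberately exclude them, since duplicating a sampler is not the same as sampling twice). The lemma is nevertheless true only because the two copies of the output feed mutually exclusive branches of an if-then-else, so the correlation between them is unobservable. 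The paper's base case therefore runs a further nested induction on the number of if-then-else gates in the CNF, keeps the probabilistic guards \emph{shared} (copying the guard wire with $\BoolComult$ and co-associativity, and introducing an auxiliary if-then-else via \textax{B2} and \textax{E8}, rather than duplicating $\Flip{p}$), and gives a bespoke derivation for $\NN$ that exploits the exclusive-branch structure---the content of axiom \textax{E4}. None of this appears in your plan, and Lemma~\ref{lemma:g-copy} cannot substitute for it: that lemma commutes a generator with the if-then-else on the \emph{input} side of an NF circuit, which is a different diagram shape from a copier distributing an output into two branches.

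A smaller point: in the inductive step you identify \textax{E1} as ``the delicate pivot'', but the paper's inductive step for this lemma uses only \textax{E5}, \textax{E3} and the \textax{B2}/\textax{E8} trick of introducing and later removing an auxiliary if-then-else; \textax{E1} plays no role here (it is needed instead in the $\BoolComult$ case of the surrounding theorem). This suggests the bookkeeping you flag as the main obstacle is not where the difficulty of this lemma actually sits; the difficulty is the non-copiability of the probabilistic generators, which your proposal does not confront.
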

	\begin{proof}
		We prove this by induction on the number of Boolean inputs to $d\colon \Boolobj^p \Realobj^m\to\Realobj^n$, that is, on the number $p$. For the base case, $d$ is of type $\Boolobj^0\Realobj^m\to\Realobj^n$ and, thus, in CNF. We deal with this case in the lemma below. 
		
		\begin{lemma}
			For $c:\Realobj^m\to\Realobj^n$ in CNF, the following equality holds
			\begin{center}
				
\InputIfFileExists{comp-rm-to-rn-and-lemma01.tikz}{}{\input{./tikz/comp-rm-to-rn-and-lemma01.tikz}}

			\end{center}
		\end{lemma}
		\begin{proof}
			We reason by induction on the number of if-then-else in $c$. For the base case, we have that $c$ has no if-then-else and hence is a purely Gaussian circuit (in $\GaussCirc$). It is sufficient to show that, for every Gaussian circuit generator $g\in\Sigma_{\GaussCirc}$,
			\begin{center}
				
\InputIfFileExists{comp-rm-to-rn-and-lemma02.tikz}{}{\input{./tikz/comp-rm-to-rn-and-lemma02.tikz}}

			\end{center}
			\noindent This is immediate for all copiable generators, \emph{i.e.}, those that satisfy axiom \textax{C1}. The only $\GaussCirc$ generator which is not copiable is $\NN$; we check the equality nevertheless holds for this generator:
			\begin{center}
				
\InputIfFileExists{comp-rm-to-rn-and-lemma03.tikz}{}{\input{./tikz/comp-rm-to-rn-and-lemma03.tikz}}

			\end{center}
			\noindent where $f:=\Realid\otimes\Realid$, $g:=\RealCounit\otimes\RealComult$, as highlighted in the dashed boxes of the third circuit above. 
			
			For the inductive case, we assume that the equality of the lemma holds for circuits in CNF with $k$ if-then-else gates. Let $c$ be some circuit in CNF with $k+1$ if-then-else gates; we have
			\begin{align*}
				
\InputIfFileExists{and-proof-cnf-ih.tikz}{}{\input{./tikz/and-proof-cnf-ih.tikz}}
\;\myeq{CNF}\;	
\InputIfFileExists{and-proof-cnf-ih-1.tikz}{}{\input{./tikz/and-proof-cnf-ih-1.tikz}}
\;\myeq{E5}\;
\InputIfFileExists{and-proof-cnf-ih-2.tikz}{}{\input{./tikz/and-proof-cnf-ih-2.tikz}}

			\end{align*}
			
			After unfolding $c$ using the inductive definition for CNF circuits, we use \textax{B2} together with \textax{E8} to introduce an additional if-then-else, and use co-associativity (\textax{A1}) on the copied probabilistic guard, taking the circuit closer to a shape in which we can leverage our inductive hypothesis (IH) to copy $c$.
			\begin{center}
				\begin{align*}
					&\myeq{B2}
\InputIfFileExists{and-proof-cnf-ih-3.tikz}{}{\input{./tikz/and-proof-cnf-ih-3.tikz}}
\myeq{E8}
\InputIfFileExists{and-proof-cnf-ih-4.tikz}{}{\input{./tikz/and-proof-cnf-ih-4.tikz}}
\\
					&\myeq{A1}
\InputIfFileExists{and-proof-cnf-ih-5.tikz}{}{\input{./tikz/and-proof-cnf-ih-5.tikz}}

				\end{align*}
			\end{center}
			By re-associating the if-then-else gates, we can now use IH:
			\begin{gather*}
				\myeq{E3 x2}\quad
\InputIfFileExists{and-proof-cnf-ih-6.tikz}{}{\input{./tikz/and-proof-cnf-ih-6.tikz}}
\quad\myeq{IH x2}\quad
\InputIfFileExists{and-proof-cnf-ih-7.tikz}{}{\input{./tikz/and-proof-cnf-ih-7.tikz}}

			\end{gather*}
			Using naturality (\textax{E5}) and our associativity axiom (\textax{E3}), we get the circuit back intro a shape where we can remove the extra if-then-else introduced earlier.
			\begin{center}
				\begin{align*}
					&\myeq{E5}
\InputIfFileExists{and-proof-cnf-ih-8.tikz}{}{\input{./tikz/and-proof-cnf-ih-8.tikz}}
\\
					&\myeq{E3}
\InputIfFileExists{and-proof-cnf-ih-9.tikz}{}{\input{./tikz/and-proof-cnf-ih-9.tikz}}

				\end{align*}
			\end{center}
			Finally, we use \textax{E8} to get rid of the redundant mixed if-then-else, and the laws of SMCs to obtain two copies of the original circuit we started with.
			\begin{gather*}
				\myeq{E8}
\InputIfFileExists{and-proof-cnf-ih-10.tikz}{}{\input{./tikz/and-proof-cnf-ih-10.tikz}}
\myeq{SMC}
\InputIfFileExists{and-proof-cnf-ih-11.tikz}{}{\input{./tikz/and-proof-cnf-ih-11.tikz}}

			\end{gather*}
			It is clear that the right-hand side of the above equation corresponds with our target circuit:
			\begin{center}
				
\InputIfFileExists{and-proof-cnf-ih-12.tikz}{}{\input{./tikz/and-proof-cnf-ih-12.tikz}}

			\end{center}
		\end{proof}
		\noindent The above lemma proves the base case of Equation~\ref{eq:comp-copy-nf}, since it suffices to show that $d|_0$ can be copied whenever it is in CNF. The inductive step follows:
		\begin{center}
			
\InputIfFileExists{and-proof1.tikz}{}{\input{./tikz/and-proof1.tikz}}

		\end{center}
		\noindent We begin by unfolding $c$ using the inductive definition of our normal form. Next, using axiom \textax{E5} with $\RealComult[thick]$, we can copy the if-then-else gate. We then introduce an additional mixed if-then-else, which will later allow us to apply our inductive hypothesis:
		\begin{center}
			
\InputIfFileExists{and-proof2-1.tikz}{}{\input{./tikz/and-proof2-1.tikz}}

		\end{center}
		\noindent Reassociating the if-then-else gates using axiom \textax{E3}, we can apply the induction hypothesis:
		\begin{center}
			
\InputIfFileExists{and-proof2-2.tikz}{}{\input{./tikz/and-proof2-2.tikz}}

		\end{center}
		\noindent Redistributing the if-then-else generators back into their original form using the same \textax{E3} steps in reverse: 
		\begin{center}
			
\InputIfFileExists{and-proof4.tikz}{}{\input{./tikz/and-proof4.tikz}}

		\end{center}
		\noindent Finally, note that the right-hand side of the equation above is exactly the target circuit in our proof:
		\begin{center}
			
\InputIfFileExists{and-proof5.tikz}{}{\input{./tikz/and-proof5.tikz}}

		\end{center}
	\end{proof}
	
	\noindent Using the above, we now complete the normalisation proof for the $\Andgate$ generator. Starting from the right-hand side of Equation~\ref{eq:comp-copy-nf}:
	\begin{center}
		
\InputIfFileExists{and-proof6.tikz}{}{\input{./tikz/and-proof6.tikz}}

	\end{center}
	
	\noindent To complete the proof, we justify the last step in the above, \emph{i.e.}, $\mathrel{(*)}$. For this, we show a modified version of axiom \textax{E5}. 
	\begin{lemma}
		For $c:\Boolobj^p\Realobj^m\to\Realobj^n$ an arbitrary circuit in NF, the following equality holds:
		\begin{center}
			
\InputIfFileExists{and-proof7.tikz}{}{\input{./tikz/and-proof7.tikz}}

		\end{center}
	\end{lemma}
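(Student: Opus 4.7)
The plan is to prove this generalised naturality of if-then-else by structural induction on the NF circuit $c$, paralleling the structure of Definition~\ref{def:mixgauss-nf}. The base case handles circuits in CNF (no Boolean inputs), and the inductive step handles circuits with Boolean inputs by unfolding the outermost if-then-else of the NF.

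For the CNF base case, I would nest a secondary induction on the number of if-then-else gates appearing in $c$. With zero such gates, $c$ is a purely Gaussian circuit, and I would perform a further structural induction on its generators: for each generator $g\in\GaussCirc$ I verify the desired equality directly, using axiom \textax{E5} (which is stated precisely for single generators), along with routine compatibility arguments for sequential and parallel composition. The only generator that is not automatically copiable is $\NN$, for which the required diagrammatic manipulation is already available through Lemma~\ref{lemma:g-copy} (obtained earlier via axiom \textax{E4}, allowing a shared Gaussian sample to be replaced by independent samples in disjoint branches). For the inductive step inside CNF, I would unfold the outer convex combination using the CNF definition, apply the IH to the smaller CNF subcircuit and to the component Gaussian circuit, and use axioms \textax{E3} and \textax{E5} to rebuild the target shape.

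For the main inductive step on NF, I would unfold $c$ at its first Boolean input, producing $c|_0$ and $c|_1$, each in NF with one fewer Boolean input. Applying the IH to each branch moves the ambient if-then-else past each of $c|_0, c|_1$. I then use axiom \textax{E3} to re-associate the guards so that the resulting nested if-then-elses can be collapsed back into the NF form of $c$ sitting on the correct side of the original if-then-else. The identities from the symmetric monoidal structure, together with naturality of copying and discarding (Lemma~\ref{lemma:discard-is-natural}), handle the bookkeeping of wires.

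The principal obstacle will be the case $g=\NN$ in the innermost base case, since $\NN$ carries genuine randomness and cannot be pushed through an if-then-else by naturality alone; here I must rely on the \textax{E4}-based argument of Lemma~\ref{lemma:g-copy} to decouple the shared randomness into per-branch independent samples. A secondary difficulty is the repeated use of the skew-distributivity axiom \textax{E3} to shuffle guards: each application only swaps two neighbouring if-then-else gates, so one must carefully order the applications to avoid an exponential blow-up of case splits. Once these are in place, the surrounding manipulations are routine rewrites using the SMC laws and the basic structural axioms \textax{A1}--\textax{A3} and \textax{B1}--\textax{B3}.
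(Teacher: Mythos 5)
Your top-level plan coincides with the paper's: induct on the number of Boolean guards of the NF circuit, take circuits with no Boolean inputs as the base case, and in the inductive step unfold the outermost if-then-else, apply the induction hypothesis to $c|_1$ and $c|_0$, and re-assemble (the paper phrases the re-assembly through \textax{E5} instantiated with $\RealComult$ together with co-associativity of $\BoolComult$; your \textax{E3}-based guard shuffling belongs to the same manipulation).

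The divergence, and the gap, is in the base case. You claim \textax{E5} ``is stated precisely for single generators'', but the paper states it as an axiom \emph{scheme} valid for an arbitrary circuit of type $\Realobj^m\to\Realobj^n$; the base case ($p=0$, so $c$ is in CNF) is therefore a single instance of \textax{E5} and needs no further argument. Your replacement --- a nested induction on the number of if-then-else gates in the CNF and then on Gaussian generators --- is not only unnecessary but, as described, incomplete. In your CNF inductive step the two copies of $c$ sitting in the two branches of the ambient if-then-else each contain their \emph{own} independent $\Flip{p_0}$ generator guarding their outermost convex combination. Axiom \textax{E3} only commutes nested if-then-else gates whose inner guards are one shared (copied) wire, so before you can re-associate and invoke the induction hypothesis you must first merge two independent coins into a single shared one; this is a non-trivial derivation in its own right (compare Proposition~\ref{prop:copy-flip-ite}, which combines \textax{E8}, \textax{E5} and \textax{E3} for exactly this kind of step), and you flag only $\NN$ as the probabilistic obstacle. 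Your treatment of $\NN$ also conflates copiability (\textax{C1}) with naturality with respect to if-then-else: what is needed is that two independent samples, one per branch, equal one sample taken after the selection, which is obtained from \textax{E4} followed by \textax{E8}; Lemma~\ref{lemma:g-copy} is the input-side analogue of this, not literally the required statement. All of this evaporates if \textax{E5} is read as the scheme it is.
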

	\begin{proof} By induction on the number of guards. The base case is given by axiom \textax{E5}. For the inductive step, we first apply \textax{E5}  with $c:= \RealComult$:
		\begin{center}
			
\InputIfFileExists{and-proof8.tikz}{}{\input{./tikz/and-proof8.tikz}}

		\end{center}
	We can then apply the IH, re-organise the $c|_i$s using the laws of SMCs, and use co-associativity of $\BoolComult$  to conclude:
		\begin{center}
			
\InputIfFileExists{and-proof9.tikz}{}{\input{./tikz/and-proof9.tikz}}

		\end{center}
	\end{proof}
	
	\noindent \begin{minipage}{\widthof{$g:=\Notgate\qquad\;$}} 
		\begin{mdframed}
			$g:=\Notgate$
		\end{mdframed}
	\end{minipage}
	Since $d$ is in NF, we can expand it once as in the first step below, and apply axiom \textax{E6} in the second step to recover a circuit in NF:
	
	\begin{center}
		
\InputIfFileExists{comp-bprm-to-rn-not1.tikz}{}{\input{./tikz/comp-bprm-to-rn-not1.tikz}}

	\end{center}
	
	\noindent \begin{minipage}{\widthof{$g:=\Flip{p}\qquad\;$}} 
		\begin{mdframed}
			$g:=\Flip{r}$
		\end{mdframed}
	\end{minipage}
	We reason by induction on the number $p$ of Boolean input wires of
	\begin{equation}
		\label{eq:flip-p-d-nf}
		
\InputIfFileExists{flip-p-d-nf.tikz}{}{\input{./tikz/flip-p-d-nf.tikz}}

	\end{equation}
	For $p=0$, the resulting circuit is in CNF, so we are done. Assume that~\eqref{eq:flip-p-d-nf} is equal to a circuit in NF for all $d$ with $p>1$ Boolean input wires; consider $d$ with $p+1$ Boolean inputs. 
	Once again, since $d$ is in NF, we can expand it twice as below:
	\begin{center}
		
\InputIfFileExists{comp-bprm-to-rn-flip1.tikz}{}{\input{./tikz/comp-bprm-to-rn-flip1.tikz}}

	\end{center}
	
	\noindent Applying \textax{E5} (with $c:= 
\InputIfFileExists{axiom7-natural-f.tikz}{}{\input{./tikz/axiom7-natural-f.tikz}}
$) to the last circuit we obtain
	\begin{center}
		
\InputIfFileExists{comp-bprm-to-rn-flip2.tikz}{}{\input{./tikz/comp-bprm-to-rn-flip2.tikz}}

	\end{center}
	With a little re-organisation (using the laws of SMCs), we see that the two sub-circuit in the dashed boxes below in the middle are in NF; since they have $p$ Boolean input wires, we can invoke the induction hypothesis to find $d'_1$ and $d'_0$ in NF such that
	\begin{center}
		
\InputIfFileExists{comp-bprm-to-rn-flip3.tikz}{}{\input{./tikz/comp-bprm-to-rn-flip3.tikz}}

	\end{center}
	Since $d'_1$ and $d'_0$ are in NF, so is the last circuit, as we wanted.
	
	\noindent \textit{(ii)}   We want to show that for an arbitrary circuit $d\colon\Boolobj^p\Realobj^m\to\Realobj^n$ in NF, composing with any of the generators of $\GaussCirc$ results in a circuit that is equal to one in NF, \emph{i.e.} that
	\begin{equation}\label{eq:idxgxid-d}
		
\InputIfFileExists{completeness-bprm-to-rn-shape2.tikz}{}{\input{./tikz/completeness-bprm-to-rn-shape2.tikz}}

	\end{equation}
	for any generator $g\colon k\to l$ can be shown equal to a circuit in NF. 
	
	We show this by induction $p$. For the base case ($p=0$), $d$ is a circuit in CNF and therefore we need to show that
	\begin{equation}\label{eq:gxid-d}
		
\InputIfFileExists{completeness-gxid-d.tikz}{}{\input{./tikz/completeness-gxid-d.tikz}}

	\end{equation}
	is equal to a circuit in CNF. To show this, we need another induction---this time on the number of if-then-else generators in $d$ (since it is in CNF). For the base case, $d$ contains no if-then-else and~\eqref{eq:gxid-d}
	is a Gaussian circuit and thus already in CNF. Now, assume that we can show that circuit~\eqref{eq:gxid-d} is equal to one in CNF for any $d$ in CNF with $n$ if-then-else generators. Let $d$ be some in CNF with $n+1$ if-then-else. Then, by Lemma~\ref{lemma:g-copy}, we have that for all generators $g\in\GaussCirc$, we have
	\begin{center}
		
\InputIfFileExists{completeness-rm-to-rn-01.tikz}{}{\input{./tikz/completeness-rm-to-rn-01.tikz}}

	\end{center}
	Furthermore, given  that each $d|_i$ is itself in CNF with $n$ if-then-else generators, we can apply the induction hypothesis and obtain circuits $d'|_i$ in CNF such that 
	\begin{center}
		
\InputIfFileExists{completeness-rm-to-rn-02.tikz}{}{\input{./tikz/completeness-rm-to-rn-02.tikz}}

	\end{center}
	where the last circuit is in CNF, as we wanted. 
	
	Now (for the inductive step of the induction on the number of Boolean input wires) assume that we can show that~\eqref{eq:idxgxid-d} is equal to a circuit in NF for any $d$ in NF with $p$ Boolean input wires. Let $d$ be some circuit in NF with $p+1$ Boolean input wires. Then, by Lemma~\ref{lemma:g-copy}, we have that for all generators $g\in\GaussCirc$, the following holds:
	\begin{center}
		
\InputIfFileExists{completeness-bprm-to-rn-gauss-01.tikz}{}{\input{./tikz/completeness-bprm-to-rn-gauss-01.tikz}}

	\end{center}
	Given that each $d|_i$ is itself in NF with $p$ Boolean input wires, we can apply the induction hypothesis and obtain circuits $d'|_i$ in NF such that 
	\begin{center}
		
\InputIfFileExists{completeness-bprm-to-rn-gauss-02.tikz}{}{\input{./tikz/completeness-bprm-to-rn-gauss-02.tikz}}

	\end{center}
	where the last circuit is now in NF, as we wanted. 
	
	\noindent \textit{(iii)} Once more, we reason by induction on the number of Boolean input wires of the circuit $d$. We start with the inductive case and will deal with the base case--which will require another induction--below. Assume that for every $d$ in NF with $p$ Boolean input wires, the circuit below is equal to one in NF:
	\begin{equation}
		\label{eq:mixif-d-nf}
		
\InputIfFileExists{mixif-d-nf.tikz}{}{\input{./tikz/mixif-d-nf.tikz}}

	\end{equation}
	Then, let $d$ be some circuit in NF with $p+1$ Boolean input wires. We can apply axiom \textax{E5} (with $f:=
\InputIfFileExists{thick-real-copy.tikz}{}{\input{./tikz/thick-real-copy.tikz}}
$) to copy the leftmost if-then-else as in the second step below:
	\begin{center}
		
\InputIfFileExists{comp-bprm-to-rn-mixif1.tikz}{}{\input{./tikz/comp-bprm-to-rn-mixif1.tikz}}

	\end{center}
	Now, since $d|_1$ and $d|_0$ are in NF with $p$ Boolean input wires, we can apply the induction hypothesis to obtain $d'|_1$ and $d'|_0$ in NF such that
	\begin{center}
		
\InputIfFileExists{comp-bprm-to-rn-mixif2.tikz}{}{\input{./tikz/comp-bprm-to-rn-mixif2.tikz}}

	\end{center}
	where the last circuit is therefore in NF, as we wanted.
	
	We now turn back to the base case: without any Boolean inputs, $d$ is necessarily in CNF and therefore we need to reason by induction again to normalise it---this time on the number of if-then-else generators it contains. Assume that for any $d$ in CNF with $p$ if-then-else generators, the circuit~\eqref{eq:mixif-d-nf} is equal to one in NF. Let $d$ be a circuit in CNF with $p+1$ if-then-else generators. Then, we can apply the same strategy as before. First, using axiom \textax{E5} (with $f:=
\InputIfFileExists{thick-real-copy.tikz}{}{\input{./tikz/thick-real-copy.tikz}}
$), we can copy the if-then-else as in the second step below:
	\begin{center}
		
\InputIfFileExists{comp-bprm-to-rn-mixif3.tikz}{}{\input{./tikz/comp-bprm-to-rn-mixif3.tikz}}

	\end{center}
	Then, since each $d|_i$ is in CNF with $p$ if-then-else generators, we can invoke the induction hypothesis to obtain $d'|_1$ and $d'|_0$ in NF such that
	\begin{center}
		
\InputIfFileExists{comp-bprm-to-rn-mixif3-nf.tikz}{}{\input{./tikz/comp-bprm-to-rn-mixif3-nf.tikz}}

	\end{center}
	For the base case (a CNF with \emph{no} if-then-else generators) $d$ is simply a circuit in $\GaussCirc$. In this case, we use (in order) co-unitality \textax{B2}, axiom \textax{E8}, and axiom \textax{E5} (naturality of if-then-else) to slide $d$ to the two branches of if-then-else and obtain a circuit in NF:
        \begin{gather*}
          
\InputIfFileExists{comp-bprm-rn-mixif01.tikz}{}{\input{./tikz/comp-bprm-rn-mixif01.tikz}}
\myeq{B2}
\InputIfFileExists{comp-bprm-rn-mixif02.tikz}{}{\input{./tikz/comp-bprm-rn-mixif02.tikz}}
\myeq{E8}
\InputIfFileExists{comp-bprm-rn-mixif03.tikz}{}{\input{./tikz/comp-bprm-rn-mixif03.tikz}}
\myeq{E5}
\InputIfFileExists{comp-bprm-rn-mixif04.tikz}{}{\input{./tikz/comp-bprm-rn-mixif04.tikz}}

        \end{gather*}
\end{proof}

\begin{proposition}
  \label{prop:copy-flip-ite}
  The following equality holds:
  \noindent
  
  \begin{center}
    
\InputIfFileExists{copied-flip-prop.tikz}{}{\input{./tikz/copied-flip-prop.tikz}}

  \end{center}
\end{proposition}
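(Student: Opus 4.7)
The plan is to prove this diagrammatic equality by showing that both sides implement the same stochastic kernel—a flip gate followed by a copier produces two perfectly correlated Boolean copies (the pair $(\mathsf{T},\mathsf{T})$ with probability $p$ and $(\mathsf{F},\mathsf{F})$ with probability $1-p$)—and then justifying each required rewrite step through the axioms of $\CGMTh$. The natural diagrammatic witness for this correlation is an if-then-else-like structure in which the flip serves as a shared guard whose value is then propagated deterministically to both output wires.

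First, I would unfold the left-hand side by applying the unitality axiom \textax{B2} to introduce a redundant copier-with-discarded-output, creating additional syntactic structure that can later be absorbed into an if-then-else generator. Next, I would use axiom \textax{E8} (the collapse of an if-then-else with identical branches) \emph{in reverse} to introduce a fresh if-then-else whose two branches carry copies of the appropriate constant Boolean values. At this stage, the flip gate can be slotted in as the guard of the newly introduced if-then-else, and the two branches become the deterministic constant pairs corresponding to $(\mathsf{T},\mathsf{T})$ and $(\mathsf{F},\mathsf{F})$, delivered via the Boolean counterparts of \textax{E5} (naturality) to reposition the constants.

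The key axioms I expect to use, besides \textax{B2} and \textax{E8}, are \textax{B1} (co-associativity of $\BoolComult$) to rearrange copies, \textax{B3} (co-commutativity) if a symmetry needs to be absorbed, and \textax{E5} to slide the flip and constant generators into the branches of the if-then-else. Because the equation is purely about the \emph{Boolean} fragment interacting with the if-then-else, I do not expect to need any of the Gaussian axioms \textax{A1}--\textax{A3}, \textax{C1}, \textax{D1}, or \textax{E10}; everything takes place within the discrete-plus-conditional machinery.

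The main obstacle will be getting the \emph{direction} of axiom application right: \textax{E8} naturally collapses a redundant if-then-else, but here we need to \emph{introduce} one, which means using it right-to-left in a way that has to be synchronised with the copier manipulations. A secondary subtlety is ensuring that any implicit wire crossings are handled cleanly by the laws of SMCs; since the copier is co-commutative (\textax{B3}), this is not a fundamental difficulty but does require care when matching the exact form of the if-then-else generator $\ifmix$, whose guard and branches occupy specific wire positions. Once the if-then-else is in place and its guard identified with the flip, a final appeal to the semantics (or to an already-established reformulation of the flip-in-guard pattern) completes the derivation.
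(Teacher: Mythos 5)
Your overall instinct is right about the shape of the argument: the paper's derivation does work by introducing if-then-else gates with \textax{E8} read right-to-left and then sliding circuits through them with naturality (\textax{E5}); the actual proof is a chain consisting solely of applications of \textax{E8}, \textax{E5}, and \textax{E3}. Still, there are two genuine problems with your proposal. First, you omit \textax{E3} (the distributivity of if-then-else over itself, with guards swapped), which the paper applies twice and which is the step that re-associates the nested conditionals so that the final \textax{E8} can collapse the redundant one; without some such re-association the introduced if-then-elses cannot be lined up and eliminated. Conversely, the Boolean copier axioms \textax{B1}--\textax{B3} and the unitality move \textax{B2} that you lean on play no role here: the whole derivation lives in the if-then-else fragment.

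Second, and more seriously, your closing move --- ``a final appeal to the semantics \ldots completes the derivation'' --- is not available. The proposition asserts a syntactic equality in $\CGMTh$; it is part of the machinery used to establish completeness, so one cannot discharge the last step by observing that both sides denote the same stochastic kernel. That observation only yields $\sem{c}=\sem{d}$, and inferring $c\CGMeq d$ from it is precisely the completeness theorem this lemma is meant to help prove --- the reasoning would be circular. Every link in the chain must be an instance of an axiom of $\CGMTh$ or of the laws of SMCs. Your opening framing (computing the kernel of both sides) is fine as motivation and as a sanity check via soundness, but it cannot substitute for any step of the equational derivation.
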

\begin{proof}
  \begin{gather*}
    
\InputIfFileExists{copied-flip-prop-01.tikz}{}{\input{./tikz/copied-flip-prop-01.tikz}}
\myeq{E8}
\InputIfFileExists{copied-flip-prop-02.tikz}{}{\input{./tikz/copied-flip-prop-02.tikz}}
\myeq{E5}
\InputIfFileExists{copied-flip-prop-03.tikz}{}{\input{./tikz/copied-flip-prop-03.tikz}}
\\
    \myeq{E8}
\InputIfFileExists{copied-flip-prop-04.tikz}{}{\input{./tikz/copied-flip-prop-04.tikz}}
\myeq{E3 x2}\quad
\InputIfFileExists{copied-flip-prop-05.tikz}{}{\input{./tikz/copied-flip-prop-05.tikz}}
\\
    \myeq{E5 x3}\quad
\InputIfFileExists{copied-flip-prop-06.tikz}{}{\input{./tikz/copied-flip-prop-06.tikz}}
\myeq{E5}
\InputIfFileExists{copied-flip-prop-07.tikz}{}{\input{./tikz/copied-flip-prop-07.tikz}}

    \\
    \myeq{E8}
\InputIfFileExists{copied-flip-prop-08.tikz}{}{\input{./tikz/copied-flip-prop-08.tikz}}

  \end{gather*}
\end{proof}
The following is just completeness for circuits with no Boolean outputs. 
\begin{theorem}\label{thm:completeness-no-bool-output}
	For any two circuits $c,d\colon\Boolobj^p\Realobj^m\to\Realobj^n$,$\sem{c}=\sem{d}$ implies $c\CGMeq d$. 
\end{theorem}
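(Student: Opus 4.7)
The plan is to mirror the pattern of the general Completeness theorem above, specialised to the output-real case. Since the statement concerns only circuits $c,d\colon\Boolobj^p\Realobj^m\to\Realobj^n$, we can work directly with the normal form of Definition~\ref{def:mixgauss-nf}, avoiding the disintegration step of Definition~\ref{def:normal-form}. All the serious work is packaged in results already established above, so the proof reduces to composing them.

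Concretely, given $c,d$ with $\sem{c}=\sem{d}$, I would first apply Theorem~\ref{thrm:completness-bprm-to-rn} to each side, obtaining circuits $c',d'$ in NF, of the same type as $c,d$, with $c\CGMeq c'$ and $d\CGMeq d'$. By soundness (Theorem~\ref{thm:soundness}) applied in both directions, $\sem{c'}=\sem{c}$ and $\sem{d'}=\sem{d}$, so the hypothesis $\sem{c}=\sem{d}$ yields $\sem{c'}=\sem{d'}$. At this point both sides are normal forms with the same semantics, and Proposition~\ref{prop:nf-uniqueness} (the uniqueness of the normal form for the type $\Boolobj^p\Realobj^m\to\Realobj^n$) gives $c'\CGMeq d'$. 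Transitivity of $\CGMeq$ then closes the argument: $c\CGMeq c'\CGMeq d'\CGMeq d$.

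There is no essential obstacle at this stage, since the heavy lifting has already been carried out: the structural induction in Theorem~\ref{thrm:completness-bprm-to-rn} handles every case of precomposing a normal-form circuit with each generator (via the copying lemma for Gaussian generators, naturality of if-then-else, and the discardability lemma), while Proposition~\ref{prop:nf-uniqueness} reduces uniqueness to the unique parameterisation of CG-mixtures (Proposition~\ref{prop:CGM-unique-params}) and the CNF uniqueness argument. The only thing to double-check is that the statement of Theorem~\ref{thrm:completness-bprm-to-rn} produces a normal form of \emph{exactly} the declared type, so that Proposition~\ref{prop:nf-uniqueness} applies without any further type-matching step; inspection of the normal-form definition confirms this.

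This theorem should be read as the ``no Boolean outputs'' restriction of the full Completeness theorem, and indeed the full theorem will be obtained by additionally invoking the disintegration-style normal form of Definition~\ref{def:normal-form} to strip the Boolean outputs --- a step that is unnecessary here.
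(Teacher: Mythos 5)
Your proposal is correct and follows exactly the paper's own proof: normalise both circuits via Theorem~\ref{thrm:completness-bprm-to-rn}, use soundness to transfer the semantic equality to the normal forms, apply the uniqueness of normal forms (Proposition~\ref{prop:nf-uniqueness}), and conclude by transitivity. No differences worth noting.
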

\begin{proof}
	First, using Theorem~\ref{thrm:completness-bprm-to-rn}, we can find $c'$ and $d'$ in NF such that $c\CGMeq c'$ and $d\CGMeq d'$. Moreover, by soundness of $\CGMTh$, $\sem{c'} = \sem{d'}$ and therefore $c'\CGMeq d'$ by uniqueness of normal forms (Proposition~\ref{prop:nf-uniqueness}). By transitivity, we conclude $c\CGMeq d$.
\end{proof}

\nfunique*
\begin{proof}
	First, if $\sem{c}=\sem{c'}$ then, by assumption, we have
	\begin{equation}\label{eq:nf-equal}
		$ 
\InputIfFileExists{general-nf.tikz}{}{\input{./tikz/general-nf.tikz}}
 =  
\InputIfFileExists{general-nf-prime.tikz}{}{\input{./tikz/general-nf-prime.tikz}}
$
	\end{equation}
	We want to show that $b=b'$ and $d=d'$. For the former, by composing with $\RealCounit$, we obtain
	\begin{center}
		$\sem{
\InputIfFileExists{general-nf.tikz}{}{\input{./tikz/general-nf.tikz}}
} ; \sem{
\InputIfFileExists{bidxrdel.tikz}{}{\input{./tikz/bidxrdel.tikz}}
} = \sem{
\InputIfFileExists{general-nf-prime.tikz}{}{\input{./tikz/general-nf-prime.tikz}}
} ; \sem{
\InputIfFileExists{bidxrdel.tikz}{}{\input{./tikz/bidxrdel.tikz}}
}$	
	\end{center}
	By functoriality of $\sem{\cdot}$, we have
	\begin{center}
		$\sem{
\InputIfFileExists{nf-rdel.tikz}{}{\input{./tikz/nf-rdel.tikz}}
} = \sem{
\InputIfFileExists{nf-prime-rdel.tikz}{}{\input{./tikz/nf-prime-rdel.tikz}}
}$	
	\end{center}
	and (by Lemma~\ref{lemma:discard-is-natural} or by a similar semantic argument) we can discard $d$, giving us
	\begin{center}
		$\sem{
\InputIfFileExists{nf-rdel-1.tikz}{}{\input{./tikz/nf-rdel-1.tikz}}
} = \sem{
\InputIfFileExists{nf-prime-rdel-1.tikz}{}{\input{./tikz/nf-prime-rdel-1.tikz}}
}$	
	\end{center}
	Thus,
	\begin{center}
		$\sem{
\InputIfFileExists{nf-rdel-2.tikz}{}{\input{./tikz/nf-rdel-2.tikz}}
} = \sem{
\InputIfFileExists{nf-prime-rdel-2.tikz}{}{\input{./tikz/nf-prime-rdel-2.tikz}}
}$	
	\end{center}
	which implies that $\sem{\thickboolcircuit{b}{p}{q}}=\sem{\thickboolcircuit{b'}{p}{q}}$ and, since these two circuits are Boolean, $b=b'$. 
	
	\noindent We now want to show that $d=d'$.
	Applying functoriality, equation~\eqref{eq:nf-equal} can be re-stated as follows: for any $a\in\Bool^p, a'\in\Bool^q$, and $x\in\R^m$
	$$\sem{b}(a'|a)\sem{d}(\cdot |a',a,x) = \sem{b'}(a'|a)\sem{d'}(\cdot |a'a,x)$$
	Moreover, recall that we have just shown that $b=b'$, so 
	\begin{equation}\label{eq:nf-equal-symbolic}
		\sem{b}(a'|a)\sem{d}(\cdot |a',a,x) = \sem{b}(a'|a)\sem{d'}(\cdot |a'a,x)
	\end{equation}
	There are two cases to consider:
	\begin{itemize}
		\item if $\sem{b}(a'|a)\neq 0$, we can simply divide both sides of~\eqref{eq:nf-equal-symbolic} by this number to derive $\sem{d}(\cdot |a',a,x) = \sem{d'}(\cdot |a'a,x)$;
		\item if $\sem{b}(a'|a)= 0$, the two sides of~\eqref{eq:nf-equal-symbolic} are also equal to zero; by convention (Definition~\ref{def:normal-form}), in this case, $\sem{d}(\cdot, a'| a,x)= \Normal{n}{0}{0}= \sem{d'}(\cdot, a'| a,x)$.
	\end{itemize}
	Thus, $\sem{d} =\sem{d'}$ and, by Theorem~\ref{thm:completeness-no-bool-output}, $d=d'$, as needed.
\end{proof}

\normalisationgeneral*
\begin{proof}
	We prove this by structural induction on circuits of type $\CGtype{p}{m}{q}{n}$. This is, we assume that for $\CGterm{c}{p}{m}{q}{n}$ an arbitrary circuit in NF in $\MixCirc$, composing with any generator $g\colon k\to l$ outputs in the monoidal signature $\Sigma_{\MixCirc}$, results in a circuit that is, again, in normal form.
	Similar to the proof of Theorem~\ref{thrm:completness-bprm-to-rn}, we first note that the possible composites fall into three categories:
	\begin{gather*}
		(i)\newline
		\noindent \:\: 
\InputIfFileExists{bprm-bqrn-shape01.tikz}{}{\input{./tikz/bprm-bqrn-shape01.tikz}}
 \qquad (ii) \:
\InputIfFileExists{bprm-bqrn-shape02.tikz}{}{\input{./tikz/bprm-bqrn-shape02.tikz}}
 \qquad (iii) \:
\InputIfFileExists{bprm-bqrn-shape03.tikz}{}{\input{./tikz/bprm-bqrn-shape03.tikz}}
 
	\end{gather*}
	For \textit{(i)}, we first tackle copiable generators in the Boolean fragment, \emph{i.e.}, we compose with some generator $g$ satisfying axiom \textax{C2}. We have,
	\begin{align*}
		
\InputIfFileExists{comp-bprm-bqrn-bool01.tikz}{}{\input{./tikz/comp-bprm-bqrn-bool01.tikz}}
 &\myeq{IH} 
\InputIfFileExists{comp-bprm-bqrn-bool02.tikz}{}{\input{./tikz/comp-bprm-bqrn-bool02.tikz}}
\\
		&\myeq{C2}
\InputIfFileExists{comp-bprm-bqrn-bool03.tikz}{}{\input{./tikz/comp-bprm-bqrn-bool03.tikz}}
\\
		&\myeq{Th.~\ref{thrm:completness-bprm-to-rn}}\;
\InputIfFileExists{comp-bprm-bqrn-bool04.tikz}{}{\input{./tikz/comp-bprm-bqrn-bool04.tikz}}

	\end{align*}
	where $	b' := \left(g\otimes \id_{\Boolobj^{p-l}}\right)\poi b$ and $d'\CGMeq \left(\id_{\Boolobj^q}\otimes \Andgate \otimes\id_{\Boolobj^{p-1}}\otimes \id_{\Realobj^m}\right)\poi d$ is the circuit in NF obtained by applying Theorem~\ref{thrm:completness-bprm-to-rn}. We have thus obtained a circuit in the shape given in Definition~\ref{def:normal-form}, but we still have to show that the side condition holds. Since $\sem{\cdot}$ is a monoidal functor, $\sem{b'}=\left(\sem{g}\otimes \id_{\Bool^{p-l}}\right)\poi\sem{b}$ 
	Assume for example that $g$ is $\Andgate$. Then, 
	$\sem{b'}(a'|a) = \sem{b}(a'|(a_0\land a_1)a_{[2:p]})=0$
	for $a'\in\Bool^q, a\in\Bool^p$, (and we use the notation $a_{[i:j]}\in\Bool^{j-i}$ to indicate the sub-array of $a$ between indices $i$ and $j$). Hence, $\sem{b'}(a'|a)=0$ implies $\sem{b}(a'|(a_0\land a_1)a_{[2:p]})=0$. Then, by the induction hypothesis, $\sem{d}(\cdot|a',(a_0\land a_1)a_{[2:p]},x)=\Normal{n}{0}{0}$. We want to show that $\sem{d'}(\cdot|a',a_{[0:2]}a_{[2:p]},x)=\Normal{n}{0}{0}$. Recall that
	\[
	d'\CGMeq \left(\id_{\Boolobj^q}\otimes \Andgate \otimes\id_{\Boolobj^{p-1}}\otimes \id_{\Realobj^m}\right)\poi d
	\]
	and thus, by monoidal functoriality again
	\begin{align*}
		\sem{d'}(\cdot|a',a_{[0:2]}a_{[2:p]},x) &= \big(\left(\id_{\Bool^q} \otimes\sem{\Andgate}\otimes\id_{\Bool^{p-1}}\otimes\id_{\Real^m} \right) \poi \sem{d} \big)(\cdot|a',a_{[0:2]}a_{[2:p]},x)\\
		& = \sem{d}(\cdot | a',(a_0\land a_1)a_{[2:p]},x) \\
		& = \Normal{n}{0}{0}
	\end{align*}
	where the last step is the induction hypothesis. 
	Thus $b'$ and $d'$ satisfy the normal form condition. A similar argument holds for all the other copiable generators.
	
	Let us now show how to normalise a diagram that is composed with $\Flip{p}$, the only non-copiable Boolean generator. We assume wlog that $p\in (0,1)$; otherwise $\Flip{p}$ is copiable and we can deal with it as in the previous cases. First, by the induction hypothesis, we have
	\begin{align*}
		
\InputIfFileExists{comp-bprm-bqrn-flip01.tikz}{}{\input{./tikz/comp-bprm-bqrn-flip01.tikz}}
 &\myeq{IH} 
\InputIfFileExists{comp-bprm-bqrn-flip02.tikz}{}{\input{./tikz/comp-bprm-bqrn-flip02.tikz}}

	\end{align*}
	The distribution $\sem{
\InputIfFileExists{flip-p-copy-bxid.tikz}{}{\input{./tikz/flip-p-copy-bxid.tikz}}
}$ can be disintegrated into the product of a marginal and a conditional; by Boolean completeness~\cite{probcirc}, we can mirror this decomposition diagrammatically---that is, we can find Boolean circuits $b'\colon\Boolobj^p\to\Boolobj^q$ and $b''\colon\Boolobj^p\Boolobj^q\to\Boolobj$ such that
	\[
\InputIfFileExists{flip-p-copy-bxid.tikz}{}{\input{./tikz/flip-p-copy-bxid.tikz}}
 \;\CGMeq\; 
\InputIfFileExists{flip-p-b-disintegrate.tikz}{}{\input{./tikz/flip-p-b-disintegrate.tikz}}
 \]
	with  $\sem{b'}=  \sem{\left(\Flip{p}\otimes\id_{\Bool^{p-1}}\right)\poi b}$.
	Thus,
	\begin{align*}
		 
\InputIfFileExists{comp-bprm-bqrn-flip02.tikz}{}{\input{./tikz/comp-bprm-bqrn-flip02.tikz}}

		= 
\InputIfFileExists{comp-bprm-bqrn-flip03.tikz}{}{\input{./tikz/comp-bprm-bqrn-flip03.tikz}}
 \myeq{B1} 
\InputIfFileExists{comp-bprm-bqrn-flip03-bis.tikz}{}{\input{./tikz/comp-bprm-bqrn-flip03-bis.tikz}}

	\end{align*}
	By Theorem~\ref{thrm:completness-bprm-to-rn}, the diagram in the dashed box above is normalisable, \emph{i.e.} equal to some diagram $d'$ in normal form and hence, the circuit $c$ is equal to one of the shape  required by Definition~\ref{def:normal-form}:
	\begin{center}
		$
\InputIfFileExists{comp-bprm-bqrn-flip01.tikz}{}{\input{./tikz/comp-bprm-bqrn-flip01.tikz}}
 \CGMeq 
\InputIfFileExists{comp-bprm-bqrn-flip04.tikz}{}{\input{./tikz/comp-bprm-bqrn-flip04.tikz}}
$
	\end{center}
	It remains to prove that the side condition of the normal form holds. We have that,
	\[
	\sem{b'}=\sem{\left(\Flip{p}\otimes\id_{\Bool^{p-1}}\right)\poi b}
	\]
	and, by monoidal functoriality,
	\[
	\sem{b'}=\left(\sem{\Flip{p}}\otimes\id_{\Boolobj^{p-1}}\right)\poi \sem{b}.
	\]
	From where
	\[
	\sem{b'}(a'|a)=p\cdot\sem{b}(a'|\mathsf{T}a)+(1-p)\sem{b}(a'|\mathsf{F}a)
	\]
	for $a\in\Bool^p,a'\in\Bool^q$ and $\mathsf{F},\mathsf{T}\in\Bool$. If $\sem{b'}(a'|a)=0$, since $p\in(0,1)$, it follows that $\sem{b}(a'|\mathsf{T}a)=0$ and $\sem{b}(a'|\mathsf{F}a)=0$. By the induction hypothesis, we then have that $\sem{d}(\cdot|a,\mathsf{T}a,x)=\Normal{n}{0}{0}=\sem{d}(\cdot|a,\mathsf{F}a,x)$ for all $x\in\Real^m$. Once again, by monoidal functoriality, we have
	\begin{align*}
	\sem{d'}(\cdot|a',a,x) &= \sem{
\InputIfFileExists{equal-d-prime.tikz}{}{\input{./tikz/equal-d-prime.tikz}}
}(\cdot|a',a,x) \\
	&= \sem{d}(\cdot|a',\mathsf{T}a,x)\sem{b''}(\mathsf{T}|a',a)+\sem{d}(\cdot|a',\mathsf{F}a,x)\sem{b''}(\mathsf{F}|a',a)
	\\
	&=\Normal{n}{0}{0}\sem{b''}(\mathsf{T}|a',a)+\Normal{n}{0}{0}\sem{b''}(\mathsf{F}|a',a) = \Normal{n}{0}{0}
	\end{align*}
	Thus complying with the side condition of the normal form.
	
\noindent We now consider case \textit{(ii)}, \emph{i.e.}, $g\in\Sigma_{\GaussCirc}$. We have, by the induction hypothesis,
	\begin{align*}
		
\InputIfFileExists{comp-bprm-bqrn-real01.tikz}{}{\input{./tikz/comp-bprm-bqrn-real01.tikz}}
 &\myeq{IH}
\InputIfFileExists{comp-bprm-bqrn-real02.tikz}{}{\input{./tikz/comp-bprm-bqrn-real02.tikz}}
\\
		&\myeq{Th.~\ref{thrm:completness-bprm-to-rn}}\;
\InputIfFileExists{comp-bprm-bqrn-real03.tikz}{}{\input{./tikz/comp-bprm-bqrn-real03.tikz}}

	\end{align*}
	where $d'$ is in normal form. 
	It remains to show that the side condition holds. If $\sem{b}(a'|a)=0$ for some $a\in\Bool^p,a'\in\Bool^q$, by the induction hypothesis, we have $\sem{d}(\cdot|a',a,x)=\Normal{n}{0}{0}$ for any $x\in\Real^m$. Moreover,
	$d' \CGMeq \big(\id_{\Boolobj^{q+p}}\otimes g\otimes\id_{\Realobj^{m-l}}\big)\poi d$ and, since $\sem{\cdot}$ is a monoidal functor, $\sem{d'} = \big(\id_{\Boolobj^{q+p}}\otimes \sem{g}\otimes\id_{\Realobj^{m-l}}\big)\poi \sem{d}$. Thus, 
	\begin{align*}
		\sem{d'}(V|a',a,x) &= \big(\left(\id_{\Bool^{q+p}}\otimes\sem{g}\otimes\id_{\Real^{m-l}}\right)\poi\sem{d}\big)(\cdot|a',a,x)\\
		&=\int_{y\in\Real^l}\Normal{n}{0}{0}(V)\sem{g}(dy|a',a,x)\\
		&=\int_{y\in\Real^l}\delta(V|0)\sem{g}(dy|a',a,x)\\
		&=\begin{cases}\int\sem{g}(dy|a',a,x)\text{ if }0\in V\\ 0 \text{ otherwise }\end{cases}\\
		& =\begin{cases}1 \text{ if }0\in V \text{ (since }\sem{g}\text{ is a probability distribution)}\\ 0 \text{ otherwise }\end{cases}\\
		&=\delta(V|0)\\
		&=\Normal{n}{0}{0}.
	\end{align*}
	Finally, for case \textit{(iii)}, we have the following:
	\begin{align*}
		
\InputIfFileExists{comp-bprm-bqrn-if01.tikz}{}{\input{./tikz/comp-bprm-bqrn-if01.tikz}}
 &\myeq{IH} 
\InputIfFileExists{comp-bprm-bqrn-if02.tikz}{}{\input{./tikz/comp-bprm-bqrn-if02.tikz}}
= 
\InputIfFileExists{comp-bprm-bqrn-if02-bis.tikz}{}{\input{./tikz/comp-bprm-bqrn-if02-bis.tikz}}
\\
		&\myeq{Th.~\ref{thrm:completness-bprm-to-rn}}\;
\InputIfFileExists{comp-bprm-bqrn-if03.tikz}{}{\input{./tikz/comp-bprm-bqrn-if03.tikz}}

		\myeq{B2} 
\InputIfFileExists{comp-bprm-bqrn-if04.tikz}{}{\input{./tikz/comp-bprm-bqrn-if04.tikz}}
\\
		&= 
\InputIfFileExists{comp-bprm-bqrn-if05.tikz}{}{\input{./tikz/comp-bprm-bqrn-if05.tikz}}

	\end{align*}
	where $d'$ is obtained by normalising the circuit in the first dashed box above, and $b'$ is the second dashed box. 
	Finally, by a similar semantic reasoning as above, we can show that $\sem{d'}(\cdot|a',a,x)=\Normal{n}{0}{0}$ whenever $\sem{b'}(a'|a)=0$.
\end{proof}

\end{document}